\renewcommand\hat{\widehat}
\def\a{\alpha}
\def\b{\beta}
\def\e{\epsilon}
\def\E{\mathbb{E}}
\def\g{\gamma}
\def\i{\infty}
\def\l{\lambda}
\def\P{\mathbb{P}}
\def\1{\bf{1}}
\def\R{\mathbb{R}}
\def\t{\tau}
\def\Z{\mathbb{Z}}
\def\1{{\bf 1}}
\newtheorem{theorem}{Theorem}[section]
\newtheorem{lemma}[theorem]{Lemma}
\newtheorem{proposition}[theorem]{Proposition}
\newtheorem{definition}[theorem]{Definition}
\theoremstyle{definition}
\newtheorem{remark}[theorem]{Remark}
\title{The KPZ fixed point for discrete time TASEPs}
\author{Yuta Arai \thanks{Graduate School of Science and Engineering, Chiba University, Chiba-shi 263-8522, Japan. Email: yutaarai@chiba-u.jp}}
\date{}
\begin{document}

\maketitle

\begin{abstract}
We consider two versions of discrete time totally asymmetric simple 
exclusion processes (TASEPs) with geometric and Bernoulli hopping 
probabilities.  For the process mixed with these and continuous 
time dynamics, we obtain a single Fredholm determinant representation for the joint distribution function of particle positions with arbitrary initial data. 
This formula is a generalization of the recent result by Mateski, Quastel and Remenik and allows us to take the KPZ scaling limit.
For both the discrete time geometric and Bernoulli TASEPs, we show that 
the distribution functions converge to the one describing the KPZ fixed point.
\end{abstract}

%%%%%%%%%%%%%%%%%%%%%%%%%%%%%%%%%%%%%%%%%%%%%%%%%%%%%%%%%%
%%%  SEC1   %%%%%%%%%%%%%%%%%%%%%%%%%%%%%%%%%%%%%%%%%%%%%%%%
%%%%%%%%%%%%%%%%%%%%%%%%%%%%%%%%%%%%%%%%%%%%%%%%%%%%%%%%%%
\section{Introduction}
The totally asymmetric simple exclusion process (TASEP) is a prototypical interacting stochastic particle system and can be interpreted as a stochastic growth model of an interface, which turns out to belong to the Kardar-Parisi-Zhang (KPZ) universality class introduced in \cite{Kardar}. In addition, the TASEP is one
of the most basic models in the integrable probability~\cite{BoGo}. Remarkable algebraic structures allow us to obtain exact explicit forms of distribution functions
for some quantities.

On a macroscopic level, the particle density evolves deterministically according to the Burgers equation 
\cite{Rezakhanlou, Rost}. 
Therefore, a natural question arises: what kind of characteristic does the fluctuation around the deterministic growth have?  It has been know that it exhibits 
universal properties characterizing the KPZ class.
There are many important results in the literature of the integrable probability.
First, for the step initial condition, the one-point limiting distribution 
for the particle current in the TASEP has been obtained by Johansson~\cite{KJohansson} 
by converting the problem to the last passage percolation 
and then using the RSK correspondence.  It turned out that the limiting distribution 
is the GUE Tracy-Widom distribution. In~\cite{NaSaTASEP, Rakos}, this result has also been obtained by using an explicit determinantal form of the transition probability in the TASEP~\cite{Schutz}.
For the last passage problems with symmetries, similar results have been found by Baik-Rains~\cite{BRduke}. The results include the one-point limiting distribution of the particle current
for the alternating initial condition in the language of the TASEP
or equivalently, the height distribution for the flat initial condition in the language
of the growth process called the polynuclear growth (PNG) 
model~\cite{Prahofer}. In this case, the limiting distribution
turned out to be the GOE Tracy-Widom distribution.

These results on the one-point fluctuations have been generalized to the
case of the multi-point fluctuations. For the case corresponding to the 
step initial condition, a Fredholm determinant formula for 
the limiting multi-point distributions has been first 
obtained in the PNG model with space-time continuous 
setting~\cite{PrSpAiry2} by using the technique 
related to the RSK correspondence. The same result has been obtained for 
the space-time discretized PNG model~\cite{Johansson}. 
The limiting process characterized
by the multi-point distribution is called the Airy$_2$ process.
On the other hand, for the other initial conditions, the first important result
has been given in~\cite{TSasamoto}. Sasamoto  has developed 
the technique for obtaining the mult-point function in terms of 
the transition probability in the TASEP~\cite{Schutz} not only for the step initial
condition but also for the alternating one and has obtained 
a Fredholm determinant formula for the limiting functions in the alternating case. 
The process characterized by the multi-point distribution is now called
the Airy$_1$ process.
This approach in~\cite{TSasamoto} has been further studied 
and been applied to the TASEP and the PNG model with different 
settings~\cite{BoFe,Borodin,Sasamoto,Ferrari}.

We have been interested in the entire structure of the 
universal limiting process for more general initial data.
Our understanding of this problem has been advanced 
by the recent result by Matetski, Quastel, and Remenik~\cite{Matetski}.
Their result is based on the approach in~\cite{Sasamoto, TSasamoto}:
A Fredholm determinant formula for the distribution functions with
an arbitrary initial data has already been obtained in~\cite{Sasamoto} based on the approach developed in~\cite{TSasamoto}. The correlation kernel for the Fredholm 
determinant can be expressed in terms of the biorthogonal functions, say
$\Phi_k(x)$ and $\Psi_k(x)$.
The problem is that one of them, say $\Phi_k(x)$ does not have an explicit representation while $\Psi_k(x)$ does. Thus it had not been clear how to take the KPZ scaling limit of this kernel. \cite{Matetski} has overcome this situation. They represent the function in terms of a stopping time of the random walk with 
jumps obeying the geometric distributions.
This expression allows us to take the KPZ scaling limit since 
by Donsker\rq{}s invariance principle,
we easily find the stopping time converges to the one for the Brownian motion in the limit. Based on this technique, the limiting multi-point 
distribution functions for the particle positions in the arbitrary initial condition
has been obtained. The process with this multi-point distribution is called 
the KPZ fixed point. Recently various interesting progresses on this problem have 
been made for example in~\cite{Mihai, Leandro, DRemenik}.

In this paper, we show that the technique in~\cite{Matetski} can be applicable
to different versions of the TASEPs  besides the usual continuous time one.
In particular, we focus on two versions of the discrete time TASEPs:
the case where the random jump at each time step follows the (truncated)
geometric distribution and the parallel update is applied and also the case
where the random jump follows the Bernoulli distribution and
the (backward) sequential update is applied. Furthermore, in both cases, we consider
the situation where the hopping probabilities are time-dependent.
For the step initial condition, these dynamics have appear as a special case
of the higher spin vertex model and have been recently 
studied in~\cite{Alisa}. To the best of our knowledge, however,  
the analyses for the arbitrary initial condition has not been 
studied yet. We show Sch$\ddot{\rm u}$tz's type determinantal formulas 
for transition probabilities for both the geometric and Bernoulli TASEP
with time dependent hopping probabilities. Combining these with
Sch$\ddot{\rm u}$tz's formula for the continuous time TASEP, 
we get the determinantal transition probability for the system mixed
with the three types of dynamics. Using this, we obtain a Fredholm 
determinant formula for the multi-point distribution for the particle 
positions, in which we can take the KPZ scaling limit. This is a generalized
formula to the one~\cite{Matetski}: When we vanish the whole parameters
of the mixed dynamics except the part of the continuous time TASEP,
the determinantal formula is reduced to the result in~\cite{Matetski}.
Finally taking the KPZ scaling limit for both discrete time geometric 
and Bernoulli TASEP, we see that the multi-point distribution functions
converge to the one describing the KPZ fixed point. 

The paper is organized as follows. In~Sec.~\ref{s2}, we state the three versions
of the TASEPs, continuous time and two types of discrete time versions:
geometric and Bernoulli hopping. Their mixed version is also stated.
We also give our main result: the Fredholm determinant formula for the mixed TASEP (Theorem~\ref{Main}) and the KPZ scaling limit in two cases of
the geometric and Bernoulli TASEPs (Theorem~\ref{special}, and Propositions~\ref{scaling} and~\ref{gscaling}). In Sec.~\ref{s3}, after 
giving the determinantal formulas for the transition probabilities for the above
three types of TASEPs, we give the proof of Theorem~\ref{Main} using the framework developed in~\cite{Matetski}. In Sec.~\ref{asy}, we give  proofs
of Theorem~\ref{special}, and Propositions~\ref{scaling} and~\ref{gscaling}.
The crucial step is the saddle point analysis for the kernels.

%%%%%%%%%%%%%%%%%%%%%%%%%%%%%%%%%%%%%%%%%%%%%%%%%%%%%%%%%%
%%%  SEC2   %%%%%%%%%%%%%%%%%%%%%%%%%%%%%%%%%%%%%%%%%%%%%%%%
%%%%%%%%%%%%%%%%%%%%%%%%%%%%%%%%%%%%%%%%%%%%%%%%%%%%%%%%%%
\section{Models and results}
\label{s2}
In this section we define three versions of the TASEP and introduce 
our main results.

\subsection{Models}
In this paper we consider the TASEPs on $\Z$. Each particle jumps only to the right independently and stochastically 
if the target site is empty. If the site is occupied by the other particle, it cannot move, 
which represents the exclusion interaction.

In the TASEPs we mainly focus on the position of each particle.  Let $X_t(i)\in\Z$ be a position of the $i$th particle at time $t$. We set
$t\in\mathbb{Z}$ or $t\in\mathbb{R}$ according to the version. Since the dynamics of the TASEPs preserves 
the order of the particles, we can always assume
\\
\centerline{$\displaystyle\cdots<X_{t}(2)<X_{t}(1)<X_{t}(0)<X_{t}(-1)<X_{t}(-2)<\cdots$.}
\\
The particles at $\pm\infty$ are playing no role in the dynamics when adding $\pm\infty$ into the state space.

In this paper, we deal with the following three versions of the TASEP.
As written in Lemmas~\ref{MR1},~\ref{MR2}, and~\ref{MR3} in Sec.~\ref{s31}, they have a common feature that the
transition probability for each model is written as a single determinant form.
\subsubsection{Continuous time TASEP}
\label{cTASEP}

The continuous time TASEP on $\mathbb{Z}$ was introduced in~\cite{spitzer} in the literature of mathematics. In 
this case $t\in\R_{\ge 0}$ and each particle independently attempts to jump to the right neighboring site at 
rate $\gamma\in\R_{\ge 0}$ provided this site is empty. It is a continuous time Markov process with the generator $L$ defined  as follows: 
Let $\eta=\{\eta(x): x\in\mathbb{Z}\}\in\{0,1\}^{\mathbb{Z}}$ be a particle configuration.
For $x\in\mathbb{Z}$, $\eta(x)=1$ means the site $x$ is occupied by a particle while $\eta(x)=0$ means it is empty. 
The generator $L$ acting on cylinder functions $f:\{0,1\}^{\mathbb{Z}}\rightarrow\mathbb{R}$ is defined by
$$\displaystyle(Lf)(\eta)=\gamma\sum_{x\in\mathbb{Z}}\eta(x)(1-\eta(x+1))(f(\eta^{x,x+1})-f(\eta))$$
where
\begin{equation*}
\eta(x)=
\begin{cases}
1,&\text{if the site is occupied by a particle,}\\
0,&\text{if the site $x$ is empty,}
\end{cases}
\end{equation*}
and
$\eta^{x,x+1}$ denotes the configuration $\eta$ with the occupations at site $x$ and $x+1$ have been interchanged, that is,
\begin{equation*}
\eta^{x,x+1}(y)=
\begin{cases}
\eta(x+1)& \text{for~} y=x,\\
\eta(x)& \text{for~} y=x+1,\\
\eta(y)& \text{otherwise}.
\end{cases}
\end{equation*}
\subsubsection{Discrete time Bernoulli TASEP with sequential update}
\label{bTASEP}
We define the discrete time Bernoulli TASEP with sequential update on $\mathbb{Z}$.
This version was studied previously in \cite{aniso} as a marginal of dynamics on Gelfand-Tsetlin patterns which preserve the class of Schur processes
and more recently in~\cite{qtasep,Alisa} in the studies of the integrable probability. 

Let us assume the particle configurations at time $t\in\Z_{\ge0}$ as $X_t(j)=a_j,j\in\Z$.
The particle positions at time $t+1$ are determined by the following update rule:
We update the position of the $i$th particle $X_{t+1}(i)$ in increasing order.
Suppose that we already updated the $i-1$th particle and its position is $b_{i-1}$ 
i.e. $X_{t+1}(i-1)=b_{i-1}$.
Then the update rule is given as follows: 
\begin{itemize}
\item When $X_{t+1}(i-1)-X_t(i)=b_{i-1}-a_i>1$,
\begin{equation*}
\mathbb{P}(X_{t+1}(i)=a|X_{t}(i)=a_i,X_{t+1}(i-1)=b_{i-1})=
\begin{cases}
1-p_{t+1}&\text{for $a=a_i$,}\\
p_{t+1}&\text{for $a=a_i+1$,}\\
0&\text{otherwise.}
\end{cases}
\end{equation*}
\item When $X_{t+1}(i-1)-X_t(i)=b_{i-1}-a_i=1$,
\begin{equation*}
\mathbb{P}(X_{t+1}(i)=a|X_{t}(i)=a_i,X_{t+1}(i-1)=b_{i-1})=
\begin{cases}
1&\text{for $a=a_i$,}\\
0&\text{otherwise.}
\end{cases}
\end{equation*}
\end{itemize}
This dynamics mean that starting from right to left, for the time step $t\rightarrow t+1$,
the $i$th particle jumps to the right neighboring site with probability $p_{t+1}\in(0,1)$ provided this site is empty.
Since the update is sequential from right to left, during a time step, a block of consecutive particles can jump.
For later use, we define $\beta_t,~t=0,1,2,\dots$ by
\begin{align}
\label{pbeta}
p_t=\frac{\beta_t}{1+\beta_t},~\left(\beta_t=\frac{p_t}{1-p_t}\right).
\end{align}
\begin{remark}
In the case of discrete time Bernoulli TASEP with {\it parallel} update,
some integrable structures have also been studied for example in~\cite{Ferrari,IS2007, PP2007}. 
In~\cite{PP2007}, an explicit form of the transition probability was obtained by using the Bethe ansatz.
However, it is written as a ratio of two determinants not a single determinant.
To study the KPZ fixed point in this case is an interesting future problem.
\end{remark}
\subsubsection{Discrete time geometric TASEP with parallel update}
\label{aTASEP}
We define the discrete time geometric TASEP 
with parallel update on $\mathbb{Z}$.
This was studied previously in \cite{Jon} as a marginal of dynamics on Gelfand-Tsetlin patterns which preserve the class of Schur processes.
More recently it has been also investigated in~\cite{qtasep,Alisa}.

Let us assume that for $t\in\Z_{\ge 0}$ and $j\in\Z$, $X_t(j)=a_j$.
The update rule of the positions at time
$t+1$ are given as follows:
For each $1\leq i \leq N$,
\begin{multline}
\label{geo}
\mathbb{P}(X_{t+1}(i)=a_i+a|X_{t}(i)=a_i,~X_{t}(i-1)=a_{i-1})
\\
=
\begin{cases}
\alpha_{t+1}^a (1-\alpha_{t+1}) &\text{for $a=0,1,\dots, a_{i-1}-a_i-2$,}
\\
\alpha_{t+1}^a & \text{for $a=a_{i-1}-a_i-1$,}
\\
0 & \text{otherwise,}
\end{cases}
\end{multline}
where the update is independent for each $i$ and $t$.

Note that in this dynamics, the $j$th particle can jump
with multiple cites according to the truncated geometric distribution defined in~\eqref{geo} with parameter $\alpha_t$.
\begin{remark}
As shown in Lemma~\ref{MR3} below, we have a
determinantal formula for the transition probability in this model.
In the discrete time geometric TASEP with {\it sequential} update, 
it is not clear if it has any solvable structures via Bethe ansatz or
an explicit formula for the transition probability.
\end{remark}
\subsubsection{TASEP$_{\bm{\a,\b},\g}$: TASEP mixed with the continuous time TASEP and the discrete time TASEPs}
\label{abcTASEP}

In this paper, we consider the TASEP combined with the above three versions.
First we take three time parameters $t_1,~t_2\in \mathbb{Z}_{\geq 0}$
and $t_3\in\mathbb{R}_{\geq 0}$. Then particles evolve according to the discrete time geometric TASEP with parameter $\bm{\a}:=\{\a_1,\a_2,\dots,\a_{t_1}\}$(Sec.~\ref{aTASEP}) from time $0$ to $t_1$, the discrete time Bernoulli TASEP with 
parameter $\bm{\b}=\{\b_{t_1+1},\b_{t_1+2},\dots,\b_{t_1+t_2}\}$ (Sec.~\ref{bTASEP}) 
from time $t_1$ to $t_1+t_2$, and the continuous time TASEP with parameter $\gamma$ (Sec.~\ref{cTASEP}) from $t_1+t_2$ to $t_1+t_2+t_3$.
In this paper we denote
this mixed TASEP as ${\rm TASEP}_{\bm{\a,\b},\g}$.

This type of the mixed TASEP with $t_3=0$ has been 
introduced in~\cite{Alisa, OP17}.  A related process has been studied in~\cite{DPPP12}.
We decided the order of the three dynamics as above.  In fact the distribution of the particles' 
positions is invariant even if we freely exchange order of these dynamics 
since the semigroups of all the three dynamics are shown to be exchangeable 
thanks to the Yang-Baxter relations~\cite{BoPe, CoPe}.
\begin{remark}
The motivation of introducing the TASEP$_{\bm{\a,\b},\g}$ is that
we can treat the above three models in a unified way.
As stated in Proposition~\ref{abcabc} below, one can see that 
the transition probability of the TASEP$_{\bm{\a,\b},\g}$ is also written as a single determinant combining the determinantal formulas (Lemmas~\ref{MR1},
~\ref{MR2} and \ref{MR3}) for the above three TASEPs in Sec.~\ref{cTASEP}-\ref{aTASEP}. Starting from the determinantal formula,
one can generalize the approach to the continuous time TASEP
in~\cite{Matetski} to the TASEP$_{\bm{\a,\b},\g}$.
We will explain it in Secs.~\ref{s32}-\ref{s34}.
\end{remark}
\subsection{Results}
In this subsection, we give our main results.

\subsubsection{Joint distribution of the particle positions}
Here we give a single Fredholm determinant formula for joint distribution
of the particle position in TASEP$_{\bm{\a,\b},\g}$ defined in Sec.~\ref{abcTASEP}.
For the descriptions of the results below including the following one,  
we state some definitions.
\begin{definition}
For a real single-valued function, $\hat{f}:\mathbb{A}\to(-\infty, \infty]$
with (in general an uncountable) domain $\mathbb{A}$, the epigraph  ${\rm epi}(\hat{f})$
and the hypograph ${\rm hypo} (\hat{f})$
are defined as follows.
\begin{align*}
{\rm epi}(\hat{f})=\{(x,y) : y\geq \hat{f}(x)\},~{\rm hypo} (\hat{f})=\{(x,y) : y\leq \hat{f}(x)\}.
\end{align*}
\end{definition}

\begin{definition}
\label{defRWtau}
Let ${RW}_m,~m=0,1,2\dots$ be the position of a random walker with Geom$[\frac{1}{2}]$ jumps strictly to the left starting at some fixed site $c$, 
i.e.,
\begin{align*}
{RW}_m=c-\chi_1-\chi_2-\cdots -\chi_m,
\end{align*}
where $\chi_i,~i=1,2,\dots$ are the i.i.d. random variable with
$
\P(\chi_i=k)=1/2^{k+1},~k=0,1,2,\dots.
$

We also define the stopping time
\begin{equation}
\label{Tau}
\tau=\min\{m\geq 0 : RW_m>X_0(m+1)\},
\end{equation}
where $\tau$ is the hitting time of the strict epigraph of the curve $(X_0(k+1))_{k=0,\dots, n-1}$ by the random walk $RW_k$.
When the number of particles is N, $X_0(m)$ is constant
and defined only $m\le N$.
\end{definition}

At last we define the multiplication operators.
\begin{definition}
For a fixed vector $a\in\mathbb{R}^m$ and indices $n_1<\cdots<n_m$, we define $\chi_{a}$ and 
$\bar{\chi}_{a}$ by the multiplication operators acting on the space 
$\ell^2(\{n_1,\dots,n_m\}\times\mathbb{Z})$(or acting on the space $L^2(\{x_1,\dots,x_m\}\times\mathbb{R})$) 
with
\begin{align}
\label{chi}
\chi_{a}(n_j,x)=\1_{x>a_j},~~\bar{\chi}_{a}(n_j,x)=\1_{x\leq a_j}.
\end{align}
\end{definition}
We obtain the following result.
\begin{theorem}
\label{Main}
We consider the $\rm{TASEP}_{\bm{\a,\b},\g}$
introduced in Sec.~\ref{abcTASEP}. Let $t=t_1+t_2+t_3$ be the final time, and $X_t(j),~j\in\Z$ be the 
the position of the particle labeled $j$ at $t(=t_1+t_2+t_3)$.
Assume that the initial positions $X_0(j)\in\Z$ for $j=1,2,\dots$ 
are arbitrary constants satisfying $X_0(1)>X_0(2)>\cdots$
while $X_0(j)=\infty$ for $j\leq 0$.

For $n_j\in\Z_{\ge 1}~j=1,2,\dots,M$ with $1\leq n_1<n_2<\cdots<n_M$, 
and ${a}=(a_1,a_2, \dots, a_M)\in\Z^M$
we have
\begin{equation}
\label{pro}
\mathbb{P}(X_t(n_j)>a_j, j=1,\dots,M)=\det(I-\bar{\chi}_{a}K_t\bar{\chi}_{a})_{\ell^2(\{n_1,\dots,n_M\}\times\mathbb{Z})},
\end{equation}
where $\bar{\chi}_{\bm a}(n_j,x)$ is defined in~\eqref{chi} 
and the kernel $K_t$ is given by
\begin{align}
\label{Kt}
&K_t(n_i, x  ; n_j, y)=-Q^{n_j-n_i}(x,y)\1_{n_i<n_j}+(S_{-t, -n_i})^{*}\bar{S}^{{\rm epi} (X_0)}_{-t,n_j}(x,y),
\\
&Q^m(x,y)=\frac{1}{2^{x-y}}\binom{x-y-1}{m-1}\1_{x\geq y+m},
\\
\label{stn}
&S_{-t,-n}(z_1,z_2)=\frac{1}{2\pi i}\oint_{\Gamma_0}dw\frac{(1-w)^n}{2^{z_2-z_1}w^{n+1+z_2-z_1}} \mathfrak{F}_{{\a,\b},\g}(w, t),
\\
\label{bstn}
&\bar{S}_{-t,n}(z_1,z_2)=\frac{1}{2\pi i}\oint_{\Gamma_0}dw\frac{(1-w)^{z_2-z_1+n-1}}{2^{z_1-z_2}w^{n}} \bar{\mathfrak{F}}_{{\a,\b},\g}(w, t),
\\
\label{sepi}
&\displaystyle\bar{S}^{\rm epi(X_0)}_{-t,n}(z_1, z_2)=\mathbb{E}_{RW_0=z_1}\left[\bar{S}_{-t, n-\tau}(RW_{\tau}, z_2)\1_{\tau<n}\right],
\\
&\mathfrak{F}_{{\a,\b},\g}(w, t)=\prod_{j=1}^{t_1}\frac{1}{1-\frac{2\a_j}{2-\a_j} \left(w-\frac{1}{2}\right)}\cdot\prod_{j=t_1+1}^{t_1+t_2}\left\{1+\frac{2\b_j}{2+\b_j} \left(w-\frac{1}{2}\right)\right\}\cdot e^{\g t_3\left(w-\frac{1}{2}\right)},
\\
&\bar{\mathfrak{F}}_{{\a,\b},\g}(w, t)=\prod_{j=1}^{t_1}\left\{1+\frac{2\a_j}{2-\a_j} \left(w-\frac{1}{2}\right)\right\}\cdot\prod_{j=t_1+1}^{t_1+t_2}\frac{1}{1-\frac{2\b_j}{2+\b_j} \left(w-\frac{1}{2}\right)}\cdot e^{\g t_3\left(w-\frac{1}{2}\right)},
\end{align}
where
$\Gamma_0$ is a simple counterclockwise loop around $0$ 
not enclosing any other poles.
The superscript $\rm epi(X_0)$ in~\eqref{sepi} refers to the fact that $\tau$ is the hitting time of the strict epigraph of the curve $(X_0(k+1))_{k=0,\dots, n-1}$ by the random walk $RW_k$ (see Def.~\ref{defRWtau}).
\end{theorem}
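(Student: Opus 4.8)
The plan is to follow the strategy of Matetski–Quastel–Remenik \cite{Matetski}, adapting it to the mixed dynamics ${\rm TASEP}_{\bm{\a,\b},\g}$. The logical skeleton has three stages: (i) start from a single determinantal formula for the transition probability of the mixed process; (ii) convert the multi-point distribution into a Fredholm determinant with a kernel expressed through biorthogonal functions $\Phi_k,\Psi_k$; and (iii) solve the biorthogonalization explicitly, identifying the resulting kernel with $K_t$ in \eqref{Kt}.

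First I would invoke Proposition \ref{abcabc}, which (building on Lemmas \ref{MR1}, \ref{MR2}, \ref{MR3}) gives the transition probability of ${\rm TASEP}_{\bm{\a,\b},\g}$ as a single determinant. Because the three semigroups commute by the Yang–Baxter relations \cite{BoPe, CoPe}, the combined one-step kernels multiply, which is precisely why the factors indexed by $j=1,\dots,t_1$ (geometric), $j=t_1+1,\dots,t_1+t_2$ (Bernoulli), and the exponential in $t_3$ (continuous) appear as a product inside $\mathfrak{F}_{\a,\b,\g}$ and $\bar{\mathfrak{F}}_{\a,\b,\g}$. From this determinantal transition probability, the standard Sasamoto–type manipulation \cite{TSasamoto, Sasamoto} expresses $\mathbb{P}(X_t(n_j)>a_j)$ as a Fredholm determinant whose kernel is built from functions $\Psi^n_k$ (the image of the initial data under the forward evolution, which is explicit) and $\Phi^n_k$ (defined implicitly by a biorthogonality relation with $\Psi^n_k$). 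The contour integrals \eqref{stn}–\eqref{bstn} arise from the product structure of the generating functions, with $\Gamma_0$ encircling the pole at $w=0$.

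The decisive step is the explicit biorthogonalization. Following \cite{Matetski}, I would make the ansatz that $\Phi^n_k$ can be written through the random walk $RW_m$ of Definition \ref{defRWtau} with ${\rm Geom}[\tfrac12]$ jumps, together with its hitting time $\tau$ of the strict epigraph of the curve $(X_0(k+1))$. The key identity to verify is that the operator $\bar{S}^{{\rm epi}(X_0)}_{-t,n}$ defined in \eqref{sepi} via $\mathbb{E}_{RW_0=z_1}[\bar{S}_{-t,n-\tau}(RW_\tau,z_2)\1_{\tau<n}]$ reproduces the correct biorthogonal partner; this amounts to checking that the kernel $Q^m(x,y)$ in \eqref{Kt} is the transition kernel of the random walk $RW$, which follows since $Q^m(x,y)=2^{-(x-y)}\binom{x-y-1}{m-1}\1_{x\ge y+m}$ is exactly the $m$-step distribution of a sum of i.i.d.\ ${\rm Geom}[\tfrac12]$ increments. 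One then confirms the biorthogonality relation $\sum_x \Psi^n_k(x)\Phi^n_\ell(x)=\delta_{k\ell}$ by a contour-deformation/residue computation, and assembles the correlation kernel as $K_t(n_i,x;n_j,y)=-Q^{n_j-n_i}(x,y)\1_{n_i<n_j}+(S_{-t,-n_i})^*\bar{S}^{{\rm epi}(X_0)}_{-t,n_j}(x,y)$.

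The main obstacle will be establishing the biorthogonalization in the presence of the general hopping parameters $\bm{\a},\bm{\b},\g$: one must show that the single random-walk representation with fixed ${\rm Geom}[\tfrac12]$ jumps still diagonalizes the problem, with all dependence on the dynamics absorbed into the analytic factors $\mathfrak{F}_{\a,\b,\g}$ and $\bar{\mathfrak{F}}_{\a,\b,\g}$ rather than into the walk itself. Concretely, the hard part is verifying that the generating function $\bar{\mathfrak{F}}_{\a,\b,\g}(w,t)$, which enters $\bar{S}_{-t,n}$, is the correct inverse to $\mathfrak{F}_{\a,\b,\g}(w,t)$ in $S_{-t,-n}$ so that the biorthogonality holds after deforming $\Gamma_0$; this requires a careful residue analysis confirming that the product form of the generating functions yields no spurious poles inside the contour and that the factors $1-\tfrac{2\a_j}{2-\a_j}(w-\tfrac12)$ and $1+\tfrac{2\b_j}{2+\b_j}(w-\tfrac12)$ cancel appropriately between $S$ and $\bar{S}$. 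Once this cancellation is verified — reducing to the computation in \cite{Matetski} when $t_1=t_2=0$ — the identification of the kernel and hence \eqref{pro} follows.
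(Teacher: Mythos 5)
Your proposal follows essentially the same route as the paper: it starts from the single determinantal transition probability of Proposition~\ref{abcabc}, passes to the biorthogonal-ensemble Fredholm determinant (the paper's Theorem~\ref{stu}), and then carries out the Matetski--Quastel--Remenik biorthogonalization via the ${\rm Geom}[\tfrac12]$ random walk and its epigraph hitting time, with the dynamics entering only through the multiplicative factors $\mathfrak{F}_{\a,\b,\g}$, $\bar{\mathfrak{F}}_{\a,\b,\g}$ (in the paper, through conjugation by $R_{\alpha,\beta,\gamma,t}$ and $R^{-1}_{\alpha,\beta,\gamma,t}$, which commute with $Q$). The only cosmetic difference is that the paper verifies biorthogonality through the backwards-heat-equation characterization of $h^n_k$ rather than a direct residue computation, but the overall argument is the same.
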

\begin{remark}
In the case of continuous time TASEP, i.e. the special case $\a_i=\b_j=0$ with
$1\le i\le t_1$, $t_1+1\le j \le t_1+t_2$, this formula has been obtained in 
Theorem 2.6 in~\cite{Matetski}. Theorem \ref{Main} above is the
generalization of the result in~\cite{Matetski} to the $\rm{TASEP}_{\bm{\a,\b},\g}$,
which includes the two types of discrete time TASEPs as well 
as the continuous time one.
\end{remark}
\subsubsection{The Kardar-Parisi-Zhang (KPZ) scaling limit}
Here we state our result on the scaling limit of the joint distribution function in Theorem~\ref{Main}. 
Although we expect that the scaling limit 
can be taken for the general TASEP$_{\bm{\a,\b},\g}$, 
we analyze two simpler cases, the discrete time Bernoulli TASEP with sequential update and the discrete time geometric TASEP with parallel update in this paper
since the asymptotic analysis in the general case would be somewhat involved.

We focus on the following two cases:
\begin{itemize}
\item The discrete time Bernoulli TASEP (Sec.~\ref{bTASEP})

In the TASEP$_{\a,\b,\g}$ introduced in Sec.~\ref{abcTASEP}, the case is realized by 
the specialization
\begin{equation*}
\a_1=\a_2=\cdots=\a_{t_1}=\g=0,~\b_{t_1+1}=\b_{t_1+2}=\cdots=\b_{t_1+t_2}=\b=\frac{p}{1-p}.
\end{equation*}
\item The discrete time geometric TASEP (Sec.~\ref{aTASEP})

As above, it is realized by
\begin{align*}
\a_1=\a_2=\cdots=\a_{t_1}=\a,~\g=\b_{t_1+1}=\b_{t_1+2}=\cdots=\b_{t_1+t_2}=0.
\end{align*}
\end{itemize}

To see the universal behavior of the fluctuations, we focus on
the height function defined as follows.
\begin{definition}
For $z\in\mathbb{Z}$, the TASEP height function related to $X_t$ is given by
\begin{equation}
\label{height}
\displaystyle h_t(z)=-2(X^{-1}_t(z-1)-X^{-1}_0(-1))-z
\end{equation}
where 
\begin{equation}
X^{-1}_t(u)=\min\{k\in\mathbb{Z}:X_t(k)\leq u\}
\end{equation}
denote the label of the rightmost particle which sits to the left of, or at, $u$ at time $t$
and we fix $h_0(0)=0$.
\end{definition}

Note that it can be represented as
\begin{equation}
h_t(z+1)=h_t(z)+\hat{\eta}_t(z).
\end{equation}
where
\begin{equation*}
\hat{\eta}_t(z)
=\begin{cases}
1&\text{if there is a particle at $z$ at time $t$,}\\
-1&\text{if there is no particle at $z$ at time $t$.}
\end{cases}
\end{equation*}
We can extend the height function to a continuous function of $x\in\mathbb{R}$ by linearly interpolating between the integer points.

It is well known that the TASEP belongs to the Kardar-Parisi-Zhang (KPZ) universality class.
Thus we expect that the proper scaling of the height function is
\begin{align}
\frac{h_t(x)-At}{Ct^{\frac13}}, \text{~with~} x=Bt^{2/3}.
\label{kpzscaling1}
\end{align}
On average the height of the TASEP grows as $t^1$ with speed $A$, which is a constant.
On the other hand the fluctuation of the height around the average is of order $t^{1/3}$
contrary to the $t^{1/2}$ of the usual scaling in the central limit theorem. The scaling exponent
of the $x$-direction is $2/3$, the twice of the one in $h$-direction $1/3$, which suggest that
the path of the height function becomes the Brownian motion like. The exponents $(1/3, 2/3)$
are known to be universal and characterizing the KPZ universality class while the constants
$A,B,C$ are not universal and depend on the models. As shown in Sec.~\ref{asy}, we have 
\begin{itemize}
\item the discrete time Bernoulli TASEP case
\begin{align}
A= \frac{p-2}{2},B=2, C=1,
\label{kpzscaling2}
\end{align}
\item the discrete time geometric TASEP case
\begin{align}
A= \frac{\a-2}{2(1-\a)},B=2, C=1.
\label{kpzscaling3}
\end{align}
\end{itemize}
Based on the property of the height function, we define the scaled height, which is equivalent
to~\eqref{kpzscaling2} and~\eqref{kpzscaling3} but a slightly different form appearing 
as the ``1:2:3 scaling\rq\rq{} in~\cite{Quastel}.
\begin{definition}
For $\mathbf{t}\in\R_{\ge 0}$ and $\mathbf{x}\in\R$, we define the scaling height function as the following.
\begin{itemize}
\item The discrete time Bernoulli TASEP 
\begin{equation}
\label{Hei}
\displaystyle\hat{h}^{\varepsilon}(\mathbf{t},\mathbf{x})=\varepsilon^{\frac{1}{2}}\left[h_{t}(x)+\frac{2-p}{2}\varepsilon^{-\frac{3}{2}}\mathbf{t}\right],
\end{equation}
where $t$ and $x$ are scaled as
\begin{align}
\label{txscaling}
t=\frac{(2-p)^3}{4p(1-p)}\varepsilon^{-\frac{3}{2}}\mathbf{t},
~
x=2\varepsilon^{-1}\mathbf{x}.
\end{align}
\item The discrete time geometric TASEP
\begin{equation}
\label{gHei}
\displaystyle\hat{h}^{\varepsilon}(\mathbf{t},\mathbf{x})=\varepsilon^{\frac{1}{2}}\left[h_{t}(x)+\frac{2-\a}{2(1-\a)}\varepsilon^{-\frac{3}{2}}\mathbf{t}\right],
\end{equation}
where $t$ and $x$ are scaled as
\begin{align}
\label{txscaling}
t=\frac{(2-\a)^3}{4\a(1-\a)}\varepsilon^{-\frac{3}{2}}\mathbf{t},
~
x=2\varepsilon^{-1}\mathbf{x}.
\end{align}
\end{itemize}
\end{definition}
Our goal is to compute the $\varepsilon\rightarrow 0$ limit of the joint distribution function,
\begin{align}
\lim_{\varepsilon\xrightarrow{}0}\mathbb{P}_{\hat{h}^{\varepsilon}_0}(\hat{h}^{\varepsilon}(\mathbf{t}, \mathbf{x}_1)\leq\mathbf{a}_1,\dots, \hat{h}^{\varepsilon}(\mathbf{t}, \mathbf{x}_m)\leq\mathbf{a}_m)
\label{ldisfunc}
\end{align}
for $\mathbf{x}_1<\mathbf{x}_2< \dots <\mathbf{x}_m \in\mathbb{R}$ and $\mathbf{a}_1,\dots, \mathbf{a}_m \in\mathbb{R}$.
Here $\P_{\hat{h}^{\varepsilon}_0}(\cdot)$ represents the probability measure in which the initial height profile is
$\hat{h}^{\varepsilon} (0,x)$. We will show that the limit converges to the joint distribution function characterizing
the KPZ fixed point introduced in~\cite{Matetski}.

Here we introduce the KPZ fixed point. First we define UC and LC as follows.
\begin{definition}{\rm(UC} and {\rm LC}~{\rm \cite{Matetski})}.\\
We define {\rm UC} as the space of upper semicontinuous functions $\hat{h}:\mathbb{R}\to[-\infty, \infty)$ with $\hat{h}(x)\leq C_1+C_2|x|$ for some $C_1, C_2<\infty$ and $\hat{h}(x)>-\infty$ for some $x$
and {\rm LC} as ${\rm LC}=\{\hat{g} : -\hat{g}\in {\rm UC}\}$.
\end{definition}
Now we are ready to state the KPZ fixed point.
For more detailed information, see~\cite{Matetski}.
\begin{definition}[The KPZ fixed point~\cite{Matetski}]
The KPZ fixed point is the unique Markov process on {\rm UC}, $(\hat{h}(\mathbf{t},\cdot))_{\mathbf{t}>0}$
with transition probabilities given by 
\begin{equation}
\label{kpzfpdet}
\displaystyle\mathbb{P}_{\hat{h}_0}(\hat{h}(\mathbf{t},\mathbf{x}_1)\leq \mathbf{a}_1,\dots, \hat{h}(\mathbf{t},\mathbf{x}_m)\leq \mathbf{a}_m)=\det\left(\mathbf{I}-\chi_{\mathbf{a}}\mathbf{K}^{{\rm hypo}(\hat{h}_0)}_{\mathbf{t},{\rm ext}}\chi_{\mathbf{a}}\right)_{L^2(\{\mathbf{x}_1,\dots, \mathbf{x}_m\}\times\mathbb{R})}.
\end{equation}
Here in LHS, $\mathbf{x}_1<\mathbf{x}_2< \dots <\mathbf{x}_m \in\mathbb{R}$ and $\mathbf{a}_1,\dots, \mathbf{a}_m \in\mathbb{R}$, $\hat{h}_0\in{\rm UC}$ and $\mathbb{P}_{\hat{h}_0}$ means the measure on the process with initial data $\hat{h}_0$. In RHS, the kernel is given by
\begin{align}
\label{funct}
&~\mathbf{K}^{{\rm hypo}(\hat{h}_0)}_{\mathbf{t},{\rm ext}}(\mathbf{x}_i, v ;\mathbf{x}_j, u)
\notag
\\
&=-\frac{1}{\sqrt{4\pi(x_j-x_i)}}\exp\left(-\frac{(u-v)^2}{4(x_j-x_i)}\right)\1_{\mathbf{x}_i<\mathbf{x}_j}+\left(\mathbf{S}^{{\rm hypo}(\hat{h}^{-}_0)}_{\mathbf{t},-\mathbf{x}_i}\right)^{*}\mathbf{S}_{\mathbf{t}, \mathbf{x}_j}(v, u),
\\
&
\label{function}
~\displaystyle\mathbf{S}_{\mathbf{t}, \mathbf{x}}(v, u)
=\mathbf{t}^{-\frac{1}{3}}e^{\frac{2\mathbf{x}^3}{3\mathbf{t}^2}-\frac{(v-u)\mathbf{x}}{\mathbf{t}}} {\rm Ai}(-\mathbf{t}^{-\frac{1}{3}}(v-u)+\mathbf{t}^{-\frac{4}{3}}\mathbf{x}^2),
\\
\label{functione}
&
~\displaystyle \mathbf{S}^{{\rm hypo}(\hat{h})}_{\mathbf{t},\mathbf{x}}(v,u)=\mathbb{E}_{B(0)=v}[\mathbf{S}_{\mathbf{t}, \mathbf{x}-\bm{\tau}'}(B(\bm{\tau}'),u)\1_{\bm{\tau}'<\infty}],
\end{align}
where $(A)^*$ represents the adjoint of an integral operator $A$, and $B(x)$ is a Brownian motion 
with diffusion coefficient $2$ and $\bm{\tau}'$ is the hitting time of the hypograph of the function $\hat{h}$.
\end{definition}
\begin{remark}
(\ref{funct}) and (\ref{function}) can be written in terms of  
the differential operators 
$\mathbf{S}_{\mathbf{t}, \mathbf{x}}=\exp\{\mathbf{x}\partial^2+\mathbf{t}\partial^3/3\}$,
\begin{equation*}
\mathbf{K}^{{\rm hypo}(\hat{h}_0)}_{\mathbf{t},{\rm ext}}(\mathbf{x}_i, \cdot ;\mathbf{x}_j, \cdot)=-e^{(\mathbf{x}_j-\mathbf{x}_i)\partial^2}\1_{\mathbf{x}_i<\mathbf{x}_j}+\left(\mathbf{S}^{{\rm hypo}(\hat{h}^{-}_0)}_{\mathbf{t},-\mathbf{x}_i}\right)^{*}\mathbf{S}_{\mathbf{t}, \mathbf{x}_j}.
\end{equation*}
In addition using the integral representation for the Airy function
\begin{equation*}
\displaystyle{\rm Ai}(z)=\frac{1}{2\pi i}\int_{\langle}dw \ e^{\frac{1}{3}w^3-zw},
\end{equation*}
where $\langle$ is the positively oriented contour going the straight lines from $e^{-\frac{i\pi}{3}}\infty$ to $e^{\frac{i\pi}{3}}\infty$ through 0, we find that 
$\mathbf{S}_{\mathbf{t}, \mathbf{x}}(v, u)$~\eqref{function} can be expressed as
\begin{align}
\displaystyle\mathbf{S}_{\mathbf{t}, \mathbf{x}}(v, u)=\frac{1}{2\pi i}\int_{\langle}dw \ e^{\frac{\mathbf{t}}{3}w^3+xw^2-(v-u)w}.
\end{align}
\end{remark}

Now we assume that the limit
\begin{equation}
\label{Height}
\hat{h}_0=\lim_{\varepsilon\xrightarrow{}0}\hat{h}^{\varepsilon}(0, \cdot)
\end{equation}
exists. Note that by(\ref{height}) and (\ref{Height}), (\ref{initial}), this assumption is rewritten as 
\begin{equation}
\label{varx}
\displaystyle\varepsilon^{\frac{1}{2}}[(X^{\varepsilon}_0)^{-1}(\mathbf{x})+2\varepsilon^{-1}\mathbf{x}-2]\xrightarrow[\varepsilon\xrightarrow{}0]{}-\hat{h}_0(-\mathbf{x}),
\end{equation}
where $(X^{\varepsilon}_0)^{-1}(\mathbf{x}):=2X^{-1}_0(-2\varepsilon^{-1}\mathbf{x}-1)$ and the left hand side is interpreted as a linear interpolation to make it a continuous function of $x\in\mathbb{R}$ and we chose the frame of reference by
\begin{equation}
\label{initial}
X^{-1}_0(-1)=1,
\end{equation}
i.e. the particle labeled $1$ is initially the rightmost in $\mathbb{Z}_{<0}$.

Under this assumption, we have the following result for the limiting joint distribution function~\eqref{ldisfunc}.
\begin{theorem}(One-sided fixed point formula).
\label{special}
Let $\hat{h}_0\in{\rm UC}$ with $\hat{h}_0(\mathbf{x})=-\infty$ for $\mathbf{x}>0$.
Then given $\mathbf{x}_1<\mathbf{x}_2< \dots <\mathbf{x}_m \in\mathbb{R}$ and $\mathbf{a}_1,\dots, \mathbf{a}_m \in\mathbb{R}$, we have
\begin{equation}
\displaystyle\lim_{\varepsilon\xrightarrow{}0}\mathbb{P}_{\hat{h}^{\varepsilon}_0}(\hat{h}^{\varepsilon}(\mathbf{t}, \mathbf{x}_1)\leq\mathbf{a}_1,\dots, \hat{h}^{\varepsilon}(\mathbf{t}, \mathbf{x}_m)\leq\mathbf{a}_m)
=
\det\left(\mathbf{I}-\chi_{\mathbf{a}}\mathbf{K}^{{\rm hypo}(\hat{h}_0)}_{\mathbf{t}, {\rm ext}}\chi_{\mathbf{a}}\right)_{L^2(\{\mathbf{x}_1,\dots, \mathbf{x}_m\}\times\mathbb{R})},
\end{equation}
where RHS is equivalent to that of~\eqref{kpzfpdet}.
\end{theorem}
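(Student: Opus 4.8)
The plan is to start from the finite-$\varepsilon$ Fredholm determinant of Theorem~\ref{Main} and push the KPZ $1{:}2{:}3$ scaling through each piece of the kernel $K_t$ in~\eqref{Kt}. First I would translate the left-hand side into the form~\eqref{pro}: using the definition~\eqref{height} of the height function together with~\eqref{Hei} (resp.~\eqref{gHei}) and the scaling~\eqref{txscaling}, the event $\{\hat h^{\varepsilon}(\mathbf t,\mathbf x_j)\le\mathbf a_j\}$ becomes an event $\{X_t(n_j)>a_j\}$ for suitably scaled labels $n_j=n_j(\varepsilon)$ and positions $a_j=a_j(\varepsilon)$, so that~\eqref{ldisfunc} is rewritten exactly as $\det(I-\bar\chi_a K_t\bar\chi_a)$ on $\ell^2$. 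After a change of variables $x\mapsto v,\ y\mapsto u$ rescaling the lattice to the fluctuation mesh of width $\sim\varepsilon^{1/2}$, together with a conjugation removing an overall Gaussian/exponential factor, the problem reduces to showing that the rescaled kernel converges, strongly enough to pass to the determinant, to $\mathbf K^{{\rm hypo}(\hat h_0)}_{\mathbf t,{\rm ext}}$ of~\eqref{funct}.

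The analytic core is the saddle point analysis of the contour integrals~\eqref{stn} and~\eqref{bstn} after specializing $\mathfrak F_{\bm{\a,\b},\g}$ and $\bar{\mathfrak F}_{\bm{\a,\b},\g}$ to the Bernoulli case $\a_i=\g=0,\ \b_j=\b=p/(1-p)$ (and symmetrically to the geometric case). Writing $w=\tfrac12+\varepsilon^{1/2}\zeta/B$ near the point $w=\tfrac12$ singled out by the $(w-\tfrac12)$ structure of the factors and by the $\mathrm{Geom}[\tfrac12]$ weights, I expect the product $\prod_j\{1+\tfrac{2\b}{2+\b}(w-\tfrac12)\}$ together with the power prefactor $(1-w)^n/w^{n+1+z_2-z_1}$ to yield, after taking logarithms and expanding to third order, exactly the cubic phase $\tfrac{\mathbf t}{3}w^3+\mathbf x\,w^2-(v-u)\,w$. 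Deforming $\Gamma_0$ through the saddle into the Airy contour $\langle$ then gives $\varepsilon^{-1/2}S_{-t,-n_i}\to\mathbf S_{\mathbf t,-\mathbf x_i}$ and likewise for $\bar S$, reproducing~\eqref{function}. The constants $A,B,C$ in~\eqref{kpzscaling2}--\eqref{kpzscaling3} are precisely those fixed by the vanishing of the first- and second-order terms of this expansion, and the centering in~\eqref{Hei}--\eqref{gHei} absorbs the macroscopic drift so that only the fluctuation survives.

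Next I would treat the initial-data term~\eqref{sepi}. By Donsker's invariance principle the walk $RW_k$ of Definition~\ref{defRWtau}, whose jumps have mean $1$ and variance $2$, converges under diffusive rescaling to a Brownian motion $B$ with diffusion coefficient $2$; correspondingly the stopping time $\tau$ of~\eqref{Tau}, the hitting time of the strict epigraph of $(X_0(k+1))_k$, converges to the hitting time $\bm{\tau}'$ of the Brownian hypograph in~\eqref{functione}, and $\1_{\tau<n}\to\1_{\bm{\tau}'<\infty}$ because the one-sided assumption $\hat h_0(\mathbf x)=-\infty$ for $\mathbf x>0$ forces absorption in finite rescaled time. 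The hypothesis~\eqref{varx} identifies the rescaled initial curve with $-\hat h_0(-\,\cdot\,)$, and the reflection $\mathbf x\mapsto-\mathbf x$ it encodes is what turns the epigraph of the finite data into the hypograph of $\hat h_0^{-}$; combined with the saddle point limit this gives $\bar S^{{\rm epi}(X_0)}_{-t,n_j}\to\mathbb E_{B(0)=v}[\mathbf S_{\mathbf t,\mathbf x_j-\bm{\tau}'}(B(\bm{\tau}'),u)\1_{\bm{\tau}'<\infty}]$. Together with the local central limit theorem, which sends the random-walk transition kernel $Q^{n_j-n_i}$ to the Gaussian factor in~\eqref{funct}, these assemble into the fixed-point kernel; matching the assembled expression with the precise placement of the hypograph in the adjoint (row) factor of~\eqref{funct} uses the reflection identity implicit in~\eqref{varx}, and verifying this bookkeeping is part of the argument.

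Finally I would upgrade these pointwise kernel limits to convergence of the Fredholm determinant, which requires uniform Hilbert--Schmidt and trace-norm control. Deforming the $w$-contour away from the saddle produces Gaussian-in-$\zeta$ decay of the integrand, hence pointwise bounds on the kernel entries that dominate the Hadamard series and permit dominated convergence term by term. I expect this uniform-estimate step, together with the rigorous justification that the expectation over $RW$ in~\eqref{sepi} commutes with the limit $\varepsilon\to0$, to be the \emph{main obstacle}: the saddle point expansion is only locally accurate, so it must be complemented by global tail bounds on $\Gamma_0$ and on the walk (controlling the contribution of atypically large stopping times) to make the convergence uniform enough to survive inside the determinant. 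The one-sided restriction simplifies the geometry of the hitting problem but does not remove the need for these tail estimates, which are the genuine technical heart of the proof.
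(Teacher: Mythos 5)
Your proposal follows essentially the same route as the paper: rewrite the height-function event as a particle-position event via~\eqref{hXrelation}, establish pointwise convergence of the rescaled kernel entries through the saddle-point substitution $w=\tfrac12(1-\varepsilon^{1/2}y)$ (Propositions~\ref{scaling} and~\ref{gscaling}) together with Donsker's invariance principle for the epigraph term, and then convert the limiting kernel to the hypograph form by the reflection $u_i\mapsto -u_i$ and the adjoint identities. The only substantive difference is one of emphasis: you treat the trace-norm upgrade of the pointwise kernel convergence as the technical heart of the argument, whereas the paper explicitly stops at pointwise convergence and relegates that upgrade to a remark citing \cite{Matetski}, \cite{Remenik} and \cite{Simon}.
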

\begin{remark}
We only give pointwise convergence of the kernels.
In principle, one expects the
convergence could be upgraded to trace class (see~\cite{Matetski}, \cite{Remenik} and \cite{Simon}) which would give a full proof of
Theorem~\ref{special}.
\end{remark}
\begin{remark}
The One-sided fixed point formula for the continuous time TASEP has been given in Proposition 3.6 in \cite{Matetski}. Our theorem \ref{special} indicates that Bernoulli TASEP and geometric TASEP 
settle into the same class ``KPZ fixed point”. The KPZ fixed point is believed to be the universal
process for the KPZ class with arbitrary fixed initial data. Our result supports this universality.
\end{remark}

\begin{remark}
In fact we can remove the assumption $\hat{h}_0({\bf x})=-\infty$ for ${\bf x}>0$ in the above theorem
by using the similar argument in Theorem 3.8. in~\cite{Matetski}.
\end{remark}
To prove Theorem~\ref{special}, we use the following relationship between the particle positions $X_t(j)$
and the height function $h_t(z)$~\eqref{height}.
Let $s_1,\dots, s_k, m_1, \dots, m_k\in\mathbb{R}$ and $z_1,\dots, z_k, n_1,\dots, n_k\in\mathbb{Z}$.
We have
\begin{align}
\label{Xhrelation}
\P(h_t(z_1)\le s_1,\dots, h_t(z_k)\le s_k)=\P(X_t(n_1)\ge m_1,\dots, X_t(n_k)\ge m_k),
\end{align}
which follows from the definitions of  $h_t(x)$~\eqref{height}.
By this relation, we see
\begin{multline}
\label{hXrelation}
\lim_{\varepsilon\xrightarrow{}0}\mathbb{P}_{\hat{h}^{\varepsilon}_0}(\hat{h}^{\varepsilon}(\mathbf{t}, \mathbf{x}_1)\leq\mathbf{a}_1,\dots, \hat{h}^{\varepsilon}(\mathbf{t}, \mathbf{x}_m)\leq\mathbf{a}_m)
=
\lim_{\varepsilon\rightarrow0}\displaystyle\mathbb{P}_{X^{\varepsilon}_0}
\left(X^{\varepsilon}_{t}(n_1)>a_1,\dots, X^{\varepsilon}_{t}(n_m)>a_m\right),
\end{multline}
where $a_1,\dots, a_m\in\mathbb{R}$ and $t,~n_j,x_j$ are scaled as
\begin{itemize}
\item the discrete time Bernoulli TASEP case
\begin{align}
t=\frac{(2-p)^3}{4p(1-p)}\varepsilon^{-\frac{3}{2}}\mathbf{t},
~
n_i=
\frac{2-p}{4}\varepsilon^{-\frac{3}{2}}\mathbf{t}-\varepsilon^{-1}\mathbf{x}_i
-\frac{1}{2}\varepsilon^{-\frac{1}{2}}\mathbf{a}_i+1,
~
a_i=2\varepsilon^{-1}\mathbf{x}_i-2,
\label{tnascaling}
\end{align}
\item the discrete time geometric TASEP case
\begin{align}
t=\frac{(2-\alpha)^3}{4\alpha(1-\alpha)}\varepsilon^{-\frac{3}{2}}\mathbf{t},
~
n_i=
\frac{2-\alpha}{4(1-\alpha)}\varepsilon^{-\frac{3}{2}}\mathbf{t}-\varepsilon^{-1}\mathbf{x}_i
-\frac{1}{2}\varepsilon^{-\frac{1}{2}}\mathbf{a}_i+1,
~
a_i=2\varepsilon^{-1}\mathbf{x}_i-2.
\label{geotnascaling}
\end{align}
\end{itemize}

Thus we find that our goal, LHS of~\eqref{hXrelation}, can be obtained by taking the $\varepsilon\rightarrow 0$
limit of the expression~\eqref{pro} in Theorem~\ref{Main} under the scaling~\eqref{tnascaling} or~\eqref{geotnascaling}.
The critical step of this problem is the following propositions about the pointwise convergences.
First, we state the result for the discrete time Bernoulli TASEP.
\begin{proposition}(Pointwise convergence for the discrete time Bernoulli TASEP).
\label{scaling}
Under the scaling (\ref{tnascaling}),(dropping the $i$ subscripts) and assuming that (\ref{varx}) holds, if we set $z=\frac{p(2-p)}{4(1-p)}\varepsilon^{-\frac{3}{2}}\mathbf{t}+2\varepsilon^{-1}\mathbf{x}+\varepsilon^{-\frac{1}{2}}(u+\mathbf{a})-2$ and $y'=\varepsilon^{-\frac{1}{2}}v$, then we have for $\mathbf{t}>0$ as $\varepsilon\xrightarrow{} 0$,
\begin{align}
\label{A1}
&\mathbf{S}^{\varepsilon}_{-t,x}(v,u):=\varepsilon^{-\frac{1}{2}}S^{\rm Ber}_{-t,-n}(y',z)\xrightarrow{}\mathbf{S}_{-\mathbf{t,x}}(v,u) 
\\
\label{A2}
&\bar{\mathbf{S}}^{\varepsilon}_{-t,-x}(v,u):=\varepsilon^{-\frac{1}{2}}\bar{S}^{\rm Ber}_{-t,n}(y',z)\xrightarrow{}\mathbf{S}_{-\mathbf{t},-\mathbf{x}}(v,u)
\\
\label{A3}
&\bar{\mathbf{S}}^{\varepsilon, {\rm epi}(-h^{\varepsilon,-}_0)}_{-t,-x}(v,u):=\varepsilon^{-\frac{1}{2}}\bar{S}^{{\rm Ber},{\rm epi}(X_0)}_{-t,n}(y',z)\xrightarrow{}\mathbf{S}^{{\rm epi}(-\hat{h}^{-}_0)}_{-\mathbf{t},-\mathbf{x}}(v,u) 
\end{align}
pointwise, where $\hat{h}^{-}_0(x)=\hat{h}_0(-x)$ for $x\geq0$,
$\mathbf{S}_{\mathbf{t}, \mathbf{x}}(v, u)$ is given by~\eqref{function} and for $\hat{g}\in {\rm LC}$,
$$\mathbf{S}^{{\rm epi}(\hat{g})}_{\mathbf{t},\mathbf{x}}(v,u)=\mathbb{E}_{B(0)=v}[\mathbf{S}_{\mathbf{t}, \mathbf{x}-\bm{\tau}'}(B(\bm{\tau}'),u)\1_{\bm{\tau^{'}}<\infty}]$$
and
\begin{align}
\label{bern}
&S^{\rm Ber}_{-t,-n}(z_1,z_2)=\frac{1}{2\pi i}\oint_{\Gamma_0}dw
\frac{(1-w)^n}{2^{z_2-z_1}w^{n+1+z_2-z_1}}\left(1+\frac{2p}{2-p}\left(w-\frac{1}{2}\right)\right)^t,
\\
\label{cern}
&\bar{S}^{\rm Ber}_{-t,n}(z_1,z_2)=\frac{1}{2\pi i}\oint_{\Gamma_0}dw\frac{(1-w)^{z_2-z_1+n-1}}{2^{z_1-z_2}w^{n}}\left(1-\frac{2p}{2-p}\left(w-\frac{1}{2}\right)\right)^{-t}.
\\
\label{dern}
&\bar{S}^{{\rm Ber},{\rm epi}(X_0)}_{-t,n}(z_1, z_2)=\mathbb{E}_{RW_0=z_1}\left[\bar{S}^{\rm Ber}_{-t, n-\tau}(RW_{\tau}, z_2)\1_{\tau<n}\right]
\end{align}
with $\Gamma_0$ being a simple counterclockwise loop around $0$ not enclosing $1$, $1/p$ and $(1-p)/p$.
\end{proposition}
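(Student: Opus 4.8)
The plan is to prove \eqref{A1}--\eqref{A3} by a steepest-descent analysis of the contour integrals, reducing the random-walk statement \eqref{A3} to \eqref{A2} by an invariance-principle argument. I begin with \eqref{A1}. Writing the integrand of $S^{\rm Ber}_{-t,-n}(y',z)$ as $e^{\Phi_\varepsilon(w)}$ with
\begin{equation*}
\Phi_\varepsilon(w)=n\log(1-w)-(n+1+z-y')\log w-(z-y')\log 2+t\log\!\Big(1+\tfrac{2p}{2-p}\big(w-\tfrac12\big)\Big),
\end{equation*}
I substitute the scaling \eqref{tnascaling} together with the values of $z$ and $y'$ fixed in the statement and organize the result as $\Phi_\varepsilon=\varepsilon^{-3/2}G(w)+\varepsilon^{-1}H(w)+\varepsilon^{-1/2}J(w)+K(w)+o(1)$. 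The algebraic heart of the matter, which I would verify directly, is that $w=\tfrac12$ is a \emph{double} critical point of the leading term: one finds $G'(\tfrac12)=G''(\tfrac12)=0$ and $G'''(\tfrac12)=-16\mathbf t\neq0$, while the scaling is tuned so that the would-be divergent constants $G(\tfrac12)$, $H(\tfrac12)$, $J(\tfrac12)$ all vanish. One further computes $H''(\tfrac12)=8\mathbf x$ and $J'(\tfrac12)=2(v-u)$.

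Next I deform $\Gamma_0$ so that it crosses $w=\tfrac12$ along the direction of steepest descent and set $w=\tfrac12+\tfrac12\varepsilon^{1/2}\omega$. The Jacobian $dw=\tfrac12\varepsilon^{1/2}d\omega$, the prefactor $\varepsilon^{-1/2}$ and the finite factor $e^{K(1/2)}=2$ combine to $1$, and the Taylor expansion gives
\begin{equation*}
\Phi_\varepsilon\big(\tfrac12+\tfrac12\varepsilon^{1/2}\omega\big)\xrightarrow[\varepsilon\to0]{}-\tfrac{\mathbf t}{3}\omega^3+\mathbf x\,\omega^2+(v-u)\,\omega,
\end{equation*}
which is exactly the cubic exponent appearing in the integral representation of $\mathbf S_{-\mathbf t,\mathbf x}(v,u)$ recalled in the preceding remark, once the local steepest-descent path is identified, with its orientation, with the Airy contour. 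The remaining points are to bound the contribution of $\Gamma_0$ away from $\tfrac12$ and to check that the circle can be deformed to the steepest-descent contour without crossing the poles $1$, $1/p$, $(1-p)/p$; this yields \eqref{A1}. The proof of \eqref{A2} is structurally identical, now for $\bar S^{\rm Ber}$, whose integrand carries $\big(1-\tfrac{2p}{2-p}(w-\tfrac12)\big)^{-t}$ and the shifted powers of $(1-w)$ and $w$; the same double saddle at $w=\tfrac12$ occurs, and the analogue of $H''(\tfrac12)$ changes sign, producing the limit $\mathbf S_{-\mathbf t,-\mathbf x}(v,u)$.

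For \eqref{A3} I start from the representation \eqref{dern} and rescale the walk diffusively, with time index $m=\varepsilon^{-1}s$ and space measured on scale $\varepsilon^{1/2}$. By Donsker's invariance principle $RW$ converges to a Brownian motion $B$ with diffusion coefficient $2$, the variance of a ${\rm Geom}[\tfrac12]$ increment, once the deterministic linear drift is absorbed into the linear part of $X_0$ through \eqref{varx}; the same assumption \eqref{varx} forces the strict-epigraph hitting time $\tau$ of the curve $(X_0(k+1))_k$ to rescale to the hitting time $\bm\tau'$ of the epigraph of $-\hat h^{-}_0$. Inserting the pointwise limit \eqref{A2} for the stopped integrand $\bar S^{\rm Ber}_{-t,n-\tau}(RW_\tau,z)$ and passing $\varepsilon\to0$ through the expectation $\E_{RW_0=z_1}[\,\cdot\,]$ then gives $\mathbf S^{{\rm epi}(-\hat h^{-}_0)}_{-\mathbf t,-\mathbf x}(v,u)=\E_{B(0)=v}[\mathbf S_{-\mathbf t,-\mathbf x-\bm\tau'}(B(\bm\tau'),u)\1_{\bm\tau'<\infty}]$. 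The main obstacle is precisely this interchange of limit and expectation: it requires the joint convergence of $(\tau,RW_\tau)$ to $(\bm\tau',B(\bm\tau'))$ together with a saddle-point bound on $\bar S^{\rm Ber}$ that is uniform in the now-random shifted ``time'' $n-\tau\in(0,n)$ and starting point $RW_\tau$, so that dominated convergence applies; keeping that estimate uniform as $n-\tau$ and $RW_\tau$ fluctuate, and correctly translating the epigraph of $X_0$ into the epigraph of $-\hat h^{-}_0$ via \eqref{varx}, are the delicate steps. As indicated in the remark following Theorem~\ref{special}, I would establish only pointwise convergence here, leaving the upgrade to trace class, which would complete the proof of Theorem~\ref{special}, to the estimates of~\cite{Matetski,Remenik,Simon}.
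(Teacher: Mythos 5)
Your proposal is correct and follows essentially the same route as the paper: the paper performs the global substitution $w=\tfrac12(1-\varepsilon^{1/2}y)$ (equivalent to your local scaling $w=\tfrac12+\tfrac12\varepsilon^{1/2}\omega$ up to the sign $\omega=-y$), identifies the same double saddle at $w=\tfrac12$ with vanishing constant, quadratic-coefficient $\mathbf{x}$ and linear-coefficient $u-v$, deforms the image circle $C_\varepsilon$ onto the truncated Airy contour plus a remainder arc whose exponent is bounded via $\log(1+x)<x$, and handles \eqref{A3} by Donsker's invariance principle and convergence of the hitting times, citing Proposition 3.2 of \cite{Matetski} for the interchange of limit and expectation that you correctly flag as the delicate step. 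Your computed constants ($G'''(\tfrac12)=-16\mathbf{t}$, $H''(\tfrac12)=8\mathbf{x}$, $J'(\tfrac12)=2(v-u)$) agree with the paper's $f^{(3)}(0)=2\hat{t}$, $F_2\approx\mathbf{x}y^2$, $F_1\approx(u-v)y$ after the change of variable.
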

Next, we state that the point wise convergence for the discrete time geometric TASEP is obtained as the following.
\begin{proposition}(Pointwise convergence for the discrete time geometric TASEP).
\label{gscaling}
Under the scaling (\ref{geotnascaling}),(dropping the $i$ subscripts) and assuming that (\ref{varx}) holds in {\rm LC}, if we set $z=-\frac{\alpha(2-\alpha)}{4(1-\alpha)}\varepsilon^{-\frac{3}{2}}\mathbf{t}+2\varepsilon^{-1}\mathbf{x}+\varepsilon^{-\frac{1}{2}}(u+\mathbf{a})-2$ and $y'=\varepsilon^{-\frac{1}{2}}v$, then we have for $\mathbf{t}>0$ as $\varepsilon\xrightarrow{} 0$,
\begin{align}
\label{gA1}
&\mathbf{S}^{\varepsilon}_{-t,x}(v,u):=\varepsilon^{-\frac{1}{2}}S^{\rm geo}_{-t,-n}(y',z)\xrightarrow{}\mathbf{S}_{-\mathbf{t,x}}(v,u) 
\\
\label{gA2}
&\bar{\mathbf{S}}^{\varepsilon}_{-t,-x}(v,u):=\varepsilon^{-\frac{1}{2}}\bar{S}^{\rm geo}_{-t,n}(y',z)\xrightarrow{}\mathbf{S}_{-\mathbf{t},-\mathbf{x}}(v,u) 
\\
\label{gA3}
&\bar{\mathbf{S}}^{\varepsilon, {\rm epi}(-h^{\varepsilon,-}_0)}_{-t,-x}(v,u):=\varepsilon^{-\frac{1}{2}}\bar{S}^{{\rm geo},{\rm epi}(X_0)}_{-t,n}(y',z)\xrightarrow{}\mathbf{S}^{{\rm epi}(-\hat{h}^{-}_0)}_{-\mathbf{t},-\mathbf{x}}(v,u) 
\end{align}
pointwise,  where $\hat{h}^{-}_0(x)=\hat{h}_0(-x)$ for $x\geq0$,
$\mathbf{S}_{\mathbf{t}, \mathbf{x}}(v, u)$ is given by~\eqref{function} and for $\hat{g}\in {\rm LC}$,
$$\mathbf{S}^{{\rm epi}(\hat{g})}_{\mathbf{t},\mathbf{x}}(v,u)=\mathbb{E}_{B(0)=v}[\mathbf{S}_{\mathbf{t}, \mathbf{x}-\bm{\tau}'}(B(\bm{\tau}'),u)\1_{\bm{\tau^{'}}<\infty}]$$
and
\begin{align}
\label{gbern}
&S^{\rm geo}_{-t,-n}(z_1,z_2)=\frac{1}{2\pi i}\oint_{\Gamma_0}dw\frac{(1-w)^n}{2^{z_2-z_1}w^{n+1+z_2-z_1}}\left(1-\frac{2\alpha}{2-\alpha}\left(w-\frac{1}{2}\right)\right)^{-t},
\\
\label{gcern}
&\bar{S}^{\rm geo}_{-t,n}(z_1,z_2)=\frac{1}{2\pi i}\oint_{\Gamma_0}dw\frac{(1-w)^{z_2-z_1+n-1}}{2^{z_1-z_2}w^{n}}\left(1+\frac{2\alpha}{2-\alpha}\left(w-\frac{1}{2}\right)\right)^{t},
\\
\label{gdern}
&\displaystyle\bar{S}^{{\rm geo},{\rm epi}(X_0)}_{-t,n}(z_1, z_2)=\mathbb{E}_{RW_0=z_1}\left[\bar{S}^{\rm geo}_{-t, n-\tau}(RW_{\tau}, z_2)\1_{\tau<n}\right]
\end{align}
with
$\Gamma_0$ being a simple counterclockwise loop around $0$ not enclosing $1$, $1/\alpha$ and $(1-\alpha)/\alpha$.
\end{proposition}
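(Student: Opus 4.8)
The plan is to prove the three pointwise limits \eqref{gA1}--\eqref{gA3} by a steepest-descent analysis of the contour integrals \eqref{gbern} and \eqref{gcern}, followed by an invariance-principle argument for the epigraph kernel \eqref{gdern}, adapting the framework of~\cite{Matetski}. The computations parallel those for the Bernoulli case in Proposition~\ref{scaling}; only the weight $\bigl(1-\tfrac{2\alpha}{2-\alpha}(w-\tfrac12)\bigr)^{-t}$ changes, so I would carry out the geometric case and note that \eqref{gA1}--\eqref{gA2} involve essentially the same algebra.

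For \eqref{gA1} I would write the integrand of $S^{\rm geo}_{-t,-n}(y',z)$ as $e^{G(w)}$ with
\begin{equation*}
G(w)=nF_1(w)+(z_2-z_1)F_2(w)+tF_3(w)-\log w,
\end{equation*}
where $F_1(w)=\log\tfrac{1-w}{w}$, $F_2(w)=-\log(2w)$, $F_3(w)=-\log\bigl(1-\tfrac{2\alpha}{2-\alpha}(w-\tfrac12)\bigr)$, and $z_1=y'$, $z_2=z$. Under the scaling \eqref{geotnascaling} the $\varepsilon^{-3/2}$ parts of $n$, $z_2-z_1$ and $t$ are
\begin{equation*}
n_0=\tfrac{2-\alpha}{4(1-\alpha)}\mathbf t,\quad (z_2-z_1)_0=-\tfrac{\alpha(2-\alpha)}{4(1-\alpha)}\mathbf t,\quad t_0=\tfrac{(2-\alpha)^3}{4\alpha(1-\alpha)}\mathbf t,
\end{equation*}
and the leading exponent $H=n_0F_1+(z_2-z_1)_0F_2+t_0F_3$ is fine-tuned so that $w=\tfrac12$ is a degenerate (triple) saddle: a direct computation gives $H'(\tfrac12)=H''(\tfrac12)=0$ and $H'''(\tfrac12)=-16\mathbf t$. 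Substituting $w=\tfrac12(1+\varepsilon^{1/2}\tilde w)$ and collecting powers of $\varepsilon$, the $\varepsilon^{-1}$ and $\varepsilon^{-1/2}$ orders cancel (again a consequence of the tuned scaling), and the surviving $O(1)$ exponent is cubic in $\tilde w$ with leading coefficient $-\tfrac{\mathbf t}{3}$ and quadratic coefficient $\mathbf x$, together with a linear term in $(v-u)$; once the prefactor $\varepsilon^{-1/2}$ and the Jacobian are absorbed this is exactly the Airy-type integrand of $\mathbf S_{-\mathbf t,\mathbf x}(v,u)$ (the minus sign on the time coming from the minus on the cubic). Deforming $\Gamma_0$ through $w=\tfrac12$ onto the local steepest-descent curve, identified in the limit with the contour defining the Airy kernel $\mathbf S_{-\mathbf t,\mathbf x}$, and bounding the contribution away from the saddle then yields \eqref{gA1}. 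The limit \eqref{gA2} follows from the same expansion applied to $\bar S^{\rm geo}_{-t,n}$, in which the $(z_2-z_1)$-weight $-\log(2w)$ is replaced by $\log\bigl(2(1-w)\bigr)$; this flips the sign of the quadratic coefficient to $-\mathbf x$ while leaving the cubic unchanged, giving $\mathbf S_{-\mathbf t,-\mathbf x}$.

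For the epigraph kernel \eqref{gA3} I would invoke Donsker's invariance principle. The jumps $\chi_i$ of $RW_m$ are ${\rm Geom}[\tfrac12]$ with mean $1$ and variance $2$, so under the diffusive scaling tied to \eqref{geotnascaling} the drift-compensated walk $m\mapsto RW_m+m$, started at $RW_0=y'=\varepsilon^{-1/2}v$, converges to a Brownian motion $B$ with diffusion coefficient $2$ started at $v$. By the hypothesis \eqref{varx} the rescaled initial curve $X_0$ converges to $-\hat{h}^{-}_0\in{\rm LC}$, whence the rescaled hitting time $\tau$ of the strict epigraph of $X_0$ converges to the hitting time $\bm{\tau}'$ of the epigraph of $-\hat{h}^{-}_0$ by $B$, with $\1_{\tau<n}\to\1_{\bm{\tau}'<\infty}$ since $n\sim\varepsilon^{-3/2}\to\infty$. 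Applying \eqref{gA2} at the stopped data $(RW_\tau,\,n-\tau)$, the deterministic drift $-\tau$ carried by the first argument is compensated by the subscript shift $n\to n-\tau$, so the integrand converges to $\mathbf S_{-\mathbf t,-\mathbf x-\bm{\tau}'}(B(\bm{\tau}'),u)$; passing the limit inside $\E_{RW_0=y'}[\,\cdot\,]$ produces $\E_{B(0)=v}[\mathbf S_{-\mathbf t,-\mathbf x-\bm{\tau}'}(B(\bm{\tau}'),u)\1_{\bm{\tau}'<\infty}]=\mathbf S^{{\rm epi}(-\hat{h}^{-}_0)}_{-\mathbf t,-\mathbf x}(v,u)$, which is \eqref{gA3}.

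The hard part is this last step: justifying the interchange of $\varepsilon\to0$ with the expectation. It requires (i) the joint convergence of the stopped pair $(\tau,RW_\tau)$ to $(\bm{\tau}',B(\bm{\tau}'))$, which in turn needs the hitting-time functional of the epigraph to be almost surely continuous under the limiting Brownian measure --- valid for $\hat{h}_0\in{\rm UC}$ outside a null set of degenerate boundary configurations --- and (ii) a domination bound on $\bar S^{\rm geo}_{-t,n-\tau}(RW_\tau,z)$ uniform in the stopping data and in $\varepsilon$, obtained from uniform steepest-descent estimates on the kernel; controlling the tail where $\tau$ is large (the walk travels far before hitting) is the delicate point. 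In line with the Remark following Theorem~\ref{special}, I would establish only pointwise convergence of the kernels and leave the upgrade to trace-class convergence, which would complete the proof, to the methods of~\cite{Matetski,Remenik,Simon}.
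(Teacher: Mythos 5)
Your proposal follows essentially the same route as the paper's proof: a steepest-descent expansion around the degenerate (triple) saddle at $w=\tfrac12$ for \eqref{gA1}--\eqref{gA2} (the paper implements it via the substitution $w=\tfrac12(1-\varepsilon^{1/2}y)$, splitting the contour $C_\varepsilon$ into an Airy part $\langle_\varepsilon$ and a remainder $C^{\frac{\pi}{3}}_{\varepsilon}$ killed by the bound $\log(1+x)<x$), followed by Donsker's invariance principle and convergence of the epigraph hitting time for \eqref{gA3}, with the interchange of limit and expectation deferred to Proposition 3.2 of~\cite{Matetski} exactly as you indicate. The one detail to fix is your drift compensation $RW_m+m$: the walk whose transition kernel is $Q(x,y)=2^{y-x}\1_{x>y}$ takes jumps $\geq 1$ of mean $2$ and variance $2$, so the correct rescaling is $\mathbf{B}^{\varepsilon}(x)=\varepsilon^{\frac{1}{2}}(RW_{\varepsilon^{-1}x}+2\varepsilon^{-1}x-1)$ as in the paper, which is what matches the shift $2\varepsilon^{-1}\mathbf{x}$ appearing in~\eqref{varx}.
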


\section{Distribution function of the TASEP}
\label{s3}
\subsection{Transition probabilities}
\label{s31}
Let 
\begin{equation*}
\Omega_N=\{\vec{x}=(x_N,x_{N-1},\cdots,x_1)\in\Z^N: x_N<\dots<x_2<x_1\}
\end{equation*}
be the Weyl chamber, whose elements express the 
particle positions of the TASEPs.

The main object of this subsection is the transition probability 
of  the TASEP: For $\vec{x},\vec{y}\in\Omega_N$, we define
\begin{equation}
G_t(x_N, \dots, x_1)=\mathbb{P}(X_t=\vec{x}|X_0=\vec{y}),
\end{equation}
which means the probability that at time $t$ the particles are at positions $x_N<\dots<x_2<x_1$ provided that initially they are at 
positions $y_N<\dots<y_2<y_1$.

For all the three types of the TASEPs introduced in Sec.~\ref{cTASEP}-\ref{aTASEP}, the transition probabilities are obtained using Bethe ansatz (See \cite{Schutz}) and represented as determinants.

First, we give the result of the continuous time TASEP introduced in
Sec.~\ref{cTASEP}.
\begin{lemma}{\rm (\cite{Schutz})}\\
\label{MR1}
For the continuous time TASEP with $N\in\{1,2,3,\dots\}$ particles 
and rate $\g\ge 0$ introduced in Sec.\ref{cTASEP}, the 
transition probability has the following determinantal form
\begin{equation}
G^{(\gamma)}_t(x_N,\dots, x_1)=\det[F_{i-j}^{(\gamma)}(x_{N+1-i}-y_{N+1-j})]_{1\leq i,j\leq N}
\end{equation}
with  
\begin{equation}
F_n^{(\gamma)}(x,t)=\frac{(-1)^n}{2\pi i}\oint_{\Gamma_{0,1}}dw\frac{(1-w)^{-n}}{w^{x-n+1}}e^{\gamma t (w-1)}
\end{equation}
where $\Gamma_{0,1}$ is any simple loop oriented anticlockwise which includes $w=0$ and $w=1$.
\end{lemma}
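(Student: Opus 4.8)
The plan is to regard the right-hand side as a function $G_t(\vec x)$ defined for all $\vec x\in\Z^N$, to check that it solves the forward (master) equation of the continuous time TASEP on the Weyl chamber $\Omega_N$ with initial datum $G_0(\vec x)=\prod_{i}\delta_{x_i,y_i}$, and then to invoke uniqueness of the solution of the master equation to conclude that $G_t$ is the transition probability. The whole argument reduces to two one-variable identities for the kernel, both read off the contour integral: the time identity
\begin{equation*}
\partial_t F^{(\gamma)}_n(x,t)=\gamma\big(F^{(\gamma)}_n(x-1,t)-F^{(\gamma)}_n(x,t)\big),
\end{equation*}
which follows from $\partial_t e^{\gamma t(w-1)}=\gamma(w-1)e^{\gamma t(w-1)}$ and $w^{-(x-n)}-w^{-(x-n+1)}=(w-1)\,w^{-(x-n+1)}$; and the index-shift identity
\begin{equation*}
F^{(\gamma)}_{n-1}(x,t)=F^{(\gamma)}_n(x,t)-F^{(\gamma)}_n(x+1,t),
\end{equation*}
which follows from $(1-w)^{-n}(w-1)=-(1-w)^{-(n-1)}$ together with the telescoping of the power of $w$.

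First I would establish the free equation. In the matrix $[F^{(\gamma)}_{i-j}(x_{N+1-i}-y_{N+1-j},t)]$ only the $i$-th row depends on the position variable $x_{N+1-i}$, so differentiating the determinant row by row and applying the time identity gives, after relabelling $m=N+1-i$,
\begin{equation*}
\partial_t G_t(\vec x)=\gamma\sum_{m=1}^{N}\big[G_t(\dots,x_m-1,\dots)-G_t(\vec x)\big],
\end{equation*}
the master equation of $N$ \emph{free} particles each hopping to the right at rate $\gamma$, valid on all of $\Z^N$.

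Next I would reconcile this with the genuine TASEP equation on $\Omega_N$. The two equations agree except that, for every adjacent pair with $x_m=x_{m+1}+1$, the free equation carries a spurious gain term $\gamma\,G_t(\dots,x_{m+1},x_{m+1},\dots)$ (a jump onto an occupied site) together with a spurious loss term $-\gamma\,G_t(\vec x)$; after reindexing, these run over the same pairs, so they cancel precisely when the boundary condition
\begin{equation*}
G_t(\dots,x_{m+1},x_{m+1},\dots)=G_t(\dots,x_{m+1}+1,x_{m+1},\dots)
\end{equation*}
holds, the two coordinates occupying slots $m$ and $m+1$. Here the index-shift identity does the work: writing $a=x_{m+1}$ and $i=N+1-m$, the two configurations differ only in row $i$, and by multilinearity their difference is the determinant whose $i$-th row has entries $F^{(\gamma)}_{i-j}(a-y_{N+1-j},t)-F^{(\gamma)}_{i-j}(a+1-y_{N+1-j},t)=F^{(\gamma)}_{(i-1)-j}(a-y_{N+1-j},t)$, i.e. a copy of row $i-1$; hence the determinant vanishes and the boundary condition holds automatically.

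Finally, at $t=0$ the exponential factor is $1$ and a residue computation gives $F^{(\gamma)}_0(x,0)=\delta_{x,0}$, from which a direct evaluation of the determinant on $\Omega_N$ yields $G_0(\vec x)=\prod_i\delta_{x_i,y_i}$. Since the forward equation with prescribed initial data has a unique solution in the relevant class, $G_t$ coincides with the transition probability. I expect the middle step to be the main obstacle: one must identify exactly the boundary condition forced by the exclusion rule and recognise that the index-shift identity is precisely what makes the determinant meet it, after which the ``two equal rows'' mechanism finishes the cancellation; the $t=0$ residue bookkeeping and the appeal to uniqueness are routine.
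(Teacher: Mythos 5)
Your proposal is correct and follows essentially the same route as the cited reference of Sch\"utz (and as the paper's own proofs of the analogous discrete-time Lemmas~\ref{MR2} and~\ref{MR3}): verify the free master equation via the time-derivative identity, enforce the exclusion constraint through the boundary condition $G_t(\dots,x,x,\dots)=G_t(\dots,x+1,x,\dots)$, which the index-shift identity reduces to a determinant with two equal rows, and match the initial condition. The paper itself only cites \cite{Schutz} for this lemma, and your two kernel identities are exactly the continuous-time counterparts of the pair of relations checked in the proof of Lemma~\ref{MR2}.
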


Next we introduce the result on the discrete time Bernoulli TASEP as 
follows.
\begin{lemma}
\label{MR2}
For the discrete time Bernoulli TASEP with $N\in\{1,2,\dots\}$ particles
and parameters $\b_i\ge 0,~i=1,2,\dots,t$ introduced in Sec.~\ref{bTASEP} , the transition probability has the following determinantal form
\begin{equation}
\label{bTASEPgreen}
G^{(\bm{\b})}_t(x_N, \dots, x_1)=\det[F_{i-j}^{(\bm{\b})}(x_{N+1-i}-y_{N+1-j},t)]_{1\leq i,j\leq N}
\end{equation}
with  
\begin{equation}
F_n^{(\bm{\b})}(x,t)=\frac{(-1)^n}{2\pi i}\oint_{\Gamma_{0,1}}dw\frac{(1-w)^{-n}}{w^{x-n+1}}\prod_{j=1}^{t}\frac{1+\beta_j w}{1+\b_j}
\end{equation}
where $\Gamma_{0,1}$ is any simple loop oriented anticlockwise which includes $w=0$ and $w=1$.
\end{lemma}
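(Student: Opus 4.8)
The plan is to prove the formula by induction on the number of discrete time steps $t$, viewing $G^{(\bm{\b})}_t$ as the solution of the one-step forward (master) equation of the sequential Bernoulli dynamics of Sec.~\ref{bTASEP} with initial datum $G^{(\bm{\b})}_0(\vec{x})=\delta_{\vec{x},\vec{y}}$. Since the parameters $\b_j$ enter the claimed kernel only through the symbol $\Phi_t(w):=\prod_{j=1}^{t}\frac{1+\b_j w}{1+\b_j}$, which factorizes as $\Phi_t(w)=\frac{1+\b_t w}{1+\b_t}\,\Phi_{t-1}(w)$, it suffices to establish two facts: that the determinantal form is preserved by a single update step, and that one step with parameter $\b_t$ acts on every contour integrand exactly by multiplication by $\frac{1+\b_t w}{1+\b_t}$. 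For the base case $t=0$ the product is empty, so the entries $F^{(\bm{\b})}_n(x,0)$ coincide with the $t=0$ specialization of the propagators in Lemma~\ref{MR1}, and the determinant collapses to $\delta_{\vec{x},\vec{y}}$ by the same contour identity that underlies that lemma.

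\textbf{Recursions for the propagators.} First I would record two identities for $F^{(\bm{\b})}_n$ that drive the whole argument, both immediate from inserting an extra factor into the contour integral defining $F^{(\bm{\b})}_n$. The spatial difference lowers the index,
\begin{equation*}
F^{(\bm{\b})}_n(x,t)-F^{(\bm{\b})}_n(x-1,t)=-F^{(\bm{\b})}_{n-1}(x-1,t),
\end{equation*}
obtained by factoring $(1-w)$ out of the integrand after the shift $x\mapsto x-1$; and the single-step time relation
\begin{equation*}
F^{(\bm{\b})}_n(x,t)=(1-p_t)\,F^{(\bm{\b})}_n(x,t-1)+p_t\,F^{(\bm{\b})}_n(x-1,t-1),
\end{equation*}
coming from $\frac{1+\b_t w}{1+\b_t}=(1-p_t)+p_t w$ together with $p_t=\b_t/(1+\b_t)$ as in \eqref{pbeta}. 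The second identity is precisely the one-particle Bernoulli update, and it shows that a free (non-interacting) step multiplies each entry's symbol by the correct factor.

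\textbf{Verification of the update equation.} With these in hand I would check that $\det[F^{(\bm{\b})}_{i-j}(x_{N+1-i}-y_{N+1-j},t)]$ satisfies the one-step forward equation. Away from the boundary, i.e.\ when no two coordinates are adjacent, the update factorizes over particles, and applying the single-step time relation column by column reproduces the evolution together with the extra factor $\frac{1+\b_t w}{1+\b_t}$; the crucial point here is the multilinearity of the determinant in its columns. At the boundary, where the exclusion rule forbids the configuration $x_{i+1}=x_i-1$ or forces a block to advance together, the naive columnwise update produces spurious terms. I would show these cancel by using the spatial-difference recursion to rewrite the offending entries, after which two columns of the determinant become proportional and the determinant vanishes by antisymmetry --- exactly the mechanism Sch\"utz uses in the continuous-time case~\cite{Schutz}. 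Combining the interior and boundary computations shows the determinant with $t$ factors is produced from the one with $t-1$ factors by a single genuine dynamical step, which closes the induction.

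\textbf{Main obstacle.} The delicate point, absent in the continuous-time setting of Lemma~\ref{MR1}, is the sequential right-to-left update: within one time step an entire block of consecutive occupied sites can move together, so the one-step transition operator is \emph{not} a tensor product of independent single-particle kernels. The real work is to encode these correlated block moves and to verify that the off-diagonal entries $F^{(\bm{\b})}_{i-j}$ with $i\neq j$ account for them precisely. I expect the cleanest route is to perform the boundary cancellation at the level of the contour integrals, where block moves correspond to telescoping sums of the factor $\frac{1+\b_t w}{1+\b_t}$ against powers of $w$, rather than through a direct combinatorial enumeration of the admissible jumps.
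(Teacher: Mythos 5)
Your two driving identities are exactly the ones the paper checks: your spatial relation $F^{(\bm{\b})}_n(x,t)-F^{(\bm{\b})}_n(x-1,t)=-F^{(\bm{\b})}_{n-1}(x-1,t)$ is the paper's $F^{(\bm{\b})}_{n-1}(x,t)=F^{(\bm{\b})}_n(x,t)-F^{(\bm{\b})}_n(x+1,t)$ after the shift $x\mapsto x-1$, and your one-step time relation is identical to the paper's once $p_{t+1}=\b_{t+1}/(1+\b_{t+1})$ is used. The difference is in how the argument is closed. The paper observes that the proofs of the time-homogeneous formula in \cite{Borodin} and \cite{Rakos} use no property of the entries other than these two recursions, so verifying them for the inhomogeneous symbol $\prod_j(1+\b_jw)/(1+\b_j)$ finishes the proof by citation. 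You instead propose to re-run the entire Sch\"utz-type induction on $t$ from scratch: free columnwise update by multilinearity in the interior, plus cancellation of the exclusion/block-move corrections at the boundary. That route is sound and is essentially what the cited references do (and what this paper itself carries out in full for the geometric case in Lemma~\ref{MR3} and Appendices~A--B); it buys self-containedness at the cost of redoing the hardest computation.

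That hardest computation is also where your proposal stops short. For the backward-sequential update the one-step master equation is genuinely not a product over particles: when the final configuration has $x_{i}=x_{i+1}+1$ the $(i+1)$st particle is blocked with probability one rather than $1-p_t$, and entire blocks can advance together, so the naive columnwise application of the time relation overcounts and undercounts specific terms. You correctly identify this as the main obstacle, but you only state that you ``expect'' the telescoping against $\tfrac{1+\b_tw}{1+\b_t}$ to produce the cancellation; that expectation is precisely the content that the paper's citation supplies. To turn the proposal into a complete proof you would need to write out the sequential-update analogue of the paper's equations \eqref{kfeqn}--\eqref{ffb}, i.e.\ isolate the boundary condition the determinant must satisfy at adjacent configurations and show it follows from the index-lowering relation; alternatively, adopt the paper's shortcut and invoke the homogeneous-case proofs, which is legitimate because they depend only on the two recursions you have already verified.
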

\begin{proof}
This determinantal formula has been obtained for the time homogeneous 
case $\displaystyle\beta_1=\beta_2=\cdots=\frac{p}{1-p}$ in \cite{Borodin} and~\cite{Rakos}.
If we confirm that the following two equations hold, one easily find that the result can be extended to the time inhomogeneous
case:
\begin{equation}
F^{(\bm{\b})}_n(x, t+1)=\frac{1}{1+\b_{t+1}}F^{(\bm{\b})}_n(x, t)+\frac{\b_{t+1}}{1+\b_{t+1}}F^{(\bm{\b})}_n(x-1, t)
\end{equation}
and
\begin{equation}
F^{(\bm{\b})}_{n-1}(x, t)=F^{(\bm{\b})}_n(x, t)-F^{(\bm{\b})}_n(x+1, t).
\end{equation}
It is easy to see that the above two equations hold.
\end{proof}
We also give the result on  the discrete time geometric TASEP 
introduced in Sec.\ref{aTASEP}.
\begin{lemma}
\label{MR3}
For the discrete time geometric TASEP with $N\in\{1,2,\dots\}$ particles
and parameters $0\le \a_i\le 1,~i=1,2,\dots,t$ introduced in Sec.~\ref{aTASEP} , the transition probability has the following determinantal form
\begin{equation}
\label{ff1}
G^{(\bm{\a})}_t(x_N,\dots,x_1)=\det[F_{i-j}^{(\bm{\a})}(x_{N+1-i}-y_{N+1-j}, t)]_{1\leq i,j \leq N}
\end{equation}
with
\begin{equation*}
\displaystyle F_n^{(\bm{\a})}(x, t)=\frac{(-1)^n}{2\pi i}\oint_{\Gamma_{0,1}} dw \frac{(1-w)^{-n}}{w^{x-n+1}}\prod_{j=1}^{t}\frac{1-\alpha_j}{1-\alpha_j w}
\end{equation*}
where $\Gamma_{0,1}$ is any simple loop oriented anticlockwise which includes $w=0$ and $w=1$.
\end{lemma}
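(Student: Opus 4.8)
The plan is to follow Sch\"utz's Bethe-ansatz verification, exactly paralleling the proof of Lemma~\ref{MR2}: I would argue by induction on $t$, reducing the claim to two scalar identities for the kernel $F_n^{(\bm{\a})}$. Writing $P^{(\alpha)}$ for the one-step transition operator of the parallel geometric update with parameter $\alpha$ and $N$ particles, the Markov property gives $G^{(\bm{\a})}_{t+1}=P^{(\alpha_{t+1})}G^{(\bm{\a})}_{t}$, so it suffices to show that $P^{(\alpha_{t+1})}$ maps the rank-$N$ determinant $\det[F^{(\bm{\a})}_{i-j}(\,\cdot\,,t)]$ to the same determinant with $t$ replaced by $t+1$, together with the correct initial condition at $t=0$. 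As in Lemma~\ref{MR2}, this reduces to checking (i) a one-step time recurrence encoding the geometric jump, and (ii) a spatial recursion that enforces the exclusion constraint.

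For (ii), note that
\begin{equation*}
F^{(\bm{\a})}_{n-1}(x,t)=F^{(\bm{\a})}_{n}(x,t)-F^{(\bm{\a})}_{n}(x+1,t),
\end{equation*}
which follows from the factor $(w-1)/w$ produced by the difference on the right and is \emph{independent} of the product weight $\prod_{j}\tfrac{1-\alpha_j}{1-\alpha_j w}$; hence it is inherited verbatim from the continuous-time and Bernoulli cases. For (i), expanding $\tfrac{1}{1-\alpha_{t+1}w}=\sum_{k\ge 0}\alpha_{t+1}^{k}w^{k}$ inside the contour integral and using that multiplication by $w^{k}$ shifts $x\mapsto x-k$, I obtain the untruncated geometric one-step evolution
\begin{equation*}
F^{(\bm{\a})}_{n}(x,t+1)=(1-\alpha_{t+1})\sum_{k\ge 0}\alpha_{t+1}^{k}\,F^{(\bm{\a})}_{n}(x-k,t),
\end{equation*}
which is the free (interaction-blind) single-particle geometric kernel; the only care needed is to keep $\Gamma_{0,1}$ inside $|w|<1/\alpha_{t+1}$ so that the series may be integrated term by term.

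Granting (i) and (ii), the free part of the master equation follows from the multilinearity of the determinant: the free $N$-particle operator factorizes over particles, acts row by row on the entries, and reproduces $\det[F^{(\bm{\a})}_{i-j}(\,\cdot\,,t+1)]$ via the time recurrence. The genuine content, and the step I expect to be the main obstacle, is the \emph{truncation}: the dynamics~\eqref{geo} caps the jump of the $i$th particle at $a_{i-1}-a_i-1$ (it may not reach the old site of the particle ahead), whereas the recurrence in (i) propagates the untruncated geometric law. The point to verify is that the determinant structure automatically reinstates the truncation. Concretely, the telescoping identity $\sum_{a\ge M}(1-\alpha)\alpha^{a}=\alpha^{M}$ shows that the excess mass of the free jump beyond the boundary is exactly the value $\alpha^{M}$ prescribed at $a=a_{i-1}-a_i-1$ in~\eqref{geo}; one must then show, using the antisymmetry of the determinant together with the spatial recursion (ii), that the order-violating contributions (where a particle would land at or past the neighbour's site) cancel in pairs. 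This boundary computation for multi-site \emph{parallel} jumps is the model-specific heart of the argument and is where the two identities interlock.

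Finally, the base case is routine: at $t=0$ the empty product gives $F^{(\bm{\a})}_{n}(x,0)=\tfrac{(-1)^n}{2\pi i}\oint_{\Gamma_{0,1}}\tfrac{(1-w)^{-n}}{w^{x-n+1}}\,dw$, which coincides with the $t=0$ kernels of Lemmas~\ref{MR1} and~\ref{MR2}; for these the Sch\"utz determinant collapses to $\prod_{i}\delta_{x_i,y_i}$ by the standard computation. Induction on $t$ then yields~\eqref{ff1}. Alternatively, one may bypass the base case by invoking the known time-homogeneous geometric formula and using (i)--(ii) only to pass to the inhomogeneous parameters $\{\alpha_j\}$, exactly as in the proof of Lemma~\ref{MR2}.
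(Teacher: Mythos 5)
Your overall strategy is the same as the paper's (a Sch\"utz-type verification reducing the Kolmogorov forward equation to identities for the single-particle kernel $F^{(\bm\a)}_n$), and the pieces you do verify are correct: the free one-step recurrence $F^{(\bm\a)}_n(x,t+1)=(1-\a_{t+1})\sum_{k\ge0}\a_{t+1}^kF^{(\bm\a)}_n(x-k,t)$ is exactly the paper's \eqref{ff3}, the pure spatial recursion $F^{(\bm\a)}_{n-1}(x,t)=F^{(\bm\a)}_n(x,t)-F^{(\bm\a)}_n(x+1,t)$ does hold for this kernel, and your base case is fine. However, there is a genuine gap: the step you yourself flag as ``the model-specific heart of the argument'' --- showing that the determinant automatically reinstates the truncation of the parallel geometric jumps --- is precisely the nontrivial content of the lemma, and you leave it at ``one must then show \dots that the order-violating contributions cancel in pairs'' without formulating the identity to be proved or proving it. In the paper this occupies most of the proof: one must first show that the true forward equation \eqref{kfeqn}, whose right-hand side is a sum over all blocking patterns $\mu\subset\{1,\dots,N-1\}$, is equivalent to the free evolution \eqref{ff2} together with the boundary conditions \eqref{ffb}, namely $\sum_{m,n\ge0}(1-\a_{t+1})\a_{t+1}^{m+n}G^{(\bm\a)}_t(\dots,x_k-m-1,x_k-n,\dots)=\sum_{m\ge0}\a_{t+1}^mG^{(\bm\a)}_t(\dots,x_k-m-1,x_k,\dots)$; this equivalence is itself a nontrivial telescoping induction over the particle labels (Appendix~\ref{appendixB}), which your proposal does not address.

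Moreover, the tool you propose for the cancellation is not the one that actually does the job. The paper disposes of \eqref{ffb} by rewriting it as the vanishing of a determinant with two proportional rows, and the identity that makes the rows proportional is the \emph{mixed} space-time recursion \eqref{ff5}, $F^{(\bm\a)}_{n-1}(x-1,t+1)=\frac{1-\a_{t+1}}{\a_{t+1}}\bigl(F^{(\bm\a)}_n(x,t+1)-F^{(\bm\a)}_n(x,t)\bigr)$, not the purely spatial recursion you inherit from the Bernoulli case. A pairwise ``reflection'' cancellation of order-violating terms, which works for single-step exclusion dynamics, does not transfer verbatim to multi-site parallel jumps, where a whole geometric tail of overshooting moves must be resummed against the boundary weight $\a_{t+1}^{a_{i-1}-a_i-1}$; your telescoping observation $\sum_{a\ge M}(1-\a)\a^a=\a^M$ is the right starting point but is only the first line of that computation. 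As written, the proposal is a correct outline with the decisive step missing; to complete it you would need to (a) state and prove the reduction of \eqref{kfeqn} to \eqref{ff2} and \eqref{ffb}, and (b) verify \eqref{ffb} for the determinant ansatz, for which the mixed recursion \eqref{ff5} (or an equivalent identity) appears unavoidable.
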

\begin{proof}
We will check that the determinantal representation~\eqref{ff1} satisfies
the Kolmogorov forward equation
\begin{align}
\label{kfeqn}
&~G^{(\bm{\a})}_{t+1}(x_N,\dots,x_1)
=
\hspace{-3mm}
\sum_{\mu\subset\{1, \dots, N-1\}}
\hspace{-3mm}(1-\a_{t+1})^{|\bar{\mu}|+1}
\prod_{i\in\overline{\mu}\cup\{N\}}\sum_{a_i=0}^{k_i-2}\a^{a_i}_{t+1}
\cdot
\prod_{j\in\mu}\a^{k_j-1}_{t+1}
\cdot
G^{(\bm{\a})}_t \left(\vec{x}^{(\mu)}\right)
\end{align} 
where $\mu$ can take the empty set $\phi$, 
$\bar{\mu}:=\{1,\dots, N-1\}\setminus\mu$,
$|\bar{\mu}|$ means the number of elements in $\bar{\mu}$,
and we define $k_i$
and
$\vec{x}^{(\mu)}:=(x^{\mu}_N, \dots, x^{\mu}_1)$ by
\begin{equation}
k_i=
\begin{cases}
x_i-x_{i+1} & \text{for $i=1,\dots, N-1$},
\\
\infty & \text{for $i=N$},
\end{cases}
~~
x^{\mu}_i=
\begin{cases}
x_{i+1}+1 & \text{for $i\in\mu$,}\\
x_i-a_i & \text{for $i\in\overline{\mu}\cup\{N\}$.}
\end{cases}
\end{equation}
RHS in~\eqref{kfeqn} consists of $2^{N-1}$ terms and each element $j$ in the
subset $\mu$ represents the label of the particle which is on $x_{j+1}+1$ at time $t$.
Taking the hopping probability~\eqref{geo} in the geometric TASEP into account,
we see that when  $j\in\mu$, we should assign the weight 
$\a_{t+1}^{\sharp\text{jump}}$ without the factor $1-\a_{t+1}$
for the jump of the $j+1$th particle.  
Thus for the jumps of the $N-|\mu|=|\bar{\mu}|+1$ particles, 
we put the factor $(1-\a_{t+1})^{|\bar{\mu}|+1}$. 
In Appendix \ref{appendixA}, we explain~\eqref{kfeqn} in the case of $N=3$.
We see that~\eqref{kfeqn} is equivalent to the following two conditions
\begin{align}
\label{ff2}
&\displaystyle G^{(\bm{\a})}_{t+1}(x_N, \dots, x_1)
=
\sum_{a_1, \dots, a_n \in\{0, \dots, \infty\}} (1-\alpha_{t+1})^N \alpha_{t+1}^{a_1+ \dots+a_N} G^{(\bm{\a})}_t(x_N-a_N, \dots, x_1-a_1)
\\
\label{ffb}
&\sum_{m, n=0}^{\infty}(1-\alpha_{t+1})\alpha_{t+1}^{m+n}G^{(\bm{\a})}_t(x_N, \dots, x_{k}-m-1, x_{k}-n, x_{k-1}, \dots, x_1)
\notag
\\
&\hspace{5cm}
=\sum_{m=0}^{\infty}\alpha_{t+1}^mG^{(\bm{\a})}_t(x_N, \dots, x_{k}-m-1, x_{k}, x_{k-1}, \dots, x_1)
\end{align}
for $k=1,\dots, N-1$.
In Appendix \ref{appendixB} we show that~\eqref{ff2} and~\eqref{ffb} imply~\eqref{kfeqn}.
Now we will check~\eqref{ff2} and~\eqref{ffb}.
For convenience, we put $F^j_n(x, t)=F^{(\bm{\a})}_n(x-y_{N+1-j}, t)$.
Inserting (\ref{ff1}) into RHS of (\ref{ff2}) 
and using the multilinearity of the determinant, we find that
RHS of (\ref{ff2}) becomes
\begin{equation}
\label{rhsrw}
\begin{split}
&~\sum_{a_1, \dots, a_n \in\{0, \dots, \infty\}} (1-\alpha_{t+1})^N \alpha_{t+1}^{a_1+ \dots+a_N} \det[F^j_{i-j}(x_{N+1-i}-a_{N+1-j}, t)]_{1\leq i,j \leq N}\\
&=\det\left[(1-\alpha_{t+1})\sum_{a_{N+1-i}=0}^{\infty} \alpha_{t+1}^{a_{N+1-i}}F^j_{i-j}(x_{N+1-i}-a_{N+1-j}, t)\right]_{1\leq i,j \leq N}.
\end{split}
\end{equation}
Thus we see that if the functions $F^{(\bm{\a})}_n$ satisfies
\begin{equation}
\label{ff3}
\displaystyle F^{(\bm{\a})}_n(x, t+1)=\sum_{y=0}^{\infty} \alpha_{t+1}^y(1-\alpha_{t+1}) F^{(\bm{\a})}_n(x-y, t),
\end{equation}
then~\eqref{rhsrw} is equal to LHS of~\eqref{ff2}
$\displaystyle G^{(\bm{\a})}_{t+1}(x_N, \dots, x_1)=\det\left[F^j_{i-j}(x_{N+1-i}, t+1)\right]_{1\leq i,j \leq N}$,

We also consider the condition (\ref{ffb}). It can be written as 
\begin{equation}
\label{ff4}
0=\det
\begin{bmatrix}
\vdots\\
\frac{1}{1-\alpha_{t+1}}F^j_{N-k-j}(x_{k}-1, t+1)\\
F^j_{N+1-k-j}(x_{k}, t+1)-F^j_{N+1-k-j}(x_{k}, t)\\
\vdots
\end{bmatrix}_{1\leq j \leq N.}
\end{equation}
One easily sees that it holds if
the functions $F^{(\bm{\a})}_n$ satisfy
\begin{equation}
\label{ff5}
F^{(\bm{\a})}_{n-1}(x-1, t+1)=c(F^{(\bm{\a})}_n(x, t+1)-F^{(\bm{\a})}_n(x, t))
\end{equation}
for arbitrary $c$. Here we choose $c=(1-\alpha_{t+1})/\alpha_{t+1}$.

Therefore the function $F^{(\bm{\a})}_n$ are determined by the two relations (\ref{ff3}) and (\ref{ff5}), as well as the initial condition
\begin{equation}
\label{ff6}
G^{(\bm{\a})}_0(x_N, \dots, x_1)=\delta_{y_N, x_N} \cdots \delta_{y_1, x_1}.
\end{equation}
$F^{(\bm{\a})}_0(x, t)$ is already determined by one particle configurations.
In fact, in this case, $G^{(\bm{\a})}_t(x)=\mathbb{P}(x(t)=x | x(0)=y)=F^{(\bm{\a})}_0(x-y, t)$.
Therefore
\begin{equation}
\displaystyle F^{(\bm{\a})}_0(x-y, t)=\frac{1}{2\pi i} \oint_{\Gamma_0} dw \frac{1}{w^{x-y+1}} \prod_{j=0}^{t}\frac{1-\alpha_j}{1-\alpha_j w}
\end{equation}
where $\Gamma_0$ is any simple loop around $0$ oriented anticlockwise.
This result is consistent with (\ref{ff3}) and (\ref{ff6}).
Denote by $\Delta$ the discrete derivative $\displaystyle\Delta_{\alpha_{t}} f(x, t):=\frac{1-\alpha_t}{\alpha_t}(f(x+1, t)-f(x+1, t-1))$.
Then by (\ref{ff5}),
\begin{equation}
F^{(\bm{\a})}_{-n}(x, t)=(-1)^n(\Delta_{\alpha_t}^n F^{(\bm{\a})}_0)(x, t)
\end{equation}
holds.
Therefore to obtain $F^{(\bm{\a})}_{-n}$ we simply apply 
\begin{equation}
\displaystyle \Delta_{\alpha_t}^n \frac{1}{w^x}\prod_{j=0}^{t}\frac{1-\alpha_j}{1-\alpha_j w}=(-1)^n\frac{(1-w)^n}{w^{x+n}}\prod_{j=0}^{t}\frac{1-\alpha_j}{1-\alpha_j w}.
\end{equation}
From the above, for $n\geq 0$,
\begin{equation}
\displaystyle F^{(\bm{\a})}_{-n}(x, t)=\frac{(-1)^n}{2\pi i}\oint_{\Gamma_0} dw \frac{(1-w)^n}{w^{x+n+1}}\prod_{j=0}^{t}\frac{1-\alpha_j}{1-\alpha_j w}.
\end{equation}
In this case, there is no pole at $w=1$, and therefore replacing $\Gamma_0$ by $\Gamma_{0,1}$ leaves the result unchanged.

For $n>0$, $F^{(\bm{\a})}_n$ is determined by the recurrence relation
\begin{equation}
\label{ff7}
\displaystyle F^{(\bm{\a})}_{n+1}(x, t)=\sum_{y\geq x} F^{(\bm{\a})}_n(y, t)
\end{equation}
together with the property that $F^{(\bm{\a})}_0(x, t)=0$ for $x$ large enough.

In order for (\ref{ff7}) to be satisfied for all $n$, we need to take the poles both at $0$ and $1$.
\end{proof}

Finally, combining the above three formulas in Lemmas~\ref{MR1}--\ref{MR3},
we obtain the transition probability of the TASEP$_{\bm{\a,\b},\g}$ introduced
in Sec.~\ref{abcTASEP}.
\begin{proposition}
\label{abcabc}
For the {\rm TASEP}$_{\bm{\a,\b},\g}$ with $N\in\{1,2,\dots\}$ particles
and parameters ${\bm \a}_{t_1}:=(\a_1,\dots,\a_{t_1})\in~[0,1]^{t_1}$, ${\bm \b}_{t_2}:=(\b_{t_1+1},\dots,\b_{t_1+t_2})\in\R_{\ge 0}^{t_2}$,
$\g>0$ and $t_3>0$ introduced in Sec.~\ref{abcTASEP} , the transition probability to $t=t_1+t_2+t_3$ 
has the following determinantal form
\begin{equation}
G^{\a, \b, \g}_{t}(x_N,\dots, x_1)=\det[F_{i-j}^{\a,\b,\g}(x_{N+1-i}-y_{N+1-j}, t)]_{1\leq i,j \leq N}
\end{equation}
with
\begin{equation*}
\displaystyle F^{\a,\b,\g}_n(x, t)=\frac{(-1)^n}{2\pi i}\oint_{\Gamma_{0,1}} dw \frac{(1-w)^{-n}}{w^{x-n+1}}f_{\alpha, \beta, \gamma}(w, t)
\end{equation*}
where $\Gamma_{0,1}$ is any simple loop oriented anticlockwise which includes $w=0$ and $w=1$ and 
\begin{align}
\label{fabc}
f_{\alpha, \beta, \gamma}(w, t)
=
 \prod_{j=1}^{t_1}\frac{1-\a_j}{1-\a_j w}\cdot\prod_{j=t_1+1}^{t_1+t_2}\frac{1+\b_j w}{1+\b_j}\cdot e^{\g t_3(w-1)}.
\end{align}
\end{proposition}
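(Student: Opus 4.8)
The plan is to identify the claimed determinant with the genuine transition probability of the mixed dynamics by verifying that it solves the correct Kolmogorov forward (master) equation step by step, rather than by directly convolving the three formulas of Lemmas~\ref{MR1}--\ref{MR3}. A direct convolution would require a semigroup identity for Sch\"utz determinants, whose proof needs one to extend the configuration sum from the Weyl chamber to all of $\Z^N$; since the entries of $\det[F_{i-j}(x_{N+1-i}-z_{N+1-j})]$ are not simply antisymmetric in the $z$'s, that extension is delicate, and I prefer to avoid it. Instead, I use that ${\rm TASEP}_{\bm{\a,\b},\g}$ runs the geometric, Bernoulli, and continuous-time dynamics on the three consecutive disjoint intervals $[0,t_1]$, $[t_1,t_1+t_2]$, $[t_1+t_2,t]$, so its transition probability is the unique solution of the master equation of the active dynamics on each interval, subject to $G_0=\delta_{\vec y}$. (That the order of the three blocks is immaterial follows from the exchangeability of the semigroups noted in Sec.~\ref{abcTASEP}, so fixing this order costs nothing.)

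The key observation is that the proposed generating function $f_{\a,\b,\g}(w,t)$ in~\eqref{fabc} factorizes exactly as the product, over all elementary time steps up to $t$, of the per-step factors: $\frac{1-\a_j}{1-\a_j w}$ at a geometric step $j\le t_1$, $\frac{1+\b_j w}{1+\b_j}$ at a Bernoulli step $t_1<j\le t_1+t_2$, and $e^{\g t_3(w-1)}$ on the continuous interval. On the level of the contour integral defining $F^{\a,\b,\g}_n$, advancing $t\to t+1$ by one factor is precisely one step of the corresponding update: multiplying $f$ by $\frac{1-\a_{t+1}}{1-\a_{t+1}w}$ is, term by term in the $w$-integral, exactly relation~\eqref{ff3} with parameter $\a_{t+1}$ (since $\sum_{y\ge0}\a_{t+1}^{y}(1-\a_{t+1})\,w^{y}=\frac{1-\a_{t+1}}{1-\a_{t+1}w}$ encodes the shift $F_n(x-y)\leftrightarrow w^{y}$); the Bernoulli factor reproduces $F_n(x,t+1)=\tfrac{1}{1+\b_{t+1}}F_n(x,t)+\tfrac{\b_{t+1}}{1+\b_{t+1}}F_n(x-1,t)$; and $e^{\g t_3(w-1)}$ solves $\partial_{t_3}F_n=\g\,(F_n(x-1)-F_n(x))$, whose symbol is $\g(w-1)$. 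Moreover the common prefactor $(1-w)^{-n}w^{-(x-n+1)}$ yields the $f$-independent index recurrence $F^{\a,\b,\g}_{n-1}(x)=F^{\a,\b,\g}_n(x)-F^{\a,\b,\g}_n(x+1)$, so both the index relation and the one-step time update demanded in the proofs of Lemmas~\ref{MR1}--\ref{MR3} hold here verbatim, the accumulated weight $f$ being carried along inertly.

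Given these two recurrences, I invoke the algebraic reductions already carried out in those proofs. For a geometric step, the implication \eqref{ff3},\eqref{ff5}$\Rightarrow$\eqref{ff2},\eqref{ffb} (the multilinearity argument of~\eqref{rhsrw}) combined with \eqref{ff2},\eqref{ffb}$\Rightarrow$\eqref{kfeqn} (Appendices~\ref{appendixA},~\ref{appendixB}) shows that $\det[F^{\a,\b,\g}_{i-j}(\cdots)]$ satisfies the geometric master equation at that step; the analogous one-line reductions hold at Bernoulli steps and on the continuous interval. These reductions are purely determinantal and do not see the factor $f$. Since $f_{\a,\b,\g}(w,0)=1$ forces $\det[F^{\a,\b,\g}_{i-j}]|_{t=0}=\delta_{\vec x,\vec y}$ as in~\eqref{ff6}, the determinant satisfies the forward equation of the mixed dynamics at every step with the correct initial data, and uniqueness of that solution identifies it with $G^{\a,\b,\g}_t$.

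I expect the main obstacle to be the continuous-time interval, where the ``one factor per step'' bookkeeping is replaced by a genuine differential equation in $t_3$. There one must check that $e^{\g t_3(w-1)}$ is the exponential symbol of the generator $L$ of Sec.~\ref{cTASEP} and that the determinant built from $F^{\a,\b,\g}_n$ solves $\partial_{t_3}G=LG$, together with matching at the junction time $t_1+t_2$ where the geometric/Bernoulli evolution has already been absorbed into the $t$-dependence of $F^{\a,\b,\g}_n$. A secondary point is to confirm that the contour $\Gamma_{0,1}$ can be kept fixed throughout: the Bernoulli and continuous factors are entire in $w$, while each geometric factor inserts a pole at $w=1/\a_j$ that must stay outside $\Gamma_{0,1}$, exactly as in Lemma~\ref{MR3}.
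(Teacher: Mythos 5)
Your proposal is correct, and it is a genuinely more explicit route than the one the paper takes. The paper's proof of Proposition~\ref{abcabc} is a single sentence asserting that the result follows ``obviously'' from Lemmas~\ref{MR1}--\ref{MR3}; the intended reading is Chapman--Kolmogorov composition of the three semigroups together with the (unstated) fact that Sch\"utz-type determinants convolve over the Weyl chamber into a single determinant whose symbol is the product of the three symbols. You deliberately avoid that convolution identity and instead verify the forward equation phase by phase: you observe that the index recurrence $F^{\a,\b,\g}_{n-1}(x)=F^{\a,\b,\g}_n(x)-F^{\a,\b,\g}_n(x+1)$ and the one-step time recurrences (the analogues of \eqref{ff3} and \eqref{ff5} for a geometric step, the two Bernoulli relations in the proof of Lemma~\ref{MR2}, and the generator relation on the continuous interval) hold for the accumulated symbol $f_{\a,\b,\g}$ of \eqref{fabc} exactly as for the pure models, because each elementary step acts multiplicatively on the symbol and the already-accumulated factor is carried along inertly; the determinantal reductions \eqref{ff2}, \eqref{ffb} $\Rightarrow$ \eqref{kfeqn} (Appendices~\ref{appendixA}, \ref{appendixB}) are blind to $f$, so the determinant solves the master equation of whichever dynamics is active, and uniqueness with $\delta_{\vec x,\vec y}$ initial data closes the argument. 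What your approach buys is that it makes rigorous precisely the point the paper waves at, without ever needing the semigroup identity for the determinants; what it costs is re-tracing the lemmas' algebra once per phase and a mild uniqueness statement for the continuous-time Kolmogorov equation (routine for finitely many particles). Your two flagged caveats --- the matching at the junction times, which is automatic since the determinant at time $t_1$ (resp.\ $t_1+t_2$) coincides with the correct initial data for the next phase, and the requirement that $\Gamma_{0,1}$ stay inside $|w|<1/\a_j$ so the geometric resummation converges --- are exactly the right points to check and are handled as in Lemma~\ref{MR3}.
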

\begin{proof}
We can obviously prove from Lemma \ref{MR1}, Lemma \ref{MR2}, and Lemma \ref{MR3}.
\end{proof}
\begin{remark}
%\textcolor{red}{We introduced the mixed TASEP because we obtained the result of the joint distribution of the particle positions in the mixed TASEP as an expanded result including the result of the joint distribution of the particle positions in the continuous time TASEP obtained by Matetski, Quastel and Remenik~\cite{Matetski}.
%Also, in order to express the transition probability of the mixed TASEP by a single determinant, we provided the mixed TASEP that is a mixture of the continuous time TASEP, the discrete time Bernoulli TASEP with sequential update and the discrete time geometric TASEP with parallel update.}

In the case of the step initial condition, $y_j=-j,~j=1,2,\dots$,
it has been known that the TASEP has a connection to the Schur 
measures and processes~\cite{IS2007,Johansson,Alisa,Jon}. It is natural to
ask the corresponding Schur measure to the TASEP$_{\bm{\a,\b},\g}$ with 
the step initial condition. Combining the findings in~\cite{Alisa,Jon}, we expect
that the position of the $N$th particle from the right is equivalent in distribution to the marginal $\l_N$ of the 
Schur measure,
\begin{align}
s_{\l}(\underbrace{1,1, \dots, 1}_{N \text{times}})s_{\l}(\rho)/Z, 
\end{align}
where $\l=(\l_1,\dots,\l_N)\in\Z_{\ge 0}$ with $\l_1\ge\dots\ge \l_N$
is a partition, $Z$ is the normalization constant,  $s_\l(x_1,\dots,x_N)$ is the Schur symmetric polynomial and $s_\l(\rho)$
is the Schur function with the Schur positive specialization $\rho$ defined by the relation
of the specialization of the complete symmetric functions $h_k,k=0,1,2\dots$
\begin{align}
\sum_{z=0}^{\i} z^k h_k(\rho)
=
\prod_{i=1}^{t_1}\frac{1}{1-z \a_i}\cdot \prod_{j=t_1+1}^{t_1+t_2}(1+z \b_j)\cdot
e^{\g t_3 z}.
\end{align}
\end{remark}
\subsection{Biorthogonal ensembles for the joint distribution functions}
\label{s32}
In the following we consider the joint distribution function of the particle positions
in the TASEP$_{\a,\b,\g}$ introduced in Sec.~\ref{abcTASEP}.
We will give a formula in terms of a Fredholm determinant whose kernel 
can be written in an explicit form.
\begin{theorem}
\label{stu}
We consider the $\rm{TASEP}_{\a,\b,\g}$ introduced in Sec.~\ref{abcTASEP}.
For $(n_1,n_2,\dots,n_m)\in \Z^m$ with $1\leq n_1<n_2<\cdots<n_m\le N$ and $(a_1,a_2,\dots,a_m)\in \Z^m$,
we have
\begin{equation}
\mathbb{P}(X_t(n_j)>a_j, j=1,\dots,m)=\mathrm{det}(I-\bar{\chi}_{\bm a} K_t \bar{\chi}_{\bm a})_{\ell^2(\{n_1,\dots,n_m\}\times\mathbb{Z})}.
\end{equation}
Here the right hand side is a Fredholm determinant with the kernel
\begin{equation}
\label{kernel1}
\displaystyle K_t(n_i, x_i; n_j, x_j)=-Q^{n_j-n_i}(x_i, x_j)\1_{n_i<n_j}+\sum_{k=1}^{n_j}\Psi^{n_i}_{n_i-k}(x_i)\Phi^{n_j}_{n_j-k}(x_j)
\end{equation}
where $Q^n(x_i,x_j)$ represents $n$-times convolution of  
$Q(x,y)=1/2^{x-y}\cdot \1_{x>y}$. The functions $\Psi^n_k(x)$
and $\Phi^n_k(x), k=0,\dots, n-1$ are defined as follows:
For $\Psi^n_k(x)$ with $k\le n-1$, we define
\begin{equation}
\displaystyle\Psi^n_k(x):=\frac{1}{2\pi i}\oint_{\Gamma_0}dw
\frac{(1-w)^k}{2^{x-X_0(n-k)}w^{x+k+1-X_0(n-k)}}
f_{\alpha, \beta, \gamma}(w, t)
\end{equation}
where $\Gamma_0$ is any positively oriented simple loop including the pole at $w=0$ and $f_{\alpha, \beta, \gamma}(w, t)$ is defined by~\eqref{fabc}.
The functions $\Phi^n_k(x), k=0,\dots, n-1$, are defined implicitly by\\
\leftline{(1) The biorthogonality relation $\displaystyle\sum_{x\in\mathbb{Z}}\Psi^n_k(x)\Phi^n_l(x)=\1_{k=l}$;}
\leftline{(2) $2^{-x}\Phi^n_k(x)$ is a polynomial of degree at most $n-1$ in $x$ for each $k$.}
\end{theorem}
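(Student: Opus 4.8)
The plan is to follow the biorthogonal-ensemble framework developed by Sasamoto and by Borodin--Ferrari--Pr\"ahofer--Sasamoto, in the form used by Matetski--Quastel--Remenik~\cite{Matetski}, taking as input the single-determinant transition probability of Proposition~\ref{abcabc}. The first step is a reduction to a finite, closed subsystem. Since $X_0(j)=\i$ for $j\le 0$, and since in the TASEP a particle is blocked only by the particle immediately to its right (the one with smaller label), the labels $1,\dots,N$ with $N:=n_m$ evolve autonomously: no particle with label $>n_m$ can ever influence any of $X_t(n_1),\dots,X_t(n_m)$. Hence I may fix $N=n_m$ with initial data $y_j=X_0(j)$ and write the multi-point distribution as a marginal of the transition probability,
\begin{equation*}
\P(X_t(n_j)>a_j,\ j=1,\dots,m)=\sum_{\substack{\vec x\in\Omega_N\\ x_{n_j}>a_j}} G^{\a,\b,\g}_t(x_N,\dots,x_1),
\end{equation*}
where the sum runs over the positions of the unobserved particles in the Weyl chamber $\Omega_N$ and over $x_{n_j}>a_j$ for the observed ones.

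Second, because $G^{\a,\b,\g}_t$ is a single determinant $\det[F^{\a,\b,\g}_{i-j}(x_{N+1-i}-y_{N+1-j},t)]$, the summation above can be carried out with the Andr\'eief/Cauchy--Binet identity together with the Eynard--Mehta theorem, which converts the marginal of such a determinantal weight into a Fredholm determinant of the form $\det(\1-\bar\chi_a K_t\bar\chi_a)$. The remaining content of the theorem is to identify the correlation kernel explicitly. For this I would match $F^{\a,\b,\g}_n$ against the reference kernel $Q(x,y)=2^{-(x-y)}\1_{x>y}$: conjugating $Q$ by $2^x$ yields a summation kernel which, up to transposition, is exactly the operator relating $F^{\a,\b,\g}_n$ to $F^{\a,\b,\g}_{n+1}$ of the type~\eqref{ff7}. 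This identification produces the first term $-Q^{n_j-n_i}(x,y)\1_{n_i<n_j}$ as the ``free'' propagator between labels and isolates the finite-rank part $\sum_k\Psi^{n_i}_{n_i-k}(x)\Phi^{n_j}_{n_j-k}(y)$ of the kernel~\eqref{kernel1}.

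Third, I would read off the functions $\Psi^n_k$ directly from the contour-integral representation of $F^{\a,\b,\g}_n$ built from $f_{\a,\b,\g}(w,t)$ in~\eqref{fabc}; the $2^{-x}$ factors appearing in $\Psi^n_k$ record the conjugation above, and the shift $X_0(n-k)$ records the dependence on the initial data. The dual family $\Phi^n_k$ is then forced by the two conditions in the statement: the biorthogonality $\sum_x\Psi^n_k(x)\Phi^n_l(x)=\1_{k=l}$ together with the requirement that $2^{-x}\Phi^n_k(x)$ be a polynomial of degree at most $n-1$. Existence and uniqueness of $\Phi^n_k$ then amount to biorthogonalizing $\{\Psi^n_k\}_{k=0}^{n-1}$ against the space $V_n=\{\,2^x p(x):\deg p\le n-1\,\}$, which I would establish by checking that the $\Psi^n_k$ are linearly independent over $V_n$ and that the associated Gram matrix $\big[\sum_x\Psi^n_k(x)\,2^x x^l\big]_{k,l}$ is invertible. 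Note that at this stage $\Phi^n_k$ is only characterized implicitly; its explicit random-walk representation is deferred to Theorem~\ref{Main}.

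The step I expect to be the main obstacle is this last one: controlling the existence of the biorthogonal family $\Phi^n_k$ and the well-definedness of the pairings $\sum_x\Psi^n_k(x)\Phi^n_l(x)$ for the mixed weight $f_{\a,\b,\g}$, which now carries the three distinct factor types (geometric, Bernoulli, continuous) simultaneously. One must verify that the contour $\Gamma_0$ can be chosen to enclose only $w=0$ while avoiding all poles introduced by $\prod_j(1-\a_j w)^{-1}$, and that the resulting lattice sums remain absolutely convergent so that the polynomiality condition stays compatible with biorthogonality in the presence of these extra factors. Once these analytic points are settled, the Eynard--Mehta computation is routine and yields the kernel~\eqref{kernel1}, completing the proof.
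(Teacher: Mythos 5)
Your proposal follows essentially the same route as the paper: the paper's proof of Theorem~\ref{stu} is likewise an appeal to the Sasamoto/Borodin--Ferrari--Pr\"ahofer--Sasamoto framework in the form of Theorem 4.3 of \cite{Quastel}, taking the single-determinant transition probability of Proposition~\ref{abcabc} as input and reducing the verification to the identities $F_{n+1}(x,t)=\sum_{y\ge x}F_n(y,t)$, $\Psi^n_k(x)=(-1)^k 2^{-(x-X_0(n-k))}F_{-k}(x-X_0(n-k),t)$ and $Q^{n-m}\Psi^n_{n-k}=\Psi^m_{m-k}$, which are exactly the matching steps you describe. The only divergence is cosmetic: you propose to settle existence of the biorthogonal family $\Phi^n_k$ via a Gram-matrix argument, whereas the paper defers this to the explicit backwards-heat-equation construction of Theorem~\ref{ta2} and Appendix~\ref{appendixC}.
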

\begin{remark}
The fact that the joint distribution of particle positions can be expressed by the Fredholm determinant is proved by \cite{Borodin} for the discrete time Bernoulli TASEP and by \cite{Sasamoto} for the continuous time TASEP.
The above result includes a generalization of the initial conditions for particle position in the results of distribution of particle position of \cite{Borodin} and \cite{Sasamoto}, and is the result when the continuous time TASEP, the discrete time Bernoulli TASEP and the discrete time geometric TASEP are mixed. 
\end{remark}
\begin{proof}
This proof can be proved in the same way as Theorem 4.3 in \cite{Quastel} by using the propositions and lemmas written in Chapter 4 of \cite{Quastel}.
Therefore, only the outline of the proof is described below. (See \cite{Quastel} for more details.)

From the proof of Proposition 3.2 in \cite{Borodin} and the proof of Theorem 2.1 in \cite{Sasamoto}, we can found that if the following three equations are satisfied, the proof can be done regardless of the form of $f_{\alpha, \beta, \gamma}(w, t)$:
\begin{equation*}
\displaystyle F_{n+1}(x, t)=\sum_{y\geq x} F_n(y, t),
\end{equation*}
\begin{equation*}
\Psi^n_{k}(x)=\frac{(-1)^{k}}{2^{x-X_0(n-k)}}F_{-k}(x-X_0(n-k), t),  \ \ \ \ \ \ \ \ \ \ k=0,\dots, n
\end{equation*}
\begin{equation}
\label{QQQ}
Q^{n-m}\Psi^n_{n-k}=\Psi^m_{m-k}.
\end{equation}
Therefore, it is sufficient for us to check the above three equations, but it is not hard to confirm that the above three equations hold.

This completes the proof.
\end{proof}
In the following, we will write $\Phi^n_k(x)$ that was not explicitly written in previous research \cite{Borodin} in an explicit form.
 
First, we prepare the tools to use.
$Q^m$ can easily be taken from definition $Q$;
\begin{equation}
\label{qex}
\displaystyle Q^m(x,y)=\frac{1}{2^{x-y}}\binom{x-y-1}{m-1}\1_{x\geq y+m}.
\end{equation}
As operators on $\ell^2(\mathbb{Z})$, $Q$ and $Q^m$ are invertible;
\begin{equation}
\label{in}
Q^{-1}(x,y)=2\cdot\1_{x=y-1}-\1_{x=y}, \ \  Q^{-m}(x,y)=(-1)^{y-x+m}2^{y-x}\binom{m}{y-x}.
\end{equation}
Now we define
\begin{equation}
\label{R}
\displaystyle R_{\alpha, \beta, \gamma, t}(x, y):=\frac{1}{2\pi i}\oint_{\Gamma_0}dw\frac{\displaystyle f_{\alpha, \beta, \gamma}(w, t)}{2^{x-y}w^{x-y+1}},
\end{equation}
where $$f_{\alpha, \beta, \gamma}(w, t)=\prod_{j=1}^{t_1}\frac{1-\a_j}{1-\a_j w}\cdot\prod_{j=t_1+1}^{t_1+t_2}\frac{1+\b_j w}{1+\b_j}\cdot e^{\g t_3(w-1)}.$$
Note that $\displaystyle\Psi^n_0=R_{\alpha, \beta, \gamma, t}\delta_{X_0(n)}$ with $\delta_y(x)=\1_{x=y}$.

Then, the following lemma holds.
\begin{lemma}
For $n\in\mathbb{Z}$,
\begin{equation}
\label{psi}
\displaystyle\Psi^n_k=R_{\alpha, \beta, \gamma, t}Q^{-k}\delta_{X_0(n-k)}.
\end{equation}
\end{lemma}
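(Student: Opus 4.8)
The plan is to establish the formula $\Psi^n_k = R_{\alpha,\beta,\gamma,t}\, Q^{-k}\delta_{X_0(n-k)}$ by unwinding the explicit integral representation of $\Psi^n_k(x)$ and recognizing the factor $(1-w)^k/w^k$ as exactly the symbol of the inverse convolution operator $Q^{-k}$. First I would recall from the statement of Theorem~\ref{stu} that
\begin{equation*}
\Psi^n_k(x)=\frac{1}{2\pi i}\oint_{\Gamma_0}dw\,\frac{(1-w)^k}{2^{x-X_0(n-k)}\,w^{x+k+1-X_0(n-k)}}\,f_{\alpha,\beta,\gamma}(w,t),
\end{equation*}
and separately, from the definition~\eqref{R}, that $R_{\alpha,\beta,\gamma,t}$ acts as convolution against the kernel whose contour integrand carries the factor $f_{\alpha,\beta,\gamma}(w,t)/(2^{x-y}w^{x-y+1})$. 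The strategy is therefore to compute the right-hand side $R_{\alpha,\beta,\gamma,t}Q^{-k}\delta_{X_0(n-k)}$ directly as a composition of operators and check it reproduces the displayed contour integral.

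The key computational step is to identify the symbol of $Q^{-k}$. Since $Q(x,y)=2^{-(x-y)}\1_{x>y}$ is a convolution operator, its action in the relevant transform variable is multiplication by a scalar; from the generating-function identity $\sum_{m\ge 1}2^{-m}w^{-m}=\frac{1}{w/\,\cdot}$ one reads off that $Q$ corresponds to the symbol $\tfrac{1}{2w-1}$ up to the conventions fixed by~\eqref{qex} and~\eqref{in}, so that $Q^{-1}$ carries the factor proportional to $(1-w)/w$ matching the explicit inverse $Q^{-1}(x,y)=2\cdot\1_{x=y-1}-\1_{x=y}$ given in~\eqref{in}. I would verify this correspondence by the simplest route: apply $Q^{-1}$ to a pure exponential (monomial in the contour variable) and confirm it multiplies by the expected factor, then iterate to get the $k$-th power producing the factor $(1-w)^k/w^k$. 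Composing with $\delta_{X_0(n-k)}$ simply evaluates the free variable at $X_0(n-k)$, supplying the shift in the exponent $x+k+1-X_0(n-k)$ and the power $2^{X_0(n-k)}$ in the denominator, and finally the left multiplication by $R_{\alpha,\beta,\gamma,t}$ inserts the factor $f_{\alpha,\beta,\gamma}(w,t)$.

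The cleanest way to organize this is by induction on $k$. The base case $k=0$ is exactly the remark already recorded in the excerpt, namely $\Psi^n_0=R_{\alpha,\beta,\gamma,t}\delta_{X_0(n)}$. For the inductive step I would use the recurrence relation~\eqref{QQQ}, $Q^{n-m}\Psi^n_{n-\ell}=\Psi^m_{m-\ell}$, which was verified in the proof of Theorem~\ref{stu} and which encodes precisely how applying a power of $Q$ relates the $\Psi$ functions at different levels; combined with the base case this propagates the claimed formula to all $k$. Alternatively, a direct contour-integral verification avoids induction entirely by substituting the kernel of $R_{\alpha,\beta,\gamma,t}$ and the explicit form of $Q^{-k}(x,y)$ from~\eqref{in} and collapsing the resulting sum via the binomial theorem $\sum_y\binom{k}{y-x}(-1)^{\cdots}2^{\cdots}=(1-w)^k$ inside the contour integral.

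I expect the main obstacle to be bookkeeping rather than any deep difficulty: one must track the factors of $2$, the sign conventions in~\eqref{in}, and the precise index shift $X_0(n)\mapsto X_0(n-k)$ so that the shifted delta function lands at the correct site. In particular, care is needed because $Q^{-k}$ acting on $\delta_{X_0(n-k)}$ spreads the delta over a range of sites, and it is the compatibility of this spreading with the $(1-w)^k$ factor in the target that must be checked carefully. Once the symbol calculus for $Q^{-1}$ is pinned down against~\eqref{in}, the rest follows by a routine manipulation of the contour integral, so the genuine content is confined to correctly matching conventions at $k=1$.
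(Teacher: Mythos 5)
Your proposal is correct, and in fact contains the paper's own argument as one of its two suggested routes. The paper's proof is exactly your ``induction'' variant collapsed to a single step: it applies the intertwining relation~\eqref{QQQ} with $m=n-k$ to get $Q^{k}\Psi^n_{k}=\Psi^{n-k}_{0}=R_{\alpha,\beta,\gamma,t}\delta_{X_0(n-k)}$, hence $\Psi^n_k=Q^{-k}R_{\alpha,\beta,\gamma,t}\delta_{X_0(n-k)}$, and then swaps $Q^{-k}$ and $R_{\alpha,\beta,\gamma,t}$ because both are convolution kernels (depending only on $x-y$) and therefore commute. If you take that route you should state this commutation step explicitly, since the order of the operators in the lemma matters. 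Your alternative route --- the direct contour-integral verification --- is genuinely different and fully self-contained: writing $R_{\alpha,\beta,\gamma,t}Q^{-k}\delta_{X_0(n-k)}(x)=\sum_{j=0}^{k}(-1)^{j+k}2^{j}\binom{k}{j}R_{\alpha,\beta,\gamma,t}(x,X_0(n-k)-j)$ and pulling the (finite) sum inside the contour integral, the binomial theorem gives $(-1)^k(1-w^{-1})^k=(1-w)^k/w^k$, which reproduces the integrand of $\Psi^n_k$ exactly; this avoids both~\eqref{QQQ} and the commutation remark at the cost of a short computation. One small caution: your heuristic identification of the symbol of $Q$ as $1/(2w-1)$ is not consistent with the conventions of~\eqref{R} (in which $Q^{-1}$ carries the factor $(1-w)/w$, so $Q$ carries $w/(1-w)$); this does not affect the argument since you anchor everything to the explicit formula $Q^{-1}(x,y)=2\cdot\1_{x=y-1}-\1_{x=y}$ from~\eqref{in}, but the intermediate claim should be dropped or corrected.
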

\begin{proof}
By (\ref{QQQ}) and (\ref{R}),
$$\displaystyle\Psi^n_k=Q^{-k}R_{\alpha, \beta, \gamma, t}\delta_{X_0(n-k)}$$
holds.

Now, note that $Q$ and $R_{\alpha, \beta, \gamma, t}$ commute, because the kernels $Q(x,y)$ and $R_{\alpha, \beta, \gamma, t}(x, y)$ only depend on $x-y$. Therefore, we obtain
\begin{equation*}
\displaystyle\Psi^n_k=R_{\alpha, \beta, \gamma, t}Q^{-k}\delta_{X_0(n-k)}.
\end{equation*}
\end{proof}
From the expression of $R_{\alpha, \beta, \gamma, t}$, we define
\begin{equation}
\label{R1}
\displaystyle R^{-1}_{\alpha, \beta, \gamma, t}(x,y):=\frac{1}{2\pi i}\oint_{\Gamma_0}dw\frac{\displaystyle f^{-1}_{\alpha, \beta, \gamma}(w, t)}{2^{x-y}w^{x-y+1}},
\end{equation}
It is not hard to check that $R_{\alpha, \beta, \gamma, t}R^{-1}_{\alpha, \beta, \gamma, t}=R^{-1}_{\alpha, \beta, \gamma, t}R_{\alpha, \beta, \gamma, t}=I$.
At this time, the following theorem holds.
\begin{theorem}
\label{ta2}
Fix $0\leq k<n$ and consider particles at $X_0(1)>X_0(2)>\cdots>X_0(n)$.\\
Let $h^n_k(l,z)$ be the unique solution to the initial-boundary value problem for the backwards heat equation
\begin{subnumcases}
{}(Q^{*})^{-1}h^n_k(l,z)=h^n_k(l+1,z) & \text{$l<k, z\in\mathbb{Z}$},\label{a1}\\
h^n_k(k,z)=2^{z-X_0(n-k)} & \text{$z\in\mathbb{Z}$},\label{b1}\\
h^n_k(l,X_0(n-l))=0 & \text{$l<k$}.\label{c1}
\end{subnumcases}
Then the functions $\Phi^n_k$ from Theorem \ref{stu} are given by
\begin{equation}
\label{yss}
\displaystyle\Phi^n_k(z)=(R^{*}_{\alpha, \beta, \gamma, t})^{-1}h^n_k(0,\cdot)(z)=\sum_{y\in\mathbb{Z}}h^n_k(0,y)R^{-1}_{\alpha, \beta, \gamma, t}(y,z).
\end{equation}
Here $Q^{*}(x,y)=Q(y,x)$ is the kernel of the adjoint of $Q$ (and likewise for $R^{*}_{\alpha, \beta, \gamma, t}$).
\end{theorem}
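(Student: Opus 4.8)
The plan is to verify that the functions defined by~\eqref{yss} satisfy the two conditions that characterise $\Phi^n_k$ in Theorem~\ref{stu}, namely the biorthogonality relation with the $\Psi^n_k$ and the property that $2^{-z}\Phi^n_k(z)$ is a polynomial of degree at most $n-1$; since these two conditions determine $\Phi^n_k$ uniquely, this suffices. The first step is to strip off the dressing operator $R:=R_{\alpha,\beta,\gamma,t}$. Writing $\bar\Psi^n_k:=Q^{-k}\delta_{X_0(n-k)}$ so that $\Psi^n_k=R\bar\Psi^n_k$ by~\eqref{psi}, and $\bar\Phi^n_k:=h^n_k(0,\cdot)$ so that $\Phi^n_k=(R^{*})^{-1}\bar\Phi^n_k$ by~\eqref{yss}, the bilinear pairing identity $\langle Rf,(R^{*})^{-1}g\rangle=\langle f,R^{*}(R^{*})^{-1}g\rangle=\langle f,g\rangle$ shows that $\sum_x\Psi^n_k(x)\Phi^n_l(x)=\sum_x\bar\Psi^n_k(x)\bar\Phi^n_l(x)$. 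Hence biorthogonality of the dressed pair is equivalent to that of the undressed one, and the $R$-dependence disappears from the first condition entirely.

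For the undressed biorthogonality I would compute, using $(Q^{-k})^{*}=(Q^{*})^{-k}$ and $\langle\delta_y,g\rangle=g(y)$, that $\sum_x\bar\Psi^n_k(x)\bar\Phi^n_l(x)=\bigl((Q^{*})^{-k}h^n_l(0,\cdot)\bigr)(X_0(n-k))$. The kernel $(Q^{*})^{-1}$ from~\eqref{in} acts as $f\mapsto 2f(\cdot-1)-f(\cdot)$. When $l\ge k$, iterating the backwards heat equation~\eqref{a1} (all intermediate levels $0,\dots,k-1$ being $<l$) gives $(Q^{*})^{-k}h^n_l(0,\cdot)=h^n_l(k,\cdot)$, which at $z=X_0(n-k)$ equals $1$ if $l=k$ by the initial condition~\eqref{b1} and $0$ if $l>k$ by the boundary condition~\eqref{c1}. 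When $l<k$, iterating~\eqref{a1} only up to level $l$ produces $h^n_l(l,\cdot)=2^{\,\cdot-X_0(n-l)}$; the remaining $k-l$ applications then vanish because $2^z$ lies in the kernel of $(Q^{*})^{-1}$, i.e. $2\cdot 2^{z-1}-2^z=0$. In every case the pairing equals $\1_{k=l}$, which is exactly the required biorthogonality.

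For the polynomial condition I would first show by downward induction on $l$ that $h^n_k(l,z)=2^z\,p_l(z)$ with $p_l$ a polynomial of degree $k-l$. Writing~\eqref{a1} as $2h^n_k(l,z-1)-h^n_k(l,z)=h^n_k(l+1,z)$ and substituting the ansatz turns it into the backward-difference relation $p_l(z-1)-p_l(z)=p_{l+1}(z)$, so passing from level $l+1$ to level $l$ is an anti-difference that raises the degree by one, while the boundary condition~\eqref{c1}, i.e. $p_l(X_0(n-l))=0$, fixes the otherwise free additive constant. Starting from the degree-$0$ polynomial at level $k$ furnished by~\eqref{b1} and descending $k$ steps to level $0$ yields $h^n_k(0,z)=2^z p_0(z)$ with $\deg p_0=k\le n-1$. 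It then remains to check that $(R^{*})^{-1}$ preserves the class $\{\,z\mapsto 2^z q(z):\deg q\le n-1\,\}$ without increasing the degree: since $R^{-1}(x,y)$ depends only on $x-y$ and is supported on $x\ge y$ by~\eqref{R1}, one finds $\Phi^n_k(z)=\sum_y h^n_k(0,y)R^{-1}(y,z)=2^z\sum_{m\ge0}[w^m]f_{\alpha,\beta,\gamma}^{-1}(w,t)\,p_0(z+m)$, a polynomial in $z$ of degree $\le\deg p_0\le n-1$ multiplied by $2^z$, so that $2^{-z}\Phi^n_k(z)$ has degree at most $n-1$.

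Finally I would record that the initial--boundary value problem~\eqref{a1}--\eqref{c1} is well posed: for each $l<k$ the relation $p_l(z-1)-p_l(z)=p_{l+1}(z)$ together with the single value~\eqref{c1} determines $h^n_k(l,\cdot)$ uniquely within the polynomial class, which legitimises the phrase ``unique solution'' in the statement. Combined with the non-degeneracy of the pairing between $\{\Psi^n_k\}_{k=0}^{n-1}$ and the $n$-dimensional space $\{2^z q(z):\deg q\le n-1\}$ (which makes the two characterising conditions of Theorem~\ref{stu} determine $\Phi^n_k$ uniquely), the two verifications above identify~\eqref{yss} with the $\Phi^n_k$ of Theorem~\ref{stu}. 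I expect the main obstacle to be the $l<k$ branch of the biorthogonality, where the cancellation hinges on the exact identity $(Q^{*})^{-1}2^z=0$, together with the bookkeeping required to push the $2^z\times(\text{polynomial})$ structure through the conjugation by $R$ while controlling the convergence of the resulting series in the appropriate function space.
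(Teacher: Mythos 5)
Your proposal is correct and follows essentially the same route as the paper's proof in Appendix~\ref{appendixC}: verify the two defining conditions of Theorem~\ref{stu}, proving the polynomial property by a level-by-level induction in which each backward step is an anti-difference raising the degree by one, and reducing biorthogonality to evaluating $(Q^{*})^{-j}h^n_\cdot(0,\cdot)$ at the boundary points, with the off-diagonal cases killed by the conditions \eqref{b1}, \eqref{c1} and the fact that $2^z$ spans $\ker(Q^{*})^{-1}$. The only cosmetic differences are your explicit ansatz $h^n_k(l,z)=2^z p_l(z)$ in place of the paper's $\hat h^n_k=2^{-x}h^n_k$ and your added remarks on well-posedness and non-degeneracy of the pairing.
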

\begin{remark}
It is not true that in general $Q^{*}h^n_k(l+1,z)=h^n_k(l,z)$.
In fact, $Q^{*}h^n_k(k,z)$ is divergent from the following.
\begin{equation*}
\begin{split}
\displaystyle Q^{*}h^n_k(k,\cdot)(z)&=\sum_{y\in\mathbb{Z}}h^n_k(k,y)Q(y,z)\\
&=\sum_{y\in\mathbb{Z}}2^{y-X_0(n-k)}\frac{1}{2^{y-z}}\1_{y>z}\\
&=\sum_{y\in\mathbb{Z}, y>z}2^{z-X_0(n-k)}\\
&=\infty.
\end{split}
\end{equation*}
\end{remark}
This proof will be given in Appendix \ref{appendixC}, which is 
almost the same as \cite{Matetski}. 

%In we prove the above theorem. 

\subsection{Representation of the TASEP kernel in terms of a hitting probability}
\label{s33}
Combining~Theorem \ref{stu} with (\ref{psi}) and (\ref{yss}), we have obtained 
the following expression of the kernel $K_t$~\eqref{kernel1},
\begin{equation}
\label{kernelkt}
\displaystyle K_t(n_i,\cdot ;n_j,\cdot)
=
-Q^{n_j-n_i}\1_{n_i<n_j}
+
R_{\alpha, \beta, \gamma, t}Q^{-n_i}G_{0,n_j}R^{-1}_{\alpha, \beta, \gamma, t}.
\end{equation}
Here $Q$, $R_{\alpha, \beta, \gamma, t}$ and $R_{\alpha, \beta, \gamma, t}^{-1}$ are given by~\eqref{qex},~\eqref{R}
and \eqref{R1} respectively and  $G_{0,n_j}$ is defined by
\begin{equation}
\label{g0n}
\displaystyle G_{0,n}(z_1, z_2)=\sum_{k=0}^{n-1}Q^{n-k}(z_1, X_0(n-k))h^n_k(0,z_2),
\end{equation}
where $h^n_k$ is the solution of (\ref{a1})-(\ref{c1}).

In this subsection, following the method in \cite{Matetski}, we further rewrite the kernel
in order to take the KPZ scaling limit. We use the fact that $h^n_k$ can be written as hitting probabilities of random walk.
Let $RW^{*}_m$ with $RW^{*}_{-1}=c$ be the position of the random walk
with Geom$[\frac{1}{2}]$ jumps strictly to the right starting from $c\in\Z$, i.e.
\begin{align*}
RW^{*}_m=c+\chi_0+\cdots+\chi_m,
\end{align*}
where $\chi_j,~j=0,1,\dots,m$ are i.i.d. random variables with $\P(\chi_j=k)=1/2^{k+1},~k\in\Z_{\ge 0}$. Note that $Q^*$ defined 
below~\eqref{yss} represents the transition kernel of the random walk:
for $m=-1,0,\dots$, we have
\begin{align}
\label{qsex}
Q^{*}(x,y)=\P(RW^{*}_{m+1}=x|RW^{*}_m=y). 
\end{align}
For $0\leq l\leq k\leq n-1$ we define the stopping times
\begin{equation}
\label{tau}
\tau^{l,n}=\min\{m\in\{l,\dots,n-1\}: RW^{*}_m>X_0(n-m)\},
\end{equation}
where we set $\min\varnothing=\infty$.

Then, we have the following.
\begin{lemma}{\rm(\cite{Matetski})}\\
\label{nobo}
For $z\leq X_0(n-l)$, the function $h^n_k$ can be written by
\begin{equation}
\label{kljp}
h^n_k(l,z)=\mathbb{P}_{RW^{*}_{l-1}=z}(\tau^{l,n}=k)
\end{equation}
which is the probability of the walk starting at $z\in\mathbb{Z}$ at time $l-1\in\mathbb{Z}$ and hitting $X_0(n-k)$ at time $k\in\mathbb{Z}$.
\end{lemma}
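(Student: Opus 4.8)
The plan is to identify the right-hand side of~\eqref{kljp} as a solution of the initial-boundary value problem~\eqref{a1}--\eqref{c1} and then invoke the uniqueness supplied by Theorem~\ref{ta2}. Set
\[
\bar h^n_k(l,z):=\mathbb{P}_{RW^*_{l-1}=z}(\tau^{l,n}=k),\qquad l\le k,\ z\le X_0(n-l),
\]
with $\tau^{l,n}$ the stopping time~\eqref{tau}. Because $h^n_k$ is the \emph{unique} solution of~\eqref{a1}--\eqref{c1}, it suffices to check that $\bar h^n_k$ obeys those three relations on the domain $z\le X_0(n-l)$; the equality $h^n_k=\bar h^n_k$ then follows at once. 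I emphasize that the operator $Q$ and the walk $RW^*$ are precisely those of~\cite{Matetski}: the model-dependent factor $f_{\alpha,\beta,\gamma}$ has been pushed entirely into $R_{\alpha,\beta,\gamma,t}$ and does not enter here. Consequently this lemma is the verbatim analogue of the hitting-probability representation in~\cite{Matetski}, and the argument below follows theirs.

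First I would treat the terminal condition~\eqref{b1}. At $l=k$ the index $m=k$ is the first one tested in~\eqref{tau}, so $\{\tau^{k,n}=k\}=\{RW^*_k>X_0(n-k)\}$. Conditioning on $RW^*_{k-1}=z$, this is the probability that a single jump of $RW^*$ carries the walk strictly past $X_0(n-k)$, and a direct summation of the geometric tail gives exactly $2^{\,z-X_0(n-k)}$ for $z\le X_0(n-k)$, which is~\eqref{b1}.

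Next I would establish the recursion~\eqref{a1} together with the boundary condition~\eqref{c1} by a first-step (strong Markov) decomposition. For $l<k$, on the event $\{\tau^{l,n}=k\}$ the walk cannot have crossed at time $l$, i.e.\ $RW^*_l\le X_0(n-l)$, and conditionally on $RW^*_l=w$ with $w\le X_0(n-l)$ the residual event is $\{\tau^{l+1,n}=k\}$. Using that $Q^*$ is the one-step transition kernel of $RW^*$~\eqref{qsex}, the Markov property then gives
\[
\bar h^n_k(l,\cdot)=Q^*\!\left[\,\1_{\{\,\cdot\,\le X_0(n-l)\}}\,\bar h^n_k(l+1,\cdot)\,\right],
\]
so that applying $(Q^*)^{-1}$ and evaluating at $z\le X_0(n-l)$ produces $(Q^*)^{-1}\bar h^n_k(l,z)=\bar h^n_k(l+1,z)$, which is~\eqref{a1}. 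The boundary relation~\eqref{c1} comes out of the same decomposition: since the increments of $RW^*$ are strictly positive, a walk started at $z=X_0(n-l)$ already satisfies $RW^*_l>X_0(n-l)$, hence $\tau^{l,n}=l<k$ and $\mathbb{P}(\tau^{l,n}=k)=0$.

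The delicate point -- and the step I expect to be the main obstacle -- is the passage from the probabilistic recursion, which carries the \emph{restricted} sum over $w\le X_0(n-l)$ encoding absorption at the boundary, to the clean inverse form~\eqref{a1}. One cannot simply drop the truncation and write $\bar h^n_k(l,\cdot)=Q^*\bar h^n_k(l+1,\cdot)$ with an unrestricted sum: as the Remark following Theorem~\ref{ta2} shows, $Q^*$ applied to the terminal profile $2^{\,\cdot\,-X_0(n-k)}$ already diverges. The argument must therefore be phrased through $(Q^*)^{-1}$ acting on the truncated profile, and one has to verify that the truncation by $\1_{\{\,\cdot\,\le X_0(n-l)\}}$ is exactly absorbed by restricting the final identity to the half-line $z\le X_0(n-l)$ -- which is precisely why~\eqref{kljp} is asserted only there.
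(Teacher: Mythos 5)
Your proposal is correct and takes essentially the same route as the paper's Appendix~\ref{appendixD}: both arguments reduce the lemma to checking that the hitting probability solves the initial--boundary value problem \eqref{a1}--\eqref{c1} and then invoke uniqueness, and both verify \eqref{a1} by computing $(Q^{*})^{-1}g(z)=2g(z-1)-g(z)$ and letting the two terms telescope down to the walk restarted one step later --- you package this cancellation through a one-step Markov decomposition, while the paper expands the full multiple sum over the intermediate positions $y_l<\dots<y_{k-1}$. Your explicit verification of \eqref{b1} and \eqref{c1}, which the paper leaves implicit, is valid provided the increments of $RW^{*}$ are read as strictly positive with $\P(\chi=j)=2^{-j}$ for $j\geq 1$ (the convention actually used in the computation \eqref{kumiawase}), rather than the $\P(\chi_j=k)=2^{-(k+1)}$, $k\in\Z_{\geq 0}$ written in Sec.~\ref{s33}.
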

\begin{remark}
Eq. (\ref{kljp}) is written in \cite{Matetski} and the proof is left to the readers as Exercise 5.17 in \cite{Quastel}.
In Appendix \ref{appendixD} we give an answer.
\end{remark}
From the memoryless property of geometric distribution we get for all $y>X_0(n-k)$,
\begin{equation}
\label{mlgeo}
\mathbb{P}_{RW^{*}_{-1}=z}(\tau^{0,n}=k,RW^{*}_k=y)=2^{X_0(n-k)-y}\mathbb{P}_{RW^{*}_{-1}=z}(\tau^{0,n}=k)
\end{equation}
and as a consequence we get for $z_2\leq X_0(n)$, $G_{0,n}(z_1, z_2)$ can be expressed as
\begin{equation}
\begin{split}
\label{grw}
\displaystyle G_{0,n}(z_1, z_2)&=\sum_{k=0}^{n-1}\mathbb{P}_{RW^{*}_{-1}=z_2}(\tau^{0,n}=k)(Q^{*})^{n-k}(X_0(n-k), z_1)\\
&=\sum_{k=0}^{n-1}\sum_{z>X_0(n-k)}\mathbb{P}_{RW^{*}_{-1}=z_2}(\tau^{0,n}=k, RW^{*}_k=z)(Q^{*})^{n-k-1}(z, z_1)\\
&=\mathbb{P}_{RW^{*}_{-1}=z_2}(\tau^{0,n}<n, RW^{*}_{n-1}=z_1),
\end{split}
\end{equation}
where in the second equality we used ~\eqref{qex} and ~\eqref{mlgeo}.
while in the third one we used~\eqref{qsex}.
Note that RHS of the above equation represents the probability for the walk starting at $z_2\in\mathbb{Z}$ at time $-1$ to end up at $z_1\in\mathbb{Z}$ after $n$ steps, having hit the curve $(X_0(n-m))_{m=0,\dots,n-1}$ in between.

The next step is to extend the region $z_2\leq X_0(n)$ in~\eqref{grw} to $z_2\in\Z$.
We begin by observing that for each fixed $y_1$ and $n\geq 1$, $2^{-y_2}Q^n(y_1, y_2)$ extends in $y_2$ to a polynomial $2^{-y_2}\bar{Q}^{(n)}(y_1, y_2)$ of degree $n-1$ with
\begin{equation}
\label{vq}
\displaystyle\bar{Q}^{(n)}(y_1, y_2)=\frac{1}{2\pi i}\oint_{\Gamma_0}dw \frac{(1+w)^{y_1-y_2-1}}{2^{y_1-y_2}w^n}.
\end{equation}
Now, for $y_1-y_2\geq 1$, we note that
\begin{equation}
\label{eqqqb}
\bar{Q}^{(n)}(y_1, y_2)=Q^n(y_1, y_2).
\end{equation}
By (\ref{in}) and (\ref{vq}), for $n>1$, we get
\begin{equation}
Q^{-1}\bar{Q}^{(n)}=\bar{Q}^{(n)}Q^{-1}=\bar{Q}^{(n-1)}.
\end{equation}
Also, we get 
\begin{equation}
Q^{-1}\bar{Q}^{(1)}=\bar{Q}^{(1)}Q^{-1}=0.
\end{equation}
\begin{remark}
We note that 
\begin{equation*}
\begin{split}
\displaystyle\bar{Q}^{(n)}\bar{Q}^{(m)}(x,y)&=\sum_{z\in\mathbb{Z}}\bar{Q}^{(n)}(x,z)\bar{Q}^{(m)}(z,y)\\
&=\sum_{z\in\mathbb{Z}}\frac{1}{2\pi i}\oint_{\Gamma_0}dw \frac{(1+w)^{x-z-1}}{2^{x-z}w^n}\cdot\frac{1}{2\pi i}\oint_{\Gamma_0}dw \frac{(1+w)^{z-y-1}}{2^{z-y}w^n}\\
&=\infty.
\end{split}
\end{equation*}
\end{remark}
Using the extension of $Q^m$, we have the following lemma.
\begin{lemma}{\rm (\cite{Matetski})}\\
\label{le}
For all $z_1, z_2\in\mathbb{Z}$, we have
\begin{equation}
\label{lemg0}
G_{0,n}(z_1,z_2)=\mathbb{E}_{RW_0=z_1}\left[\bar{Q}^{(n-\tau)}(RW_{\tau}, z_2) \1_{\tau<n}\right],
\end{equation}
where $RW_m$ and $\t$ are defined by~Definition.~\ref{defRWtau}.
\end{lemma}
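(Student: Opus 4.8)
The plan is to start from the right-walk representation \eqref{grw}, which already gives the identity for $z_2\le X_0(n)$, and then carry out two steps: a time reversal turning the forward walk $RW^*$ into the backward walk $RW$ of Definition~\ref{defRWtau}, and a polynomial extension in $z_2$ removing the restriction $z_2\le X_0(n)$.

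First I would reverse the trajectory. Recall from \eqref{grw} that for $z_2\le X_0(n)$,
\[
G_{0,n}(z_1,z_2)=\mathbb{P}_{RW^*_{-1}=z_2}\big(\tau^{0,n}<n,\ RW^*_{n-1}=z_1\big).
\]
The event $\{\tau^{0,n}<n\}=\{\exists\,m\in\{0,\dots,n-1\}:RW^*_m>X_0(n-m)\}$ is the event that the path meets the strict epigraph of the curve at some time, which is symmetric under time reversal. Since the geometric step law weights a path from $z_2$ to $z_1$ only through the total displacement $z_1-z_2$ and the number of steps, reading the trajectory backwards is a measure-preserving bijection onto the paths of $RW$ started at $RW_0=z_1$ and ending at $RW_n=z_2$; under $m\mapsto n-1-m$ the curve $(X_0(n-m))_m$ becomes $(X_0(m+1))_m$, so $\{\tau^{0,n}<n\}$ is carried to $\{\tau<n\}$ with $\tau$ as in Definition~\ref{defRWtau}. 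This yields, still for $z_2\le X_0(n)$, the identity $G_{0,n}(z_1,z_2)=\mathbb{P}_{RW_0=z_1}(\tau<n,\ RW_n=z_2)$. Decomposing at the first passage by the strong Markov property, and noting that for $z_2\le X_0(n)$ one has $z_2\le X_0(n)\le X_0(\tau+1)<RW_\tau$, so the remaining $n-\tau$ steps are a free transition equal to $\bar{Q}^{(n-\tau)}(RW_\tau,z_2)$ (which coincides with $Q^{n-\tau}$ there by \eqref{eqqqb}), I obtain
\[
G_{0,n}(z_1,z_2)=\mathbb{E}_{RW_0=z_1}\big[\bar{Q}^{(n-\tau)}(RW_\tau,z_2)\1_{\tau<n}\big],\qquad z_2\le X_0(n).
\]

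It then remains to extend this to all $z_2\in\Z$, and here the extension $\bar{Q}^{(n)}$ is exactly what is needed. For each fixed $z_1$, I would check that both sides, multiplied by $2^{-z_2}$, are polynomials in $z_2$ of degree at most $n-1$: on the right this follows from the contour formula \eqref{vq}, which gives $2^{-z_2}\bar{Q}^{(n-k)}(w,z_2)=2^{-w}\binom{w-z_2-1}{n-k-1}$, a polynomial of degree $n-k-1\le n-1$; on the left it follows from \eqref{g0n} together with the fact that $2^{-z_2}h^n_k(0,z_2)$ is a polynomial of degree at most $n-1$, a consequence of the initial-boundary value problem \eqref{a1}--\eqref{c1}. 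By \eqref{eqqqb} the two polynomials agree at the infinitely many integers $z_2\le X_0(n)$ by the previous step, and two polynomials of degree at most $n-1$ agreeing at infinitely many points must coincide, which promotes the displayed identity to all $z_2\in\Z$ and proves the lemma.

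The step I expect to be the main obstacle is making the reversal fully rigorous while tracking strict versus weak passage: one must verify that reading the trajectory backwards is measure-preserving (a detailed-balance computation for the geometric step kernel, which depends on a path only through its total displacement) and that the strict crossing $RW^*_m>X_0(n-m)$ of the forward walk transforms into the strict crossing $RW_i>X_0(i+1)$ defining $\tau$. The identification of the post-crossing propagation with $\bar{Q}^{(n-\tau)}$ rests on the memoryless structure of the geometric steps, cf.\ \eqref{mlgeo}. Once the reversal identity on $z_2\le X_0(n)$ is in place, the polynomial-extension argument is routine given the explicit form \eqref{vq} of $\bar{Q}^{(n)}$.
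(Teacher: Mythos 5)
Your proposal is correct and follows essentially the same route as the paper's own proof in Appendix~\ref{appendixE}: time reversal of the identity~\eqref{grw}, decomposition at the hitting time to identify the post-crossing propagation with $\bar{Q}^{(n-\tau)}$ via~\eqref{eqqqb}, and a polynomial-in-$z_2$ extension argument to remove the restriction $z_2\le X_0(n)$. Your write-up is in fact somewhat more careful about why the reversal is measure-preserving and about the degree bound ($n-1$ rather than the paper's loosely stated $k$), but the argument is the same.
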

Although the proof is given in Lemma 2.4 of \cite{Matetski}, in Appendix \ref{appendixE} we give its
outline for self-containedness.
Thus from~\eqref{kernelkt} and~\eqref{lemg0}, we see that 
the kernel $K_t$~\eqref{kernel1} can be expressed as
\begin{align}
\label{kerktbm}
K_t(n_1,x_1;n_2,x_2)
&=
-Q^{n_2-n_1}(x_1,x_2)\1_{n_1<n_2}
\notag
\\
&~+
\sum_{x,y\in\Z}
(R_{\alpha, \beta, \gamma, t}Q^{-n_1})(x_1,x)
\E_{RW_0=x}
\left[\bar{Q}^{(n_2-\t)}(RW_\t,y)
R^{-1}_{\alpha, \beta, \gamma,t}(y,x_2)\1_{\t<n_2}\right].
\end{align}
\subsection{Formulas for the mixed TASEP with right finite initial data: Proof of 
Theorem~\ref{Main}}
\label{s34}
To show Theorem~\ref{Main}, we have the following relations.
\begin{proposition}
\label{proppp}
\begin{align}
\label{S}
&A^{-1}_{\alpha, \beta, \gamma}(t)(R_{\alpha, \beta, \gamma, t}Q^{-n})^{*}(z_1,z_2)
=S_{-t,-n}(z_1,z_2),
\\
\label{s}
&A_{\alpha, \beta, \gamma}(t) \bar{Q}^{(n)}
R^{-1}_{\alpha, \beta, \gamma, t} (z_1,z_2)
=\bar{S}_{-t,n}(z_1,z_2).
\end{align}
Here $S_{-t,-n}(z_1,z_2)$ and $\bar{S}_{-t,n}(z_1,z_2)$ are defined 
by~\eqref{stn} and \eqref{bstn} respectively and $A_{\alpha, \beta, \gamma}(t)$
is defined by
\begin{align}
A_{\alpha, \beta, \gamma}(t):=e^{-\frac{\gamma t_3}{2}}\prod_{j=1}^{t_1}\frac{1-\alpha_j}{2-\alpha_j}\prod_{j=t_1+1}^{t_1+t_2}\frac{2+\beta_j}{1+\beta_j}.
\end{align}
\end{proposition}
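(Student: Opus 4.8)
The key observation is that all four operators appearing in \eqref{S}--\eqref{s}, namely $R_{\a,\b,\g,t}$, $R^{-1}_{\a,\b,\g,t}$, $Q^{-n}$ and $\bar{Q}^{(n)}$, are convolution operators on $\ell^2(\Z)$: each kernel depends on its two arguments only through their difference. Consequently they are simultaneously diagonalized by the generating-function (Laurent-coefficient) transform, under which composition becomes multiplication of symbols. I would set up this dictionary first: writing a convolution kernel as $K(x,y)=\frac{1}{2\pi i}\oint_{\Gamma_0}\frac{a_K(w)}{w^{x-y+1}}\,dw$, one has $a_{K_1K_2}=a_{K_1}a_{K_2}$, while adjunction is just $(K^*)(z_1,z_2)=K(z_2,z_1)$.

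\textbf{Reading off the symbols.} From \eqref{R} and \eqref{R1}, absorbing the factor $2^{-(x-y)}$ by the rescaling $w\mapsto w/2$, one obtains $a_{R}(w)=f_{\a,\b,\g}(w/2,t)$ and $a_{R^{-1}}(w)=1/f_{\a,\b,\g}(w/2,t)$. From \eqref{in} one computes $a_{Q^{-1}}(w)=(2-w)/w$, hence $a_{Q^{-n}}(w)=\bigl((2-w)/w\bigr)^{n}$, and from \eqref{vq} one gets $a_{\bar{Q}^{(n)}}(w)=\bigl(w/(2-w)\bigr)^{n}$ (the same symbol as $Q^n$). Multiplying symbols then yields the closed contour forms
\[
(R_{\a,\b,\g,t}Q^{-n})(x,y)=\frac{1}{2\pi i}\oint_{\Gamma_0}\frac{f_{\a,\b,\g}(w/2,t)\,(2-w)^{n}}{w^{x-y+n+1}}\,dw,
\]
\[
(\bar{Q}^{(n)}R^{-1}_{\a,\b,\g,t})(z_1,z_2)=\frac{1}{2\pi i}\oint_{\Gamma_0}\frac{w^{\,n+z_2-z_1-1}}{(2-w)^{n}\,f_{\a,\b,\g}(w/2,t)}\,dw .
\]
For \eqref{S} I then pass to the adjoint by swapping $z_1,z_2$.

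\textbf{Reduction to a scalar identity.} The plan is to match these with \eqref{stn} and \eqref{bstn} by an affine change of the contour variable. For \eqref{S}, substituting $w=2u$ turns $(2-w)^{n}$ into $2^{n}(1-u)^{n}$ and produces the prefactor $2^{z_2-z_1}$, reducing the integrand to exactly the shape of $S_{-t,-n}$ with $\mathfrak{F}_{\a,\b,\g}$ replaced by $f_{\a,\b,\g}(u,t)$. For \eqref{s}, the substitution $w=2(1-u)$ (that is, $u=1-w/2$) sends $(2-w)^{n}$ to $(2u)^{n}$ and $w$ to $2(1-u)$, turning the integrand into that of $\bar{S}_{-t,n}$ with $\bar{\mathfrak{F}}_{\a,\b,\g}$ replaced by $1/f_{\a,\b,\g}(1-u,t)$. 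Thus both claims reduce to the scalar relations $A^{-1}_{\a,\b,\g}(t)\,f_{\a,\b,\g}(w,t)=\mathfrak{F}_{\a,\b,\g}(w,t)$ and $A_{\a,\b,\g}(t)/f_{\a,\b,\g}(1-w,t)=\bar{\mathfrak{F}}_{\a,\b,\g}(w,t)$, which I would verify factor by factor: each geometric factor $\tfrac{1-\a_j}{1-\a_j w}$, Bernoulli factor $\tfrac{1+\b_j w}{1+\b_j}$ and the exponential $e^{\g t_3(w-1)}$ is converted, by completing the affine shift $w\mapsto w-\tfrac12$ inside the corresponding factor of $\mathfrak{F}$ or $\bar{\mathfrak{F}}$, and $A_{\a,\b,\g}(t)$ is meant to be precisely the $w$-independent normalization that results.

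\textbf{Main obstacle.} Two points are genuinely delicate. First, the symbol-multiplication step must be justified rigorously: the intermediate $\sum_{z\in\Z}$ in the composition is a bi-infinite convolution, so one has to control the supports of the two factors (here $a_R$ is analytic at $0$ while $a_{Q^{-n}}$ has a pole there) and nest the $w$-contours so the geometric resummation over $z$ converges, with the attendant pole bookkeeping at $w=0$. Second, and this is where I expect the computation to require the most care, one must track every power of $2$ generated by the rescalings $w\mapsto 2u$ and $w\mapsto 2(1-u)$ and check that it is absorbed correctly into the $2^{z_2-z_1}$ prefactor and into the constant $A_{\a,\b,\g}(t)$; the exponentials together with the half-integer shift $w-\tfrac12$ in $\mathfrak{F},\bar{\mathfrak{F}}$ make this the crux of the verification. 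In particular it is worth confirming whether $A_{\a,\b,\g}(t)$ makes each identity hold on the nose or only their product $(S_{-t,-n_i})^{*}\bar{S}_{-t,n_j}$ in $K_t$, since the factors of $2$ coming from the two substitutions enter with opposite signs and should cancel in the kernel.
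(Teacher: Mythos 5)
Your proposal is correct and follows essentially the same route as the paper's own proof in Appendix~\ref{appendixF}: there the composition $(R_{\alpha,\beta,\gamma,t}Q^{-n})^{*}$ is read off from the already-composed kernel $\Psi^n_n$ via \eqref{psi}, and $\bar{Q}^{(n)}R^{-1}_{\alpha,\beta,\gamma,t}$ is computed by exactly the double-contour resummation over $z$ that you describe, followed by the change of variables $w\mapsto -w/(1-w)$, so that both identities reduce to the same scalar relations $A^{-1}_{\alpha,\beta,\gamma}(t)f_{\alpha,\beta,\gamma}(w,t)=\mathfrak{F}_{\alpha,\beta,\gamma}(w,t)$ and $A_{\alpha,\beta,\gamma}(t)f^{-1}_{\alpha,\beta,\gamma}(1-w,t)=\bar{\mathfrak{F}}_{\alpha,\beta,\gamma}(w,t)$ that you isolate. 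Your closing caveat is well placed: the bookkeeping of powers of $2$ in the geometric and Bernoulli factors is indeed the only delicate point, and since only the product $(S_{-t,-n_i})^{*}\bar{S}^{{\rm epi}(X_0)}_{-t,n_j}$ enters $K_t$, any residual constant of the form $2^{\pm(t_1-t_2)}$ in the normalization $A_{\alpha,\beta,\gamma}(t)$ cancels in the kernel, as you anticipate.
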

We will give this proof in Appendix \ref{appendixF}.
\begin{proof}[Proof of Theorem \ref{Main}]
First, we consider right finite initial data.
If $X_0(1)<\infty$ then we are in the setting of the above section.
Formula (\ref{Kt}) follow directly from above definition.

Now, we check (\ref{pro}).
To check (\ref{pro}), it is enough to check 
\begin{equation*}
Q^{n_j-n_i}K_t^{(n_j)}=(S_{-t, -n_i})^{*}\bar{S}^{\rm epi(X_0)}_{-t,n_j},
\end{equation*}
where $K_t^{(n_j)}=R_{\alpha, \beta, \gamma, t}Q^{-n_j}G_{0,n_j}R^{-1}_{\alpha, \beta, \gamma, t}$.

Because $Q$ and $R_{\alpha, \beta, \gamma, t}$ commute, by lemma \ref{le},
\begin{equation*}
\begin{split}
\displaystyle Q^{n_j-n_i}K_t^{(n_j)}&=R_{\alpha, \beta, \gamma, t}Q^{-n_i}G_{0,n_j}R^{-1}_{\alpha, \beta, \gamma, t}\\
&=A^{-1}_{\alpha, \beta, \gamma}(t)R_{\alpha, \beta, \gamma, t}Q^{-n_i}G_{0,n_j}R^{-1}_{\alpha, \beta, \gamma, t}A_{\alpha, \beta, \gamma}(t)\\
&=(S_{-t, -n_i})^{*}\bar{S}^{\rm epi(X_0)}_{-t,n_j}.
\end{split}
\end{equation*}
If $X_0(j)=\infty$ for $j=1,\dots l$ and $X_0(l+1)<\infty$ then
\begin{equation*}
\mathbb{P}_{X_0}(X_t(n_j)>a_j, j=1,\dots,M)=\det(I-\bar{\chi}_aK^{(l)}_t\bar{\chi}_a)_{\ell^2(\{n_1,\dots,n_M\}\times\mathbb{Z}}
\end{equation*}
with the correlation kernel
\begin{equation*}
K^{(l)}_t(n_i, \cdot  ; n_j, \cdot)=-Q^{n_j-n_i}\1_{n_i<n_j}+(S_{-t, -n_i})^{*}\bar{S}^{\rm epi(\theta_lX_0)}_{-t,n_j-l},
\end{equation*}
where $\theta_lX_0(j)=X_0(l+j)$.
Now, using the fact that $Q^l\bar{S}^{\rm epi(\theta_lX_0)}_{-t,n_j-l}=\bar{S}^{\rm epi(X_0)}_{-t,n_j}$ and (\ref{S}), we have that (\ref{Kt}) still holds in this case.
\end{proof}
\section{Asymptotics}
\label{asy}

In this section we take the KPZ scaling limit for 
the discrete time Bernoulli and geometric TASEP
and prove Proposition \ref{scaling} and \ref{gscaling}.
\subsection{Proof of Proposition \ref{scaling}}
First, we prove (\ref{A1}).
By changing variables $\displaystyle w=\frac{1}{2}(1-\varepsilon^{\frac{1}{2}}y)$, we have
\begin{equation}
\begin{split}
\label{e}
\displaystyle(\ref{bern})&=\frac{1}{2\pi i}\oint_{C_{\varepsilon}}\frac{1}{2}\varepsilon^{\frac{1}{2}}dy\frac{\{\frac{1}{2}(1+\varepsilon^{\frac{1}{2}}y)\}^n}{2^{z-y'}\{\frac{1}{2}(1-\varepsilon^{\frac{1}{2}}y)\}^{n+1+z-y'}}\left(1-\frac{p}{2-p}\varepsilon^{\frac{1}{2}}y\right)^t\\
&=\frac{1}{2\pi i}\oint_{C_{\varepsilon}}\varepsilon^{\frac{1}{2}}dy\frac{(1+\varepsilon^{\frac{1}{2}}y)^n}{(1-\varepsilon^{\frac{1}{2}}y)^{n+1+z-y'}}\left(1-\frac{p}{2-p}\varepsilon^{\frac{1}{2}}y\right)^t
\end{split}
\end{equation}
where $C_{\varepsilon}$ is a circle of radius $\varepsilon^{-\frac{1}{2}}$ centred at $\varepsilon^{-\frac{1}{2}}$.
In order to apply the saddle point method, we rewrite~\eqref{e} as
\begin{equation}
\label{fF3}
\frac{1}{2\pi i}\oint_{C_{\varepsilon}}\varepsilon^{\frac{1}{2}} \ e^{f(\varepsilon^{\frac{1}{2}}y)+\varepsilon^{-1}F_2(\varepsilon^{\frac{1}{2}}y)+\varepsilon^{-\frac{1}{2}}F_1(\varepsilon^{\frac{1}{2}}y)+F_0(\varepsilon^{\frac{1}{2}}y)}dy,
\end{equation}
where the functions $f(x)$ and $F_i(x),~i=0,1,2$ are defined by
\begin{align}
&f(x)=\frac{2-p}{4}\hat{t}\log(1+x)-\frac{2-p}{4(1-p)}
\hat{t}\log(1-x)+\frac{(2-p)^3}{4p(1-p)}\hat{t}\log\left(1-\frac{p}{2-p}x\right),
\\
&F_2(x)=-\mathbf{x}\log(1-x^2),~F_1(x)=(v-u-\frac{1}{2}\mathbf{a})\log(1-x)-\frac{1}{2}\mathbf{a}\log(1+x),~ F_0(x):=\log 2(1+x)
\end{align}
with $\hat{t}:=\varepsilon^{-\frac{3}{2}}\mathbf{t}$.
Calculating the derivatives of $f(x)$ up to the third order, we have
\begin{align}
&f'(x)=\frac{x^2}{(1-x^2)(1-\frac{p}{2-p}x)}\hat{t},
~f^{''}(x)=\frac{2x-\frac{p}{2-p}x^2-\frac{p}{2-p}x^4}{(1-\frac{p}{2-p}x-x^2+\frac{p}{2-p}x^3)^2}\hat{t},
\notag
\\
&f^{(3)}(x)=\frac{2\left(1+3x^2-8\frac{p}{2-p}x^3+3\left(\frac{p}{2-p}\right)^2x^4+\left(\frac{p}{2-p}\right)^2x^6\right)}{(1-\frac{p}{2-p}x-x^2+\frac{p}{2-p}x^3)^3} \hat{t}.
\end{align} 
Thus we see that $f(x)$ has the double saddle point
at $x=0$,
\begin{equation}
\label{g}
 f(0)=0, \ f'(0)=0, \ f^{''}(0)=0 \ {\rm and} \ f^{(3)}(0)=2\hat{t}.
\end{equation}
Therefore, for small $\varepsilon$, $f(x)$ is expanded as
\begin{equation}
\label{fex}
f(\varepsilon^{\frac{1}{2}}y)\approx\frac{\mathbf{t}}{3}y^3.
\end{equation}
For small $\varepsilon$, we also have
\begin{equation}
\label{Fasy}
\varepsilon^{-1}F_2(\varepsilon^{\frac{1}{2}}y)\approx\mathbf{x}y^2, \ \varepsilon^{-\frac{1}{2}}F_1(\varepsilon^{\frac{1}{2}}y)\approx(u-v)y, \ F_0(\varepsilon^{\frac{1}{2}}y)\approx\log 2.
\end{equation}
Now, we see the convergence of the integration path.
First, we deform $C_{\varepsilon}$ to the contour $\langle_{\varepsilon} \  \cup \  C^{\frac{\pi}{3}}_{\varepsilon}$ where $\langle_{\varepsilon}$ is the part of Airy contour $\langle$ within the ball of radius $\varepsilon^{-\frac{1}{2}}$ centred at $\varepsilon^{-\frac{1}{2}}$, and $C^{\frac{\pi}{3}}_{\varepsilon}$ is the part of $C_{\varepsilon}$ to the right of $\langle$. From~\eqref{fF3},~\eqref{fex}, and~\eqref{Fasy}, we have
\begin{align}
\label{eairy}
&~\lim_{\e\rightarrow 0}
\frac{1}{2\pi i}\int_{\langle_\varepsilon}\varepsilon^{\frac{1}{2}} \ e^{f(\varepsilon^{\frac{1}{2}}y)+\varepsilon^{-1}F_2(\varepsilon^{\frac{1}{2}}y)+\varepsilon^{-\frac{1}{2}}F_1(\varepsilon^{\frac{1}{2}}y)+F_0(\varepsilon^{\frac{1}{2}}y)}dy
=\mathbf{S}_{\mathbf{t}, \mathbf{x}}(y),
\end{align}
where $\mathbf{S}_{\mathbf{t}, \mathbf{x}}(y)$ is defined by~\eqref{function}.
Thus the remaining part is to show that the integral over $C^{\frac{\pi}{3}}_{\varepsilon}$ converges to $0$.
To see this note that the real part of the exponent of the integral over $C_{\varepsilon}$ in (\ref{e}), parametrized as $y=\varepsilon^{-\frac{1}{2}}(1-e^{i\theta})$, is given by 
\begin{equation*}
\displaystyle\varepsilon^{-\frac{3}{2}}\mathbf{t}\biggl[\frac{(2-p)^3}{8p(1-p)}\log\left(1+\frac{4p(1-p)}{(2-p)^2}(\cos\theta-1)\right)+\left(\frac{2-p}{8}+\mathcal{O}(\varepsilon^{\frac{1}{2}})\right)\log(5-4\cos\theta)\biggr].
\end{equation*}
Because the $y\in C^{\frac{\pi}{3}}_{\varepsilon}$ correspond to $\frac{\pi}{3}<|\theta|\leq\pi$\footnote{\label{footnote1}Since $\theta=0$ corresponds to the origin $\mathbf{0}\in C_{\varepsilon}$ and $\langle$ is the positively oriented contour going the straight lines from $e^{-\frac{i\pi}{3}}\infty$ to $e^{\frac{i\pi}{3}}\infty$ through 0, the domain of $\theta$ can be written by this domain.}, using $\log(1+x)< x$\footnote{\label{footnote2}This inequality comes from $\log(1+x)\leq x$ for $x>-1$, but since $x=0$ corresponds to $\theta=0$ in (\ref{im1}) and (\ref{im2}), we use this inequality to correspond to the calculations of (\ref{im1}) and (\ref{im2}).} for $x\in(-1, \infty)\setminus\{0\}$, we get 
\begin{equation}
\label{im1}
\displaystyle\varepsilon^{-\frac{3}{2}}\mathbf{t}\left[\frac{(2-p)^3}{8p(1-p)}\log\left(1+\frac{4p(1-p)}{(2-p)^2}(\cos\theta-1)\right)\right]< \frac{2-p}{2} \varepsilon^{-\frac{3}{2}}\mathbf{t}\left[\cos\theta-1\right]
\end{equation}
and
\begin{equation}
\label{im2}
\displaystyle\varepsilon^{-\frac{3}{2}}\mathbf{t}\left[\frac{2-p}{8}\log(5-4\cos\theta)\right]<\frac{2-p}{2}\varepsilon^{-\frac{3}{2}}\mathbf{t}\left[1-\cos\theta\right].
\end{equation}
Therefore, for sufficiently small $\varepsilon$, the exponent there is less than $-\varepsilon^{-\frac{3}{2}}\kappa\mathbf{t}$ for some $\kappa>0$.
Hence we see that the part $C_{\e}^{\frac{\pi}{3}}$ of the integral 
vanishes and this completes the proof of~\eqref{A1}.
We can also prove(\ref{A2}) in the similar way to~\eqref{A1}
thus omit the proof.

For the proof of~\eqref{A3},  we define the scaled walk $\displaystyle\mathbf{B}^{\varepsilon}(x)=\varepsilon^{\frac{1}{2}}(\text{RW}_{\varepsilon^{-1}x}+2\varepsilon^{-1}x-1)$ for $x\in\varepsilon\mathbb{Z}_{\geq0}$, interpolated linearly in between, and let $\bm\tau^{\varepsilon}$ be the hitting time by $\mathbf{B}^{\varepsilon}$ of ${\rm epi}(-\hat{h}^{\varepsilon}(0,\cdot)^{-})$,
where $\hat{h}^{\varepsilon}({\bf t, x})$ is defined by~\eqref{Hei}
and $\hat{h}^{\varepsilon}({\bf t, x})^-=\hat{h}^{\varepsilon}({\bf t, -x})$.
By Donsker's invariance principle \cite{Patrick}, $\displaystyle\mathbf{B}^{\varepsilon}(x)$ converges locally uniformly in distribution to a Brownian motion $\mathbf{B}(x)$ with diffusion coefficient $2$.
Combining this with (\ref{varx}), one finds the hitting time $\bm\tau^{\varepsilon}$ converges to $\bm\tau$. (For more detailed proof,
see Proposition 3.2 in \cite{Matetski}).)
This leads to (\ref{A3}).
\subsection{Proof of Proposition \ref{gscaling}}
Proposition \ref{gscaling} can be shown in a similar manner to 
Proposition~\ref{scaling}. Here we give only the proof of~\eqref{gA1}.
\eqref{gA2} can be obtained in a parallel way to~\eqref{gA1} whereas
\eqref{gA3} follows from~\eqref{gA2} and the Donsker\rq{}s invariance principle
as in the case of~\eqref{A3} in Proposition~\ref{scaling}.
As for~\eqref{e} and~\eqref{fF3}, we rewrite~\eqref{gbern}
by changing variables $\displaystyle w=(1-\varepsilon^{\frac{1}{2}}y)/2$, 
\begin{align}
\label{ggge}
\displaystyle(\ref{gbern})
&=\frac{1}{2\pi i}\oint_{C_{\varepsilon}}\varepsilon^{\frac{1}{2}}dy\frac{(1+\varepsilon^{\frac{1}{2}}y)^n}{(1-\varepsilon^{\frac{1}{2}}y)^{n+1+z-y}}\left(1+\frac{\alpha}{2-\alpha}\varepsilon^{\frac{1}{2}}y\right)^{-t}
\notag
\\
&=\frac{1}{2\pi i}\oint_{C_{\varepsilon}}\varepsilon^{\frac{1}{2}} \ e^{g(\varepsilon^{\frac{1}{2}}y)+\varepsilon^{-1}G_2(\varepsilon^{\frac{1}{2}}y)+\varepsilon^{-\frac{1}{2}}G_1(\varepsilon^{\frac{1}{2}}y)+G_0(\varepsilon^{\frac{1}{2}}y)}dy
\end{align}
where $C_{\varepsilon}$ is a circle of radius $\varepsilon^{-\frac{1}{2}}$ centred at $\varepsilon^{-\frac{1}{2}}$ and $g(x)$, $G_j(x),~j=0,1,2$ are defined by
\begin{align}
&g(x)=\frac{2-\alpha}{4(1-\alpha)}\hat{t}\log(1+x)-\frac{2-\alpha}{4}\hat{t}\log(1-x)-\frac{(2-\alpha)^3}{4\alpha(1-\alpha)}\hat{t}\log\left(1+\frac{\alpha}{2-\alpha}x\right)
\notag
\\
&G_2(x)=-\mathbf{x}\log(1-x^2),
~G_1(x)=(v-u-\frac{1}{2}\mathbf{a})\log(1-x)-\frac{1}{2}\mathbf{a}\log(1+x),
~G_0(x)=\log 2(1+x)
\end{align}
with $\hat{t}:=\varepsilon^{-\frac{3}{2}}\mathbf{t}$.
Here we apply the saddle point method to~\eqref{ggge}. Noting
\begin{align}
&g'(x)=\frac{x^2}{(1-x^2)(1+\frac{\alpha}{2-\alpha}x)}\hat{t},
~
g^{''}(x)=\frac{2x+\frac{\alpha}{2-\alpha}x^2+\frac{\alpha}{2-\alpha}x^4}{(1+\frac{\alpha}{2-\alpha}x-x^2-\frac{\alpha}{2-\alpha}x^3)^2}\hat{t},
\notag
\\
&g^{(3)}(x)=\frac{2\left(1+3x^2+8\frac{\alpha}{2-\alpha}x^3+3\left(\frac{\alpha}{2-\alpha}\right)^2x^4+\left(\frac{\alpha}{2-\alpha}\right)^2x^6\right)}{(1+\frac{\alpha}{2-\alpha}x-x^2-\frac{\alpha}{2-\alpha}x^3)^3} \hat{t},
\end{align}
we find $g(x)$ has a double saddle point at $x=0$,
\begin{align}
\label{gfex}
g(0)=0, \ g'(0)=0, \ g^{''}(0)=0 \ {\rm and} \ g^{(3)}(0)=2\hat{t}.
\end{align}
Therefore, for small $\varepsilon$, we have
\begin{equation}
g(\varepsilon^{\frac{1}{2}}y)\approx\frac{\mathbf{t}}{3}y^3.
\end{equation}
For $G_i(x),~i=0,1,2$, we easily see
\begin{equation}
\label{gFex}
\varepsilon^{-1}G_2(\varepsilon^{\frac{1}{2}}y)\approx\mathbf{x}y^2, \ \varepsilon^{-\frac{1}{2}}G_1(\varepsilon^{\frac{1}{2}}y)\approx(u-v)y, \ G_0(\varepsilon^{\frac{1}{2}}y)\approx\log 2.
\end{equation}
As discussed above~\eqref{eairy}, we divide the contour 
$C_{\varepsilon}$ in (\ref{ggge})  into two parts 
$\langle_{\varepsilon}~\cup~C^{\frac{\pi}{3}}_{\varepsilon}$.
From \eqref{ggge}, \eqref{gfex}, and~\eqref{gFex}, we have
\begin{align}
\label{geairy}
&~\lim_{\e\rightarrow 0}
\frac{1}{2\pi i}\int_{\langle_\varepsilon}\varepsilon^{\frac{1}{2}} \ e^{g(\varepsilon^{\frac{1}{2}}y)+\varepsilon^{-1}G_2(\varepsilon^{\frac{1}{2}}y)+\varepsilon^{-\frac{1}{2}}G_1(\varepsilon^{\frac{1}{2}}y)+G_0(\varepsilon^{\frac{1}{2}}y)}dy
=\mathbf{S}_{\mathbf{t}, \mathbf{x}}(y),
\end{align}
where $\mathbf{S}_{\mathbf{t}, \mathbf{x}}(y)$ is defined by~\eqref{function}.

Finally we show that the part coming from $C^{\frac{\pi}{3}}_{\varepsilon}$
vanishes as $\varepsilon\rightarrow0$
To see this note that the real part of the exponent of the integral over $C_{\varepsilon}$ in (\ref{ggge}), parametrized as $y=\varepsilon^{-\frac{1}{2}}(1-e^{i\theta})$, is given by 
\begin{equation*}
\begin{split}
\displaystyle\varepsilon^{-\frac{3}{2}}\mathbf{t}\biggl[\left(\frac{2-\alpha}{8(1-\alpha)}+\mathcal{O}(\varepsilon^{\frac{1}{2}})\right)&\log\left(1+\frac{4(4-\alpha)(1-\alpha)}{(2-\alpha)^2+4\alpha(1-\cos\theta)}(1-\cos\theta)\right)\\
&+\left(\frac{(2-\alpha)(4-\alpha)}{8\alpha}+\mathcal{O}(\varepsilon^{\frac{1}{2}})\right)\log\left(1+\frac{4\alpha}{(2-\alpha)^2+4\alpha(1-\cos\theta)}(\cos\theta-1)\right)\biggr].
\end{split}
\end{equation*}
Note that we used an expression transform
\begin{equation*}
\begin{split}
&\frac{2-\alpha}{8(1-\alpha)}\log(5-4\cos\theta)-\frac{(2-\alpha)^3}{8\alpha(1-\alpha)}\log\left(1+\frac{4\alpha(1-\cos\theta)}{(2-\alpha)^2}\right)\\
&=\frac{2-\alpha}{8(1-\alpha)}\left[\log(5-4\cos\theta)-\log\left(1+\frac{4\alpha(1-\cos\theta)}{(2-\alpha)^2}\right)\right]-\frac{(2-\alpha)(4-\alpha)}{8\alpha}\log\left(1+\frac{4\alpha(1-\cos\theta)}{(2-\alpha)^2}\right)\\
&=\frac{2-\alpha}{8(1-\alpha)}\log\left(1+\frac{4(4-\alpha)(1-\alpha)}{(2-\alpha)^2+4\alpha(1-\cos\theta)}(1-\cos\theta)\right)\\
&+\frac{(2-\alpha)(4-\alpha)}{8\alpha}\log\left(1+\frac{4\alpha}{(2-\alpha)^2+4\alpha(1-\cos\theta)}(\cos\theta-1)\right).
\end{split}
\end{equation*}
Because the $y\in C^{\frac{\pi}{3}}_{\varepsilon}$ correspond to $\frac{\pi}{3}<|\theta|\leq\pi$, using $\log(1+x)< x$ for $x\in(-1, \infty)\setminus\{0\}$(See \ref{footnote1} and \ref{footnote2}), we get 
\begin{equation*}
\displaystyle\varepsilon^{-\frac{3}{2}}\mathbf{t}\left[\frac{2-\alpha}{8(1-\alpha)}\log\left(1+\frac{4(4-\alpha)(1-\alpha)}{(2-\alpha)^2+4\alpha(1-\cos\theta)}(1-\cos\theta)\right)\right]< \frac{(2-\alpha)(4-\alpha)}{2\left\{(2-\alpha)^2+4\alpha(1-\cos\theta)\right\}}\varepsilon^{-\frac{3}{2}}\mathbf{t}\left[1-\cos\theta\right]
\end{equation*}
and
\begin{equation*}
\displaystyle\varepsilon^{-\frac{3}{2}}\mathbf{t}\left[\frac{(2-\alpha)(4-\alpha)}{8\alpha}\log\left(1+\frac{4\alpha}{(2-\alpha)^2+4\alpha(1-\cos\theta)}(\cos\theta-1)\right)\right]< \frac{(2-\alpha)(4-\alpha)}{2\left\{(2-\alpha)^2+4\alpha(1-\cos\theta)\right\}}\varepsilon^{-\frac{3}{2}}\mathbf{t}\left[\cos\theta-1\right].
\end{equation*}
Therefore, for sufficiently small $\varepsilon$, the exponent there is less than $-\varepsilon^{-\frac{3}{2}}\kappa\mathbf{t}$ for some $\kappa>0$.
Hence this part of the integral vanishes.
\subsection{Proof of Theorem \ref{special}}
\label{sspf}
By using Propositions \ref{scaling} or \ref{gscaling}, we can prove Theorem \ref{special} as following.
This proof is almost the same as \cite{Matetski}.
First, we change variables in the kernel as in Proposition \ref{scaling} (resp. Proposition \ref{gscaling}), so that for $z_i=\frac{p(2-p)}{4(1-p)}\varepsilon^{-\frac{3}{2}}\mathbf{t}+2\varepsilon^{-1}\mathbf{x}_i+\varepsilon^{-\frac{1}{2}}(u_i+\mathbf{a}_i)-2$ (resp. $z_i=-\frac{\alpha(2-\alpha)}{4(1-\alpha)}\varepsilon^{-\frac{3}{2}}\mathbf{t}+2\varepsilon^{-1}\mathbf{x}_i+\varepsilon^{-\frac{1}{2}}(u_i+\mathbf{a}_i)-2$) we need to compute the limit of $\varepsilon^{-\frac{1}{2}}(\bar{\chi}_{2\varepsilon^{-1}\mathbf{x}-2}K_t \bar{\chi}_{2\varepsilon^{-1}\mathbf{x}-2})(z_i, z_j)$.
Note that the change of variables turns $\bar{\chi}_{2\varepsilon^{-1}\mathbf{x}-2}(z)$ into $\bar{\chi}_{-\mathbf{a}}(u)$.
We have $n_i<n_j$ for small $\varepsilon$ if and only if $\mathbf{x}_j<\mathbf{x}_i$ and in this case we have, under our scaling,
\begin{equation}
\varepsilon^{-\frac{1}{2}}Q^{n_j-n_i}(z_i,z_j)\xrightarrow{}e^{(\mathbf{x}_i-\mathbf{x}_j)\partial^2}(u_i,u_j),
\end{equation}
as $\varepsilon\xrightarrow{}0$.
For the second term in (\ref{Kt}), by Proposition \ref{scaling} we get
\begin{equation}
\begin{split}
\varepsilon^{-\frac{1}{2}}(S_{-t, -n_i})^{*}\bar{S}^{\rm epi(X_0)}_{-t,n_j}(z_i,z_j)&=\varepsilon^{-\frac{1}{2}}\int_{-\infty}^{\infty}d\nu(S_{-t, -n_i})^{*}(z_i,\nu)\bar{S}^{\rm epi(X_0)}_{-t,n_j}(\nu,z_j)\\
&=\varepsilon^{-1}\int_{-\infty}^{\infty}d\nu(S_{-t, -n_i})^{*}(z_i,\varepsilon^{-\frac{1}{2}}\nu)\bar{S}^{\rm epi(X_0)}_{-t,n_j}(\varepsilon^{-\frac{1}{2}}\nu,z_j)\\
&=\int_{-\infty}^{\infty}d\nu (\mathbf{S}^{\varepsilon}_{-t,x_i})^{*}(u_i,\nu)\bar{\mathbf{S}}^{\varepsilon, \rm{epi}(-h^{\varepsilon,-}_0)}_{-t,-x_j}(\nu,u_j)\\
&=(\mathbf{S}^{\varepsilon}_{-t,x_i})^{*}\bar{\mathbf{S}}^{\varepsilon, \rm{epi}(-h^{\varepsilon,-}_0)}_{-t,-x_j}(u_i,u_j)\\
&\xrightarrow[\varepsilon\xrightarrow{}0]{}(\mathbf{S}_{-t,x_i})^{*}\mathbf{S}^{\rm{epi}(-\hat{h}^{-}_0)}_{-t,-x_j}(u_i,u_j).
\end{split}
\end{equation}
Therefore, we have a limiting kernel
\begin{equation}
\mathbf{K}_{\lim}(x_i, u_i ; x_j, u_j)=-e^{(\mathbf{x}_i-\mathbf{x}_j)\partial^2}(u_i,u_j)\1_{x_i>x_j}+(\mathbf{S}_{-t,x_i})^{*}\mathbf{S}^{\rm{epi}(-\hat{h}^{-}_0)}_{-t,-x_j}(u_i,u_j)
\end{equation}
surrounded by projection $\bar{\chi}_{-\mathbf{a}}$.
It is nicer to have projection $\chi_{\mathbf{a}}$, so we change variables $u_i\mapsto-u_i$ and replace the Fredholm determinant of the kernel by that of its adjoint to get $\det\left(\mathbf{I}-\chi_{\mathbf{a}}K^{{\rm hypo}(\hat{h}_0)}_{\mathbf{t}, {\rm ext}}\chi_{\mathbf{a}}\right)$ with $K^{{\rm hypo}(\hat{h}_0)}_{\mathbf{t}, {\rm ext}}(u_i, u_j)=\mathbf{K}_{\lim}(x_j, -u_j ; x_i, -u_i)$.

By using $\textstyle(\mathbf{S}_{\mathbf{t}, \mathbf{x}})^{*}\mathbf{S}_{\mathbf{t}, \mathbf{-x}}=I$ and $\textstyle\mathbf{S}^{{\rm epi}(\hat{h})}_{-\mathbf{t},\mathbf{x}}(v,u)=\mathbf{S}^{{\rm hypo}(-\hat{h})}_{\mathbf{t},\mathbf{x}}(-v,-u)$ (see \cite{Matetski} for more information on these equations), we get the following:
\begin{equation*}
\mathbf{K}^{{\rm hypo}(\hat{h}_0)}_{\mathbf{t},{\rm ext}}(\mathbf{x}_i, \cdot ;\mathbf{x}_j, \cdot)=-e^{(\mathbf{x}_j-\mathbf{x}_i)\partial^2}\1_{\mathbf{x}_i<\mathbf{x}_j}+\left(\mathbf{S}^{{\rm hypo}(\hat{h}^{-}_0)}_{\mathbf{t},-\mathbf{x}_i}\right)^{*}\mathbf{S}_{\mathbf{t}, \mathbf{x}_j}.
\end{equation*}
\appendix
\section{The Kolmogorov forward equation for the discrete time geometric 
TASEP with $N=3$}
\label{appendixA}
Here, we explain~\eqref{kfeqn} in more detail in the case of  $N=3$. 
In this case~\eqref{kfeqn} can be decomposed into four terms,
\begin{equation}
\label{kge}
G^{(\bm{\a})}_{t+1}(x_3,x_2,x_1)=
G^{(\bm{\a}, 1)}_t(x_3,x_2,x_1)+G^{(\bm{\a}, 2)}_t(x_3,x_2,x_1)
+G^{(\bm{\a}, 3)}_t(x_3,x_2,x_1)+G^{(\bm{\a}, 4)}_t(x_3,x_2,x_1)
\end{equation}
where for $k_1:=x_1-x_2$,~$k_2:=x_2-x_3$, and
\begin{align}
\label{bulkeq1}
&G^{(\bm{\a}, 1)}_t(x_3,x_2,x_1)
=
\sum_{a_3=0}^{\infty}\sum_{a_2=0}^{k_2-2}\sum_{a_1=0}^{k_1-2}(1-\alpha_{t+1})^3\alpha_{t+1}^{a_1+a_2+a_3}G^{(\bm{\a})}_t(x_3-a_3, x_2-a_2, x_1-a_1)
\\
\label{bulkeq3}
&G^{(\bm{\a}, 2)}_t(x_3,x_2,x_1)
=
\sum_{a_3=0}^{\infty}\sum_{a_2=0}^{k_2-2}(1-\alpha_{t+1})^2\alpha_{t+1}^{a_2+a_3+k_1-1}G^{(\bm{\a})}_t(x_3-a_3, x_2-a_2, x_2+1)
\\
\label{bulkeq2}
&G^{(\bm{\a}, 3)}_t(x_3,x_2,x_1)
=
\sum_{a_3=0}^{\infty}\sum_{a_1=0}^{k_1-2}(1-\alpha_{t+1})^2\alpha_{t+1}^{a_1+a_3+k_2-1}G^{(\bm{\a})}_t(x_3-a_3, x_3+1, x_1-a_1)
\\
\label{bulkeq4}
&G^{(\bm{\a}, 4)}_t(x_3,x_2,x_1)
=
\sum_{a_3=0}^{\infty}(1-\alpha_{t+1})\alpha_{t+1}^{a_3+k_1+k_2-2}G^{(\bm{\a})}_t(x_3-a_3, x_3+1, x_2+1)
\end{align}
The four equations (\ref{bulkeq1}) through (\ref{bulkeq4}) correspond to the 
case $\mu=\phi$, $\mu=\{1\}$, $\mu=\{2\}$, and $\mu=\{1,2\}$ respectively
and the situations for all the equations are illustrated in Fig.~\ref{fig1}(a)-(d) below.
\begin{figure}[H]
\begin{picture}(200,200)
\put(10,170){(a)}
\put(10,120){(b)}
\put(10,70){(c)}
\put(10,20){(d)}
\put(30,150){\includegraphics[keepaspectratio, scale=0.3]{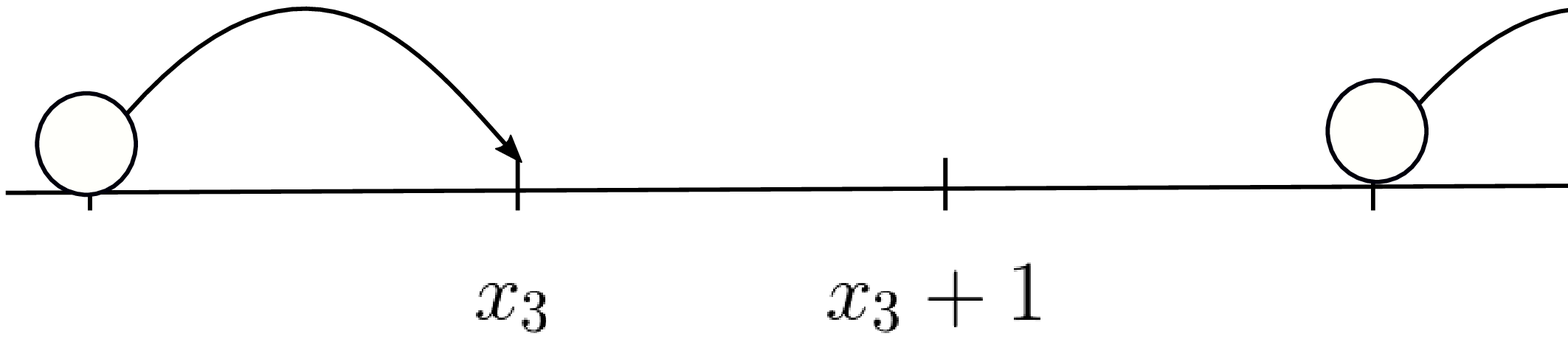}}
\put(30,100){\includegraphics[keepaspectratio, scale=0.3]{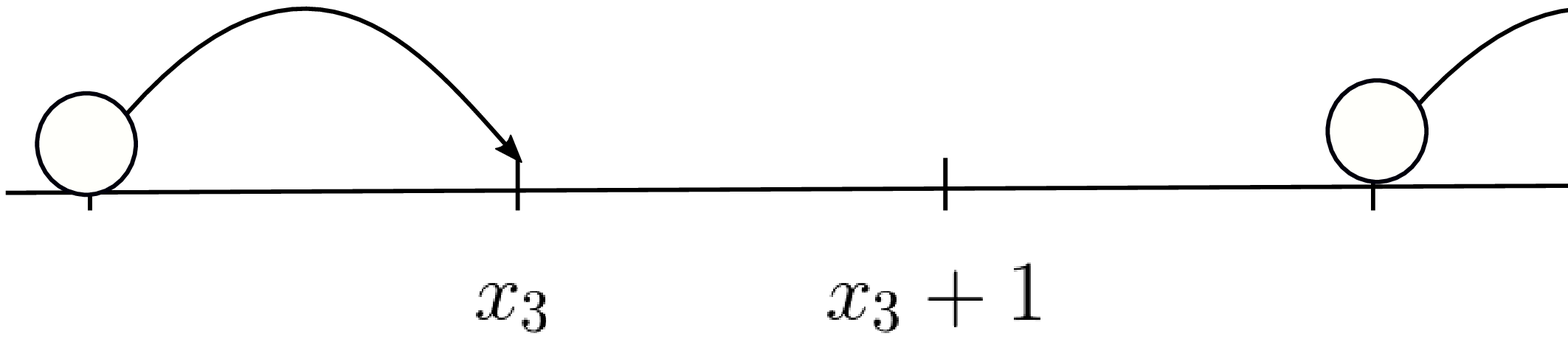}}
\put(30,50){\includegraphics[keepaspectratio, scale=0.3]{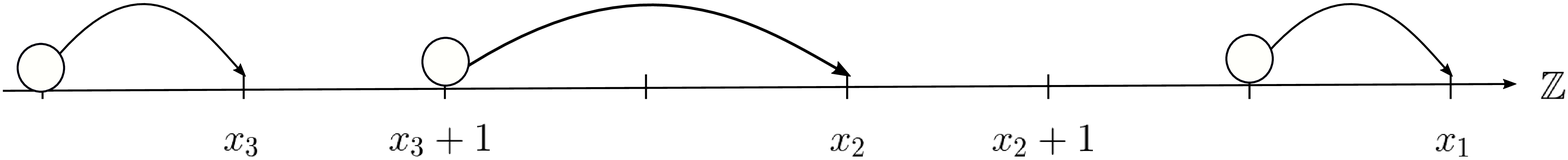}}
\put(30,0){\includegraphics[keepaspectratio, scale=0.3]{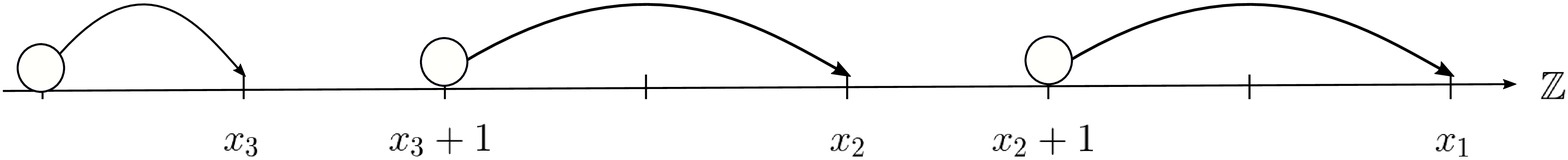}}
\end{picture}
\caption{The evolutions of the geometric TASEP with 3 particles. The circles
correspond to the particles and they move to the directions of arrows during
time step $t\rightarrow t+1$.
(a) The case $\mu=\phi$. Neither of the particles are blocked by each other. 
(b) The case $\mu=\{1\}$. At time $t$, the first particle (from the right) 
is at $x_2+1$ which leads to the blocking of the second particle.  
(c) The case $\mu=\{2\}$. At time $t$, the second particle
is at $x_3+1$ which leads to the blocking of the third particle.  
(d) The case $\mu=\{1,2\}$. At time $t$, the first and second particles
are at $x_2+1$ and $x_3+1$ respectively which leads to the blockings of both 
the second and the third particles.}
\label{fig1}
\end{figure}
\section{On the Kolmogorov forward equation for the discrete time geometric TASEP}
\label{appendixB}
Now we prove the equivalence between the Kolmogorov forward equation~\eqref{kfeqn} and two conditions~\eqref{ff2} and~\eqref{ffb}.
First, we show below the equivalence
\begin{equation}
\begin{split}
\label{Nper}
\displaystyle\prod_{i=1}^{N}&\sum_{a_i=0}^{\infty}(1-\a_{t+1})\a^{a_i}_{t+1}G^{(\bm{\a})}_t(x_N-a_N, \dots, x_1-a_1)\\
&=\sum_{\mu\subset\{1, \dots, N-1\}}
\hspace{-3mm}(1-\a_{t+1})^{|\bar{\mu}|+1}
\prod_{i\in\overline{\mu}\cup\{N\}}\sum_{a_i=0}^{k_i-2}\a^{a_i}_{t+1}
\cdot
\prod_{j\in\mu}\a^{k_j-1}_{t+1}
\cdot
G^{(\bm{\a})}_t \left(\vec{x}^{(\mu)}\right)
\end{split}
\end{equation}
by using the equation (\ref{ffb}) with $k=1$ and the version of $N-1$ particles
in~\eqref{Nper},
\begin{equation}
\begin{split}
\label{nper}
\displaystyle\prod_{i=2}^{N}&\sum_{a_i=0}^{\infty}(1-\a_{t+1})\a^{a_i}_{t+1}G^{(\bm{\a})}_t(x_N-a_N, \dots, x_2-a_2, x_1)\\
&=\sum_{\nu\subset\{2, \dots, N-1\}}
\hspace{-3mm}(1-\a_{t+1})^{|\bar{\nu}|+1}
\prod_{i\in\overline{\nu}\cup\{N\}}\sum_{a_i=0}^{k_i-2}\a^{a_i}_{t+1}
\cdot
\prod_{j\in\nu}\a^{k_j-1}_{t+1}
\cdot
G^{(\bm{\a})}_t \left(\vec{x}^{(\nu)},x_1\right)
\end{split}
\end{equation}
where $\bar{\nu}:=\{2,\dots, N-1\}\setminus\nu$, 
$\vec{x}^{(\nu)}:=(x^{\nu}_N, \dots, x^{\nu}_2)$ 
with
\begin{equation}
x^{\nu}_i=
\begin{cases}
x_{i+1}+1 & \text{for $i\in\nu$,}\\
x_i-a_i & \text{for $i\in\bar{\nu}\cup\{N\}$.}
\end{cases}
\end{equation}
In LHS of~\eqref{Nper}, we divide the sum of $a_1$ as follows.
\begin{equation}
\begin{split}
\label{convper}
\rm LHS \ \rm of \ (\ref{Nper})&=\displaystyle\left(\prod_{i=2}^{N}\sum_{a_i=0}^{\infty}(1-\a_{t+1})\a^{a_i}_{t+1}\right)\Biggl\{\left(\sum_{a_1=0}^{k_1-2}(1-\a_{t+1})\a^{a_1}_{t+1}\right)G^{(\bm{\a})}_t(x_N-a_N, \dots, x_1-a_1)\\
&+\left(\sum_{a_1=k_1-1}^{\infty}(1-\a_{t+1})\a^{a_1}_{t+1}\right)G^{(\bm{\a})}_t(x_N-a_N, \dots, x_1-a_1)\Biggr\}\\
&=\displaystyle\left(\prod_{i=2}^{N}\sum_{a_i=0}^{\infty}(1-\a_{t+1})\a^{a_i}_{t+1}\right)\Biggl\{\left(\sum_{a_1=0}^{k_1-2}(1-\a_{t+1})\a^{a_1}_{t+1}\right)G^{(\bm{\a})}_t(x_N-a_N, \dots, x_1-a_1)\\
&+\left(\sum_{a_1=0}^{\infty}(1-\a_{t+1})\a^{a_1+k_1-1}_{t+1}\right)G^{(\bm{\a})}_t(x_N-a_N, \dots, x_2-a_2, x_2+1-a_1)\Biggr\}.
\end{split}
\end{equation}
By using (\ref{ffb}) with $k=1$, we find
\begin{equation}
\begin{split}
\label{conv2per}
(\ref{convper})=\displaystyle\left(\prod_{i=2}^{N}\sum_{a_i=0}^{\infty}(1-\a_{t+1})\a^{a_i}_{t+1}\right)\Biggl\{&\left(\sum_{a_1=0}^{k_1-2}(1-\a_{t+1})\a^{a_1}_{t+1}\right)G^{(\bm{\a})}_t(x_N-a_N, \dots, x_1-a_1)\\
&+\a^{k_1-1}_{t+1}G^{(\bm{\a})}_t(x_N-a_N, \dots, x_2-a_2, x_2+1)\Biggr\}.
\end{split}
\end{equation}
By applying (\ref{nper}) to (\ref{conv2per}),
\begin{equation*}
\begin{split}
(\ref{conv2per})&=\Biggl\{
\sum_{\nu\subset\{2, \dots, N-1\}}
\hspace{-3mm}(1-\a_{t+1})^{|\bar{\nu}|+1}
\prod_{i\in\overline{\nu}\cup\{N\}}\sum_{a_i=0}^{k_i-2}\a^{a_i}_{t+1}
\cdot
\prod_{j\in\nu}\a^{k_j-1}_{t+1}
\Biggr\}\\
&\times\Biggl\{\left(\sum_{a_1=0}^{k_1-2}(1-\a_{t+1})\a^{a_1}_{t+1}\right)G^{(\bm{\a})}_t(\vec{x}^{(\nu)}, x_1-a_1)+\a^{k_1-1}_{t+1}G^{(\bm{\a})}_t(\vec{x}^{(\nu)}, x_2+1)\Biggr\}\\
&=\rm RHS \ \rm of \ (\ref{Nper}).
\end{split}
\end{equation*}
Thus we have shown~\eqref{Nper} by using~\eqref{nper}.
Similarly, we can show~(\ref{nper}) by using the equation (\ref{ffb}) with 
$k=2$ and 
\begin{equation*}
\begin{split}
\displaystyle\prod_{i=3}^{N}&\sum_{a_i=0}^{\infty}(1-\a_{t+1})\a^{a_i}_{t+1}G^{(\bm{\a})}_t(x_N-a_N, \dots, x_3-a_3, x_2, x_1)\\
&=
\sum_{\l\subset\{3, \dots, N-1\}}
\hspace{-3mm}(1-\a_{t+1})^{|\bar{\l}|+1}
\prod_{i\in\bar{\l}\cup\{N\}}\sum_{a_i=0}^{k_i-2}\a^{a_i}_{t+1}
\cdot
\prod_{j\in\l}\a^{k_j-1}_{t+1}
\cdot
G^{(\bm{\a})}_t \left(\vec{x}^{(\l)},x_2,x_1\right)
\end{split}
\end{equation*}
where $\overline{\lambda}:=\{3,\dots, N-1\}\setminus\lambda$, $\vec{x}^{(\lambda)}:=(x^{\lambda}_N, \dots, x^{\lambda}_3)$ and for $i=3,\dots, N$
\begin{equation*}
x^{\lambda}_i=
\begin{cases}
x_{i+1}+1 & \text{for $i\in\lambda$,}\\
x_i-a_i & \text{for $i\in\overline{\lambda}\cup\{N\}$.}
\end{cases}
\end{equation*}
Therefore, by repeatedly using the similar calculation, we can show 
the equivalence~\eqref{Nper} by using conditions can be obtained from (\ref{ffb}) for $k=1,\dots, N-1$, which leads to the equivalence between the Kolmogorov forward equation~\eqref{kfeqn} and two conditions~\eqref{ff2} and~\eqref{ffb}. 
\section{Proof of Theorem \ref{ta2}}
\label{appendixC}
The existence and uniqueness of solutions of (\ref{a1})-(\ref{c1}) is elementary consequence of the fact that $ker(Q^{*})^{-1}$ has dimension 1 and it is spanned by the function $2^z$, which allows us to march forwards from the initial condition $h^n_k(k,z)=2^{z-X_0(n-k)}$ uniquely solving the boundary value problem $h^n_k(l,X_0(n-k))=0$ at each step.\\
First, we prove that $2^{-x}h^n_k(0,x)$ is a polynomial of degree at most $k$.
We use the mathematical induction.
By (\ref{b1}), 
\begin{equation}
2^{-x}h^n_k(k,x)=2^{-x}2^{x-X_0(n-k)}=2^{-X_0(n-k)}.
\end{equation}
Therefore, $2^{-x}h^n_k(k,x)$ is polynomial of degree 0.

Now, assume that $\hat{h}^n_k(l,x):=2^{-x}h^n_k(l,x)$ is a polynomial of degree at most $k-l$ for some $0<l\leq k$.
By (\ref{in}) and (\ref{a1}),
\begin{equation*}
\begin{split}
\displaystyle\hat{h}^n_k(l,y)&=2^{-y}(Q^{*})^{-1}h^n_k(l-1,y)\\
&=2^{-y}(2\cdot h^n_k(l-1,y-1)-h^n_k(l-1,y))\\
&=2^{-(y-1)}h^n_k(l-1,y-1)-2^{-y}h^n_k(l-1,y)\\
&=\hat{h}^n_k(l-1,y-1)-\hat{h}^n_k(l-1,y).
\end{split}
\end{equation*}
Taking the sum over $x\geq X_0(n-l+1)$, one sees
\begin{equation*}
\begin{split}
\displaystyle\sum_{y=X_0(n-l+1)+1}^{x}2^{-y}h^n_k(l,y)&=\sum_{y=X_0(n-l+1)+1}^{x}\hat{h}^n_k(l,y)\\
&=\sum_{y=X_0(n-l+1)+1}^{x}(\hat{h}^n_k(l-1,y-1)-\hat{h}^n_k(l-1,y))\\
&=\hat{h}^n_k(l-1,X_0(n-l+1))-\hat{h}^n_k(l-1,x).\\
\end{split}
\end{equation*}
Therefore, using (\ref{c1}), we have $\displaystyle\hat{h}^n_k(l-1,x)=-\sum_{y=X_0(n-l+1)+1}^{x}2^{-y}h^n_k(l,y)$.
 
By the induction hypothesis, $\hat{h}^n_k(l-1,x)$ is a polynomial of degree at most $k-l+1$ because $\hat{h}^n_k(l,y)$ is a polynomial of degree at most $k-l$\footnote{This can be understood from Faulhaber's formula : $\displaystyle\sum_{j=1}^{n} j^k=\frac{1}{k+1}\sum_{j=0}^{k}\binom{k+1}{j} B_j n^{k+1-j}$ where $B_j$ is Bernoulli number.}.

Similarly, taking the sum $x<X_0(n-l+1)$, we get $\displaystyle\hat{h}^n_k(l-1,x)=\sum_{y=x+1}^{X_0(n-l+1)}\hat{h}^n_k(l,y)$,which is again a polynomial of degree at most $k-l+1$.
From the above, it was shown that $2^{-x}h^n_k(0,x)$ is a polynomial of degree at most $k$.

Now, we show that $\displaystyle\sum_{y\in\mathbb{Z}}h^n_k(0,y)R^{-1}_{\alpha, \beta, \gamma, t}(y,z)$, which is the rhs of (\ref{yss}), satisfies the condition (2) in Theorem \ref{stu}.
By (\ref{R1}), we have
\begin{equation}
\begin{split}
\displaystyle 2^{-z}\sum_{y\in\mathbb{Z}}h^n_k(0,y)R^{-1}_{\alpha, \beta, \gamma, t}(y,z)&=2^{-z}\sum_{y\geq z}\left(\frac{1}{2\pi i}\oint_{\Gamma_0}dw\frac{\displaystyle f^{-1}_{\alpha, \beta, \gamma}(w, t)}{2^{y-z}w^{y-z+1}}\right)h^n_k(0,y)\\
&=\sum_{y\geq z}\left(\frac{1}{2\pi i}\oint_{\Gamma_0}dw\frac{\displaystyle f^{-1}_{\alpha, \beta, \gamma}(w, t)}{w^{y-z+1}}\right) 2^{-y} h^n_k(0,y)\\
&=\sum_{x \geq 0} \left(\frac{1}{2\pi i}\oint_{\Gamma_0}dw\frac{\displaystyle f^{-1}_{\alpha, \beta, \gamma}(w, t)}{w^{x+1}}\right) 2^{-(x+z)}h^n_k(0,x+z).
\end{split}
\end{equation}
Because  $2^{-z}h^n_k(0,z)$ is a polynomial of degree at most $k$, it is enough to note that the sum is a polynomial of degree at most $k$ in $z$ as well.
Next, we check the biorthogonality relation (1) of Theorem \ref{stu}.
Using (\ref{psi}), we get 
\begin{equation*}
\begin{split}
\displaystyle\sum_{z\in\mathbb{Z}}\Psi^n_l(z)\Phi^n_k(z)&=\sum_{z_1,z_2\in\mathbb{Z}}\sum_{z\in\mathbb{Z}}R_{\alpha, \beta, \gamma, t}(z,z_1)Q^{-l}(z_1,X_0(n-l))h^n_k(0,z_2)R^{-1}_{\alpha, \beta, \gamma, t}(z_2,z)\\
&=\sum_{z\in\mathbb{Z}}Q^{-l}(z,X_0(n-l))h^n_k(0,z)=(Q^{*})^{-l}h^n_k(0,X_0(n-l)),
\end{split}
\end{equation*}
where in the first equality we have used the decay of $R_{\alpha, \beta, \gamma, t}$ and the fact that $2^{-x}h^n_k(0,x)$ is a polynomial together with the fact that the $z_1$ sum is finite to apply Fubini.

For $l\leq k$, from (\ref{b1}) and (\ref{c1}), we have the boundary condition
\begin{equation}
\label{BC}
h^n_k(l,X_0(n-l))=\1_{l=k}.
\end{equation}
Thus, we get 
\begin{equation*}
(Q^{*})^{-l}h^n_k(0,X_0(n-l))=h^n_k(l,X_0(n-l))=\1_{l=k}.
\end{equation*}
For $l>k$, we use (\ref{a1}) and (\ref{b1}), $2^z\in ker(Q^{*})^{-1}$,
\begin{equation*}
(Q^{*})^{-l}h^n_k(0,X_0(n-l))=(Q^{*})^{-(l-k-1)}(Q^{*})^{-1}h^n_k(k,X_0(n-l))=0.
\end{equation*}
This completes the proof.
\section{Proof of Lemma \ref{nobo}}
\label{appendixD}
Now we give the proof of Lemma \ref{nobo}.
This is the answer to Exercise 5.17 in \cite{Quastel}.
By (\ref{a1})-(\ref{c1}), it is enough to check $(Q^{*})^{-1}\mathbb{P}_{RW^{*}_{l-1}=z}(\tau^{l,n}=k)=\mathbb{P}_{RW^{*}_l=z}(\tau^{l+1,n}=k)$.
Now, we assume that $X_0(n-k)=x$\footnote{Since we start from arbitrary fixed right finite initial configuration, we can write like this.} for convenience.
Then, by(\ref{in})
\begin{equation}
\begin{split}
\label{hgp}
\displaystyle(Q^{*})^{-1}\mathbb{P}_{RW^{*}_{l-1}=z}(\tau^{l,n}=k)&=2~\mathbb{P}_{RW^{*}_{l-1}=z-1}(\tau^{l,n}=k)-\mathbb{P}_{RW^{*}_{l-1}=z}(\tau^{l,n}=k)\\
&=2\left(\mathbb{P}_{RW^{*}_{l-1}=z-1}(\tau^{l,n}=k)-\frac{1}{2}~\mathbb{P}_{RW^{*}_{l-1}=z}(\tau^{l,n}=k)\right).
\end{split}
\end{equation}
By the memoryless property of geometric distribution, for $\forall y>X_0(n-k)$,
\begin{equation}
\label{hmgp}
\mathbb{P}_{RW^{*}_{l-1}=z-1}(\tau^{l,n}=k)=2^{y-X_0(n-k)} \mathbb{P}_{RW^{*}_{l-1}=z-1}(\tau^{l,n}=k,RW^{*}_k=y).
\end{equation}
Also, by (\ref{tau}),
\begin{equation}
\begin{split}
\label{kumiawase}
&\displaystyle\mathbb{P}_{RW^{*}_{l-1}=z-1}(\tau^{l,n}=k,RW^{*}_k=y)\\
&=\sum_{z-1< y_l< \dots<y_{k-1}< x}\left(\frac{1}{2}\right)^{y_l-(z-1)}\left(\frac{1}{2}\right)^{y_{l+1}-y_l}\dots\biggl(\frac{1}{2}\biggr)^{y-y_{k-1}}\times \1_{y_l\leq X_0(n-l)}\times \dots \times \1_{y_{k-1}\leq X_0(n-k+1)}\\
&=\left(\frac{1}{2}\right)^{y-(z-1)}\sum_{z-1< y_l< \dots<y_{k-1}< x} \1_{y_l\leq X_0(n-l)}\times \dots \times \1_{y_{k-1}\leq X_0(n-k+1)}.
\end{split}
\end{equation}
Note that $\displaystyle\sum_{z-1< y_l< \dots<y_{k-1}< x}=\sum_{y_l=z}^{x-1}\sum_{y_{l+1}=y_l+1}^{x-1}\dots\sum_{y_{k-1}=y_{k-2}+1}^{x-1}$, by (\ref{hmgp}) and (\ref{kumiawase}),
\begin{equation}
\begin{split}
\label{kaka}
(\ref{hgp})&=\left(\frac{1}{2}\right)^{x-z}\biggl\{\sum_{y_l=z}^{x-1}\sum_{y_{l+1}=y_l+1}^{x-1}\dots\sum_{y_{k-1}=y_{k-2}+1}^{x-1}\1_{y_l\leq X_0(n-l)}\times \dots \times \1_{y_{k-1}\leq X_0(n-k+1)}\\
&-\sum_{y_l=z+1}^{x-1}\sum_{y_{l+1}=y_l+1}^{x-1}\dots\sum_{y_{k-1}=y_{k-2}+1}^{x-1}\1_{y_l\leq X_0(n-l)}\times \dots \times \1_{y_{k-1}\leq X_0(n-k+1)}\biggr\}\\
&=\left(\frac{1}{2}\right)^{x-z}\sum_{y_{l+1}=z+1}^{x-1}\dots\sum_{y_{k-1}=y_{k-2}+1}^{x-1}\1_{z\leq X_0(n-l)}\times\1_{y_{l+1}\leq X_0(n-l-1)}\times \dots \times \1_{y_{k-1}\leq X_0(n-k+1)}.
\end{split}
\end{equation}
Since $z\leq X_0(n-l)$ was assumed,
\begin{equation}
\begin{split}
(\ref{kaka})&=\displaystyle\left(\frac{1}{2}\right)^{x-z}\sum_{z< y_{l+1}< \dots<y_{k-1}< x}\1_{y_{l+1}\leq X_0(n-l-1)}\times \dots \times \1_{y_{k-1}\leq X_0(n-k+1)}\\
&=\mathbb{P}_{RW^{*}_{l}=z}(\tau^{l+1,n}=k).
\end{split}
\end{equation}
This completes the proof.
\section{Proof of Lemma \ref{le}}
\label{appendixE}
Now we give an outline of the proof.
For $z_2\leq X_0(n)$, (\ref{grw}) can be written as
\begin{equation*}
\label{g0n}
\begin{split}
G_{0,n}(z_1,z_2)&=\mathbb{P}_{RW^{*}_{-1}=z_2}(\tau^{0,n}\leq n-1, RW^{*}_{n-1}=z_1)=\mathbb{P}_{RW_0=z_1}(\tau\leq n-1, RW_{n}=z_2)\\
&=\sum_{k=0}^{n-1}\sum_{z>X_0(k+1)}\mathbb{P}_{RW_{0}=z_1}(\tau=k, RW_k=z)Q^{n-k}(z, z_2)\\
&=\mathbb{E}_{RW_0=z_1}\left[Q^{(n-\tau)}(RW_{\tau}, z_2) \1_{\tau<n}\right],
\end{split}
\end{equation*}
where in the second equality we used the fact $Q^n(x,y)$~\eqref{qex}
represents the $n$-step transition probability of RW$_m$.
Let
\begin{equation}
\label{bg0n}
\bar{G}_{0,n}(z_1,z_2)=\mathbb{E}_{RW_0=z_1}\left[\bar{Q}^{(n-\tau)}(RW_{\tau}, z_2) \1_{\tau<n}\right].
\end{equation}
From the relation $z_2<\text{RW}_\tau<z_1$ for $\forall \t<n$ and~\eqref{eqqqb}
we see that for $\bar{G}_{0,n}(z_1,z_2)=G_{0,n}(z_1,z_2)$ for $z_2\le X_0(n)$.
Furthermore we find that $2^{-z_2}G_{0,n}(z_1,z_2)$ is polynomial in $z_2$ with degree at most $k$, and similarly $2^{-z_2}\bar{G}_{0,n}(z_1,z_2)$ is polynomial in $z_2$ with degree at most $k$ since $2^{-y_2}h^n_k(0,y_2)$ is polynomial in $y_2$ with degree at most $k$.
From the above, we find that the equality $\bar{G}_{0,n}(z_1,z_2)=G_{0,n}(z_1,z_2)$
holds for the all $z_2\in\Z$.
\section{Proof of Proposition \ref{proppp}}
\label{appendixF}
By (\ref{psi}), the lhs of~\eqref{S} becomes
\begin{equation*}
\begin{split}
\displaystyle 
&~A^{-1}_{\alpha, \beta, \gamma}(t)(\Psi^n_n)^{*}(z_1)\mid_{X_0(0)=z_2}
=A^{-1}_{\alpha, \beta, \gamma}(t)(\Psi^n_n)(z_2)\mid_{X_0(0)=z_1}
\\
&=\frac{1}{2\pi i}\oint_{\Gamma_0}dw\frac{(1-w)^n}{2^{z_2-z_1}w^{n+1+z_2-z_1}} \mathfrak{F}_{{\a,\b},\g}(w, t)
=S_{-t,-n}(z_1,z_2).
\end{split}
\end{equation*}

By (\ref{R1}) and (\ref{vq}), the lhs of~\eqref{s} is written as
\begin{align*}
&~~A_{\alpha, \beta, \gamma}(t)\sum_{z\in\mathbb{Z}}\bar{Q}^{(n)}(z_1, z)R^{-1}_{\alpha, \beta, \gamma, t}(z, z_2)\\
&=A_{\alpha, \beta, \gamma}(t)\frac{1}{2^{z_1-z_2}}\sum_{z\in\mathbb{Z}}\frac{1}{2\pi i}\oint_{\Gamma_0}dw \frac{(1+w)^{z_1-z-1}}{w^n}\cdot\frac{1}{2\pi i}\oint_{\Gamma_0}d\bar{w}\frac{\displaystyle f^{-1}_{\alpha, \beta, \gamma}(\bar{w}, t)}{\bar{w}^{z-z_2+1}}\\
&=A_{\alpha, \beta, \gamma}(t)\frac{1}{2^{z_1-z_2}}\sum_{z\in\mathbb{Z}}\frac{1}{2\pi i}\oint_{\Gamma_0}dw \frac{(1-w)^{z_1-z-1}}{w^n}(-1)^{n-1}\frac{1}{2\pi i}\oint_{\Gamma_0}d\bar{w}\frac{\displaystyle f^{-1}_{\alpha, \beta, \gamma}(\bar{w}, t)}{\bar{w}^{z-z_2+1}}\\
&=A_{\alpha, \beta, \gamma}(t)\frac{1}{2^{z_1-z_2}}\frac{(-1)^{n-1}}{2\pi i}\oint_{\Gamma_0}dw \frac{(1-w)^{z_1-z_2-1}}{w^n} f^{-1}_{\alpha, \beta, \gamma}\left(\frac{1}{1-w}, t\right).
\end{align*}
By changing variables $\displaystyle w\mapsto\frac{-w}{1-w}$, we have
\begin{multline*}
\frac{A_{\alpha, \beta, \gamma}(t)}{2^{z_1-z_2}}\frac{1}{2\pi i}\oint_{\Gamma_0}dw \frac{(1-w)^{z_2-z_1+n-1}}{w^n} f^{-1}_{\alpha, \beta, \gamma}\left(1-w, t\right)\\
\\
=\frac{1}{2\pi i}\oint_{\Gamma_0}dw\frac{(1-w)^{z_2-z_1+n-1}}{2^{z_1-z_2}w^{n}} \bar{\mathfrak{F}}_{{\a,\b},\g}(w, t)
=
\bar{S}_{-t,n}(z_1,z_2)
\end{multline*}
This completes the proof.
\section*{Acknowledgements}
We would like to thank Professor Takashi Imamura for helpful discussions 
and comments for the draft.
We also thank the referees for their valuable comments which helped to
improve the manuscript.

\end{document}